\newtheorem{problem}{Problem}
\newtheorem{definition}{Definition}
\newtheorem{lemma}{Lemma}
\newtheorem{theorem}{Theorem}
\newtheorem{corollary}{Corollary}
\newtheorem{example}{Example}
\newtheorem{proposition}{Proposition}
\newcommand{\eat}[1]{}
\newcommand{\spara}[1]{\smallskip\noindent{\bf #1}.}
\newcommand{\mpara}[1]{\medskip\noindent{\bf #1}.}
\newcommand{\squishlist}{
 \begin{list}{$\bullet$}
  {  \setlength{\itemsep}{0pt}
     \setlength{\parsep}{3pt}
     \setlength{\topsep}{3pt}
     \setlength{\partopsep}{0pt}
     \setlength{\leftmargin}{2em}
     \setlength{\labelwidth}{1.5em}
     \setlength{\labelsep}{0.5em}
} }
\newcommand{\squishlisttight}{
 \begin{list}{$\bullet$}
  { \setlength{\itemsep}{0pt}
    \setlength{\parsep}{0pt}
    \setlength{\topsep}{0pt}
    \setlength{\partopsep}{0pt}
    \setlength{\leftmargin}{2em}
    \setlength{\labelwidth}{1.5em}
    \setlength{\labelsep}{0.5em}
} }
\newcommand{\squishdesc}{
 \begin{list}{}
  {  \setlength{\itemsep}{0pt}
     \setlength{\parsep}{3pt}
     \setlength{\topsep}{3pt}
     \setlength{\partopsep}{0pt}
     \setlength{\leftmargin}{1em}
     \setlength{\labelwidth}{1.5em}
     \setlength{\labelsep}{0.5em}
} }
\newcommand{\squishend}{
  \end{list}
}
\DeclareMathOperator*{\argmax}{arg\,max}
\newcommand{\I}{\mathcal{I}}
\newcommand{\J}{\mathcal{J}}
\newcommand{\eqdef}{:=}
\newcommand{\revmax}{\textsc{RevMax}\xspace}
\newcommand{\M}{\mathcal{M}}
\newcommand{\Rev}{\mathit{Rev}}
\newcommand{\SGreedy}{\text{SL-Greedy}\xspace}
\newcommand{\GGreedy}{\text{G-Greedy}\xspace}
\newcommand{\RGreedy}{\text{RL-Greedy}\xspace}
\newcommand{\topkrating}{\text{TopRA}\xspace}
\newcommand{\topkrev}{\text{TopRE}\xspace}
\newcommand{\GlobalNo}{\text{GlobalNo}\xspace}
\newcommand{\redline}{\textcolor{red}{-----------------------------------------------}}
\newcommand{\model}{revenue model\xspace}
\newcommand{\bp}{\mathbf{p}}
\newcommand{\bq}{\mathbf{q}}
\newcommand{\ap}{\mathfrak{q}}
\newcommand{\dap}{\mathfrak{q}}
\newcommand{\edap}{\mathcal{E}}
\newcommand{\price}{\mathfrak{p}}
\newcommand{\MG}{\mathit{MR}}
\newcommand{\heap}{\mathtt{heap}}
\newcommand{\flag}{\mathtt{flag}}
\newcommand{\Top}{z}
\newcommand{\ComputeMG}{\textsf{ComputeMarginalGain}\xspace}
\newcommand{\Insert}{Insert}
\newcommand{\Remove}{DeleteMax}
\newcommand{\Add}{Add}
\newcommand{\cnt}{\mathtt{count}}
\newcommand{\List}{\mathtt{set}}
\newcommand{\userSet}{\mathtt{user\_set}}
\newcommand{\proj}{\mathrm{proj}}
\newcommand{\cl}[1]{\mathcal{#1}}
\newcommand{\val}{\mathit{val}}
\newcommand{\prob}{\mathcal{B}}
\newcommand{\LL}[1]{\textcolor{blue}{#1}}
\newcommand{\uheap}{\mathtt{upper\_heap}}
\newcommand{\lheap}{\mathtt{lower\_heap}}
\newcommand{\Var}{\mathrm{var}}
\newcommand{\Cov}{\mathrm{cov}}
\newcommand{\E}{\mathbb{E}}
\newcommand{\RandRev}{\mathit{RandRev}}
\newcommand{\bz}{\mathbf{z}}
\title{Show Me the Money: Dynamic Recommendations for Revenue Maximization\titlenote{An abridged version of this work is published in the Proceedings of the VLDB Endowment (PVLDB), volume 7, issue 14.}}
\author{
\begin{tabular}{cccc}
Wei Lu  & Shanshan Chen & Keqian Li  & Laks V.S. Lakshmanan \\
\end{tabular}
\\$ $\\
\begin{tabular}{c}
\affaddr{Department of Computer Science}\\
\affaddr{University of British Columbia}  \\
\affaddr{Vancouver, B.C., Canada} \\
{\sf \{welu,cshan33,likeqian,laks\}@cs.ubc.ca} 
\end{tabular}
}
\def\@copyrightspace{\relax}
\begin{document}
\maketitle

\begin{abstract}
Recommender Systems (RS) play a vital role in applications such as e-commerce and on-demand content streaming.
Research on RS has mainly focused on the \emph{customer perspective}, i.e., accurate prediction of user preferences and maximization of user utilities. 
As a result, most existing techniques are not explicitly built for \emph{revenue maximization}, the primary business goal of enterprises.
In this work, we explore and exploit a novel connection between RS and the profitability of a business.
As recommendations can be seen as an information channel between a business and its customers, it is interesting and important to investigate how to make strategic dynamic recommendations leading to maximum possible revenue.
To this end, we propose a novel \model that takes into account a variety of factors including prices, valuations, saturation effects, and competition  amongst products.
Under this model, we study the problem of finding revenue-maximizing recommendation strategies over a finite time horizon.
We show that this problem is NP-hard, but approximation guarantees can be obtained for a slightly relaxed version, by establishing an elegant connection to matroid theory. 
Given the prohibitively high complexity of the approximation algorithm, we also design intelligent heuristics for the original problem.
Finally, we conduct extensive experiments on two real and synthetic datasets and demonstrate the efficiency, scalability, and effectiveness our algorithms, and that they significantly outperform several intuitive baselines. 
\end{abstract}

\section{Introduction}\label{sec:intro}
Fueled by online applications such as e-commerce (e.g., Amazon.com) and on-demand content streaming (e.g., Netflix), Recommender Systems (RS) have emerged as a popular paradigm and often as an alternative to search, for enabling customers to quickly discover needed items.
Such systems use the feedback (e.g., ratings) received from users on the items they have bought or experienced to build profiles of users and items, which are then used to suggest new items that might be of interest \cite{rssurvey05}.
The key technical problem on which most RS research has focused is rating prediction and rating-based top-$k$ recommendations:
given all ratings observed so far, 
predict
the likely ratings users would give unrated products. 
In practice, the data is huge, with typically hundreds of thousands of users and items, and is sparse, as users rate very few items.
Major rating prediction techniques can be categorized into {content-based} and {collaborative filtering} (CF).
The latter can be further divided into {memory-based} and {model-based} ({\em cf.} \textsection\ref{sec:related}, also \cite{rssurvey05, mfsurvey, RS:hb10}).
Here we simply note that for top-$k$ recommendations, a RS predicts ratings for all user-item pairs, and for each user, pushes the items with the highest predicted ratings as personalized recommendations. 

\eat{We simply note that model-based approaches use a low rank approximation of the original rating matrix to build a model of the users and items.
Specifically, using one of a variety of loss functions (such as RMSE~\cite{mfsurvey,pmf}, NDCG \cite{RS:hb10}, etc.), they represent both users and items as vectors over a low-dimensional latent space and learn these vectors by training a model which minimizes the loss function.
Once trained, dot products between user and item vectors are used to predict the corresponding missing ratings.
This in turn is used to push the items with the highest predicted ratings as suggestions to each user in a personalized manner.}

The above account is a \emph{customer-centric} view of RS, on which most existing research has focused.
We also refer to it as \emph{rating-based} recommendation.
There is an equally important, complementary \emph{business-centric} view.
From a business viewpoint, the  goal of a seller running a RS is to maximize profit or revenue. 
It is thus natural to ask: When a recommender system has choices of promising items but can only recommend up to $k$ items at a time, how should it make the selections to achieve better revenue?
We refer to this as \emph{revenue-driven recommendation}.

Top-$k$ rating-based recommendations are relevant to revenue, but the connection is weak and indirect: 
by suggesting to each user the $k$ items with highest predicted rating, the hope is to increase the chances that some of those items are bought, and there ends the connection.
Clearly, there exists a gap: rating-based approaches ignores important monetary factors that are vital to users' product adoption decisions, namely the price and whether such price would be considered acceptable to an individual. 
Naturally there exists inherent uncertainty in users' purchase decisions due to people's innate valuation \cite{kalish85}, 
and thus successful revenue-driven recommendations must be aware of these monetary factors.

There has been some recent work on revenue-driven recommendations \cite{chen08,das09,azaria13} (see \textsection\ref{sec:related} for more details).
All these previous works focus on a static setting: generate a set of recommendations for various users for a given snapshot, corresponding to a specific time.
This is limited because e-commerce markets are highly \emph{dynamic} in nature.
For example, the price of many products in Amazon has been found to change and fluctuate frequently \cite{camel, wsj}.
Therefore, narrowly focusing on just one snapshot is restrictive.
Instead, it is necessary to be strategic about the recommendations rolled out in a short time period.
For instance, suppose a product is scheduled to go on sale in the next few days, it would be wise to strategically postpone its recommendation to low-valuation users (those not willing to pay the original price) to the sale date, so as to increase the adoption probability.
On the contrary, for high-valuation users (those willing to pay the original price), it is wiser to recommend it before the price drops.
This illustrates that sellers may benefit from taking full advantage of known 
price information in near future. 
Thus, \textsl{revenue-driven recommendations should be not only price-aware, but also sensitive to timing.}
As such, we focus on a dynamic setting where a strategic recommendation plan is rolled out over a time horizon for which pricing information is available (either deterministically or probabilistically).
We note that by myopically repeating items, the static approaches can be used to roll out recommendations over a horizon; we will empirically compare the performance of such strategies with our solutions in \textsection\ref{sec:exp}.



\eat{
In this paper, by a \emph{recommendation strategy}, we mean a plan to recommend items to a set of customers over a finite time horizon (formal definition in \textsection\ref{sec:revmax}). 
It is possible that a direct application of the state-of-the-art RS (e.g., matrix factorization) may repeat some or all the top-$k$ items several times during the week.
It is not clear this is an optimal strategy for inducing the user to buy those items, as it is quite possible that the user may feel \emph{saturated} upon repeatedly seeing those items.
In fact, the user may even be turned off by the RS \cite{azaria13}.
To this end, diversity and serendipity have been identified as important properties for RS to continue to enjoy customer loyalty \cite{yu09}, and many RS in practice does try to build in some form of diversity and serendipity, but in this case, the connection to the business goal is even less direct. 
}

The dynamic setting naturally leads to two more important aspects related to revenue: competition and repeated recommendations.
Products in the same class (kind)  provide similar functionalities to customers, and thus an individual is unlikely to purchase multiple of them at once.
This signals competition and its implications for the actual revenue yielded by repeated recommendations cannot be ignored.
For instance, iPhone 5S, Nokia 820, and Blackberry Z10 are all smartphones.
While an individual may find them all appealing, she is likely to purchase at most one even if all three phones are suggested to her (potentially more than once) during a short time period.
In addition, as 
observed recently \cite{dassarma12}, while moderate repeated recommendation may boost the ultimate chance of adoption, overly repeating could lead to boredom, or {\em saturation}. 
All these factors, not considered in previous  work \cite{chen08, das09, azaria13}, call for careful modeling and a more sophisticated algorithmic framework.

\eat{
We argue that a holistic view of the recommendation strategy is essential for revenue maximization.
More precisely, there is an inherent uncertainty in customers adopting items, and thus we must model such uncertainty and design recommendation strategies accordingly. 
The strategy should take into account pricing information of items and possible price fluctuations over the time horizon in which recommendations are made.
Also, it is necessary to be strategic about how recommendations are rolled out.
Explicitly modeling price fluctuations, saturation effects, competition, and the interdependency of repeated recommendations gives us an opportunity to reason about \textsl{what items to recommend, whether to repeat recommendations, and if yes, which items to repeat, and how often}. 
}

Driven by the above, we pose the problem of designing recommendation strategies to maximize the seller's expected revenue as a novel discrete optimization problem, named \revmax (for \emph{revenue maximization}).
Our framework allows any type of RS to be used, be it content-based, memory-based CF, or model-based CF.
It takes prices as exogenous input to the RS, and nicely models adoption uncertainly by integrating the aforementioned factors: price, competition, and saturation 
(see \textsection\ref{sec:revmax} for a detailed justification and supporting literature).  
We show \revmax is NP-hard and develop an approximation algorithm (for a relaxed version) and fast greedy heuristics.
To the best of our knowledge, no existing work on RS has a revenue model that incorporates the various factors captured by our revenue model. 
Furthermore, the discrete optimization problem proposed and the algorithmic solution framework designed in this work are unique and complement previous work nicely.

To summarize, we make the following contributions.
\enlargethispage*{\baselineskip}
\squishlist
\item
We propose a dynamic revenue model that accounts for time, price, saturation effects, and competition.
Under this model, we propose \revmax, the problem of finding a recommendation strategy that maximizes the seller's expected total revenue over a short time horizon.
To the best of our knowledge, this framework is novel and unique.

\item 
We show that \revmax is NP-hard (\textsection\ref{sec:revmax}) and its objective function is non-monotone and submodular (\textsection\ref{sec:theory}).
By establishing a connection to matroids, a relaxed version of \revmax can be modeled as submodular function maximization subject to matroid constraints, enabling a local search algorithm with an approximation guarantee of $1/(4+\epsilon)$, for any $\epsilon > 0$ (\textsection\ref{sec:theory}).

\item 
Since the local search approximation algorithm is prohibitively expensive and not practical, we design several clever and scalable greedy heuristics for \revmax (\textsection\ref{sec:algo}). 

\item We perform a comprehensive set of experiments on two real datasets (Amazon and Epinions) that we crawled, to show the effectiveness and efficiency of our algorithms.
In particular, they significantly outperform natural baselines.  
We use the data to learn the various parameters 
used in the experiments (\textsection\ref{sec:exp}). 
We also show that our best algorithm, Global Greedy, scales to a (synthetically generated) dataset that is 2.5 times as large as Netflix, the largest publicly available ratings dataset. 
\item 
We also discuss promising extensions to our framework in \textsection\ref{sec:discuss}.
In particular, we show that when exact price information is not available but only a price distribution is known, Taylor approximation can be employed for  maximizing (approximate) expected total revenue.
\squishend

\eat{
\revmax asks to optimize the expected value of a random variable subject to two constraints.
The first one is called \emph{display constraint}: at any time not more than a set limit $k$ of items may be recommended to any customer\footnote{Top-$k$ rating-based recommendation paradigm has this constraint.}.
The second one, \emph{capacity constraint}, is motivated by the observation that the available quantities of items may be limited, so it is desirable to restrict the number of distinct customers to whom an item is recommended in a short period of time, to avoid possible customer disappointment\footnote{It may make sense to recommend an item to more users than its current stock level, but it should still be capped at some level, to avoid disappointment of interested customers who get turned down.}.
}


\eat{
\textsection~\ref{sec:related} discusses related work and necessary preliminaries. 
In \textsection \ref{sec:revmax}, we discuss our modeling decisions and explain our rationale.
Finally, in \textsection~\ref{sec:concl}, we summarize the paper and discuss fruitful  directions for future work. 
}

\eat{
The rest of the paper is organized as follows.
Sec.~\ref{sec:related} talks about background and related work.
Our dynamic \model and the \revmax problem are defined in Sec.~\ref{sec:revmax}, and algorithms for \revmax are proposed in Sec.~\ref{sec:algo}.
Finally, Sec.~\ref{sec:exp} presents experimental results and Sec.~\ref{sec:concl} concludes the paper and discusses future work.
}

\section{Background and Related Work}\label{sec:related}

The majority of the research in RS has focused on algorithms for accurately predicting user preferences and their utility for items in the form of ratings, from a set of observed ratings. 
As mentioned in \textsection\ref{sec:intro}, RS algorithms can be classified into  content-based or collaborative filtering (CF) methods, with CF being further classified into  memory-based or model-based \cite{rssurvey05}. 
The state-of-the-art is collaborative filtering using Matrix Factorization (MF), which combines good accuracy and scalability~\cite{mfsurvey}.
In MF, users and items are characterized by latent feature vectors inferred from ratings.
More specifically, we learn for each user $u$ a vector $\bp_u \in \mathbb{R}^f$ and for each item $i$ a vector $\bq_i \in \mathbb{R}^f$, such that 
$r_{ui} \approx \bp^T_u \bq_i$ for all observed ratings $r_{ui}$ ($f$ is the number of latent features). 
Training the model requires an appropriate loss function to measure the training error, so that using methods like stochastic gradient descent or alternating least squares, the latent feature vectors can be learned quickly in a few iterations \cite{mfsurvey}.
The most common loss functions is RMSE \cite{mfsurvey}. 
For our purposes, existing MF methods can be used to compute predicted ratings accurately and
	efficiently, although our framework does permit the use of other methods.
Recently, Koren~\cite{koren10} considers the drift in user interests
	and item popularity over time and proposes a temporal MF model
	to further improve accuracy.
This model is for rating prediction and uses data spanning a \emph{long} period,
	since drifts usually occur gradually, while our framework addresses dynamic
	recommendations over a \emph{short} period and has a completely different
	objective (revenue).

Other related work of the customer-centric view includes \cite{zhao12, wang13}.
Zhao et al.~\cite{zhao12} compute the time interval between successive product purchases and integrate it into a utility model for recommendations, with the idea of capturing the timing of recommendations well and increasing temporal diversity. 
Wang et al.~\cite{wang13} apply survival analysis to estimate from data the probability that a user will purchase a given product at a given time. 
It offers a way of modeling and computing adoption probability and is in this  sense orthogonal to \revmax.
We depart from these works by focusing on strategic recommendation plans for short-term for which price information is known and repeated purchase is rare, and formulating \revmax as a discrete optimization problem.



 
There has been some work on revenue-driven recommendations \cite{chen08, das09, azaria13}.
However, 
most of it deals with a \emph{static} setting and is limited in several senses.
Chen et al.~\cite{chen08} model adoption probabilities using purchase frequency and user similarity, and recommend each user $k$-highest ranked items in terms of expected revenue (probability $\times$ price). 
Important ingredients such as capacity constraints, competition, price fluctuations and saturation are all ignored. 
Consequently, their optimization can simply be solved independently for each user, without the combinatorial explosion inherent in our problem, which is far more challenging. It is also tied to memory-based RS (user similarity), whereas our approach is more generic and works for any type of RS. 
Das et al.~\cite{das09} propose a simple profit model which predicts the adoption probability using the similarity between the user's true ratings and the system prediction.
It captures the trust between users and the RS, but does not take into account the fact that when the price or relevance of suggested items changes, user's adoption probability ought to change too.
Azaria et al.~\cite{azaria13} model the probability of a movie being viewed based on its price and rank (from a black-box RS).
Their setting is also static and focuses on movies only, which is not sufficient for most e-commerce businesses that sell a variety of products. 

\eat{ 
From a slightly different perspective, \cite{recmax} describes how a RS can be utilized to launch a marketing campaign so that a specific item, after rated by a selected set of ``seed'' users, will show up in as many users' top-$k$ list as possible.
This problem focuses more on maximizing the utility for the manufacturer of the item, rather than the business operating RS itself.
}

Static revenue-driven recommendation, on an abstract level, is related to generalized bipartite matching  \cite{gabow83, gabow89}.
The papers \cite{gianmarco11, manshadi13} propose distributed solution frameworks to scale this up to massive datasets.
They are substantially different from and orthogonal to ours as the focus is on distributed computation, still within a static setting.
We defer detailed discussion to \textsection\ref{sec:revmax}.

\eat{
Consider a bipartite graph with users $U$ and items $I$ as nodes, with a weight on edge $(u,i)\in E$ representing the revenue expected from suggesting item $i$ to user $u$.
Maximizing expected revenue here 
becomes a special case of the maximum-weighted degree-constrained subgraph problem,
which can be solved optimally in $O( D(|E| + |U\cup I|\log|U\cup I|) )$ time, where $D$ is the total degree constraint~\cite{gabow83, gabow89}.
To scale up to massive modern datasets, distributed solution frameworks have been proposed recently in \cite{gianmarco11, manshadi13}.
They are substantially different from and orthogonal to ours as the focus is on distributed computations (still within a static setting, though). 
}


\eat{
\note[Laks]{I'd said: ``does this need mention in RW or in the main sec.? also does it merit so much space?'' Maybe  we could cite this paper briefly when discussing saturation instead of here.} 

Last but not the least, Das Sarma et al.~\cite{dassarma12} studies cyclic trends in social choices and the problem of utility-maximizing recommendations under a model where repeated usage of items will trigger boredom effects and hence decrease an item's utility for the user.
Our notion of saturation effects was inspired by the boredom effect studied in~\cite{dassarma12}, but the problem studied in \cite{dassarma12} is considerably different from ours as are the approaches, algorithms, and results. 
}

\eat{
Psychologists and researchers in marketing sciences have studied customer behavior for decades~\cite{}.
Their results reveal two somewhat contradicting but both very natural, types of behavior: \emph{variety seeking} and \emph{loyalty}.
Variety seeking is a primary reason for temporal changes of user preferences.
A psychological explanation is that individual make choices to seek an optimal level of stimulation in their environment~\cite{kapoor13}. 
\note[Laks]{Do we address variety seeking in this paper? How?} 

Koren~\cite{koren10} proposes a time-aware MF model with better prediction accuracy than traditional models.
Das Sarma et al.~\cite{dassarma12} study cyclic trends in social choices.
In their model, users are assumed to have an innate utility for an item, but this utility can be discounted by boredom effects in the sense that the more frequently and more recently an item has been used by the user, the lower its current utility.
Kapoor et al.~\cite{kapoor13} study Last.fm (\url{http://www.last.fm/}) music listening history data and model changes in users' interest in artists using survival analysis methods. 
\note[Laks]{Users' changes in interest over time: are we able to handle this? How?}
\eat{ 
\textcolor{red}{Our saturation model works in a similar fashion to the boredom model in~\cite{dassarma12}, due to its intuitiveness.}
} 
Our notion of saturation effects was inspired by the boredom effect studied in~\cite{dassarma12}, but the problem studied by that paper is considerably different from \revmax as are the approaches, algorithms, and results. 
}

\section{Revenue Maximization}\label{sec:revmax}
The motivation for a dynamic view of revenue-based recommendations
 	is twofold. 
First, prices change frequently, sometimes even on a daily basis, or several times a day in online marketplaces like Amazon \cite{camel, wsj}.
A static recommendation strategy is forced to be myopic and cannot
	take advantage of price fluctuations to choose the right
	\emph{times} at which an item should be recommended.
This shortcoming is glaring, since for product adoptions to occur, a user's \emph{valuation}
	of an item, i.e., the maximum amount of money she is willing to pay,
	must exceed the price. This is a well-known notion in economics literature~\cite{agtbook,klbbook,kalish85}.
And obviously, the optimal time in terms of extracting revenue is not necessarily
	when the price is at its peak or bottom!
Therefore, \textsl{only by incorporating time, we can 
	design revenue-maximizing recommendations in a systematic, holistic, and principled way.}

Second, we can recommend items strategically over a time horizon,
	possibly making some repetitions, to improve the
	chances that some items are eventually bought.
However, this is delicate.
Indiscriminate repeat recommendations can lead to \emph{saturation} effects (boredom),
	turning off the user from the item or even costing her loyalty to
	the RS.
Thus, we need to choose the \emph{timing} of recommendations as well as their
	\emph{frequency} and \emph{placement} so as to maximize the expected revenue in
	the presence of saturation effects.
\textsl{These opportunities are simply not afforded when recommendations are
	made in a static, one-shot basis.}

\subsection{Problem Setting and Definition}\label{sec:revmaxdef}

We adopt a discrete time model to describe the dynamics of revenue in an e-commerce business.
Let $U$ and $I$ be the set of users and items respectively,  and define $[T] \eqdef \{1,2,\dotsc,T\}$ to be a short time horizon over which a business (or seller) plans a recommendation strategy.
The seller may choose the granularity of time at her discretion. E.g., the granularity may be a day and the horizon may correspond to a week. 

Naturally, items can be grouped into classes based on their features and functionalities, e.g.,
``Apple iPad Mini'' and ``Google Nexus 7'' belong to class {\tt tablet}, while ``Samsung Galaxy S4'' and ``Google Nexus 5'' belong to {\tt smartphone}. 
Items in the same class offer a variety of options to customers and thus compete with each other, since from an economic point of view, it is unlikely for a person to adopt multiple products with similar functionality over a short time horizon.
Thus in our model, we assume items within a class are competitive in the sense that their adoption by any user within $[T]$ is {\em mutually exclusive}.

In general, revenue to be expected from a recommendation depends mainly on two factors:
the price of the item, and the probability that the user likes and then adopts the item at that price.
The latter, as mentioned in \textsection\ref{sec:intro}, is in turn dependent on many
factors, including but not restricted to, price.
The former, price, is conceptually easier to model, and we suggest two alternatives.
Both of the following essentially model prices to be \emph{exogenous} to the RS, meaning, they are externally determined and
provided to the RS.
First, we can use an \emph{exact price} model which assumes that for each item $i$ and
time step $t$, the exact value $\price(i,t)$ is known.
Exact pricing is studied extensively in microeconomics theory.
There is a well established theory for estimating future demand and supply, which are further used to derive future price in market equilibrium \cite{snyder08}. 
In practice, we also observe that retailers (both online and offline) do plan their pricing ahead of time (e.g., sales on Black Friday in the US and Boxing Day in Canada).
Alternatively, in case prices are not completely known -- which could be possible if the
time of interest is too far into the future, or there isn't sufficient data to determine a precise value --  they can be modeled as random variables following a certain probability distribution. In other words, there is a price prediction model that produces a probability distribution associated with price. 
As we shall see shortly, the challenge of revenue maximization in RS remains  in both
exact price and random price models.
In the bulk of this paper, we focus on the exact price model, but in \textsection\ref{sec:discuss}, give insights on how the random price model can be incorporated into our framework. 

It is well known that there is uncertainty in product adoption by users.
Consider any user-item-time triple $(u,i,t) \in U\times I\times [T]$. 
We associate it with a \emph{(primitive) adoption probability} $\ap(u,i,t)\in [0,1]$, the probability that $u$ would purchase $i$ at time $t$.
Based on economics theory, buyers are rational and utility-maximizing \cite{klbbook,agtbook} and would prefer lower price for higher utility (of consuming a product).
Rational agents are typically assumed to possess a private, internal \emph{valuation} for a good, which is the maximum amount of money the buyer is willing to pay; furthermore, users are assumed to be price-takers who respond myopically to the prices offered to them, solely based on their privately-held valuations and the price offered.  Both the internal private valuation assumption and the price taker assumption are standard in economics literature \cite{klbbook,agtbook}.
Thus, though a lower price leads to a higher probability of adoption, it might not be optimal to recommend an item at its lowest price, since some buyers actually could be willing to pay more. In general, a seller may estimate the adoption probabilities using its own data and model, and the exact method used for estimating adoption probabilities is orthogonal to our framework.
In \textsection\ref{sec:exp}, we give one method for estimating adoption probabilities from the real datasets Amazon and Epinions. We note that the above notion is ``primitive'' in the sense that it ignores the effect of competition and saturation, which will be captured in Definition~\ref{def:dap}. 

\eat{Indeed, buyer valuation can be a way of modeling adoption probability, but for simplicity and concentration, we defer discussing our choice to \textsection\ref{sec:exp}.
We note that the exact method used for computing adoption probabilities is orthogonal to our framework, since sellers may wish to estimate them using their own data and model.
In fact, a comprehensive study on adoption probability is itself an interesting topic.} 

\eat{Should there be nothing else to consider, the expected profit yielded by any triple $(u,i,t)$ would be $\price(i,t) \times \ap(u,i,t)$.} 
Intuitively, the expected revenue yielded by recommending item $i$ to user $u$ at time $t$ is $\price(i,t) \times \ap(u,i,t)$, if this recommendation is considered in isolation. 
However, competition and repeated suggestions make the expected revenue computation more delicate.
We first formally define a \emph{recommendation strategy} as a set of user-item-time triples $S\subseteq U\times I\times [T]$, where $(u,i,t)\in S$ means $i$ is recommended to $u$ at time $t$.
\eat{Any $S\subseteq U\times I\times [T]$ is a strategy (good or bad), and $(u,i,t)\in S$ implies that if the RS adopts $S$, then it will suggest item $i$ to user $u$ at time step $t$. } 
Let us next consider the effect of competition and repeated recommendations.
Suppose $S$ suggests multiple items (possibly repeated) within a class to user $u$, then the adoption decisions that a user makes at different times are \emph{interdependent} in the following manner: 
The event that a user $u$ adopts an item $i$ at time $t$ is conditioned on $u$ not adopting any item from $i$'s class before, and once $u$ adopts $i$, she will not adopt anything from that class again in the horizon (which, it may be recalled, is short in our model).
This semantics intuitively captures competition effects.

As argued earlier, repeated suggestions may lead to a boost in adoption probability, but repeating too frequently may backfire, as people have a tendency to develop boredom, which can potentially cause devaluation in user preference \cite{dassarma12, kapoor13}. We call this \emph{saturation effect}.  
This is undesirable in terms of revenue and thus the model should be saturation-aware.
We discount the adoption probability by accounting for saturation, similar in spirit to \cite{dassarma12}.
The intuition is that the more often and more recently a user has been recommended an item or an item from the same class, the fresher her memory  and consequently the more significant the devaluation. 
Let $\cl{C}(i)$ denote the class to which $i$ belongs.
The memory of user $u$ on item $i$ at any time $t>1$ w.r.t.\ a recommendation strategy $S$ is
\begin{align}\label{eqn:mem}
M_S(u,i,t) := \sum\nolimits_{j\in \mathcal{C}(i)}  \sum\nolimits_{\tau=1}^{t-1} \frac{X_S(u,j,\tau)}{t-\tau},
\end{align}
where $X_S(u,j,\tau)$ is an indicator variable taking on $1$ if $(u,j,\tau) \in S$ and $0$ otherwise. 
Also $X_S(u,i,1) := 0$ for all $u$, $i$, and $S$. 
The discounting on adoption probability should reflect the fact that devaluation is monotone in memory.
More specifically, given user $u$, item $i$, and time step $t$, we define the \emph{saturation-aware adoption probability} under a strategy $S$ as $\ap(u,i,t) \times \beta_i^{M_S(u,i,t)}$,  where $\beta_i \in [0,1]$ is the saturation factor of item $i$.
This effectively penalizes recommending the same class of items in (nearly) consecutive time steps.
Smaller $\beta_i$ means greater saturation effect.
$\beta_i=1$ corresponds to no saturation and $\beta_i = 0$ corresponds to full saturation: any repetition immediately leads to zero probability.
In principle, $\beta_i$'s can be learned from historical recommendation logs ({\em cf.} \cite{dassarma12}).

Combining competition and saturation effects, we are now ready to define the \emph{strategy-dependent dynamic adoption probability}.
Basically, the dynamic adoption probability for later times depends on earlier times at which recommendations are made.
Intuitively, $\dap_S(u,i,t)$ is the probability that $u$  \textsl{adopts $i$ at time $t$ and does not adopt any item from its class earlier}, under the influence of $S$.

\begin{definition}[Dynamic Adoption Probability]\label{def:dap}
For any user $u$, item $i$, and time step $t$, given a strategy $S$, the dynamic adoption probability governed by $S$ is defined as follows:
\begin{align}\label{eqn:finalap}
&\dap_S(u,i,t) = \ap(u,i,t)  \times \beta_i^{M_S(u,i,t)} \times  \nonumber \\
&  \prod_{\substack{(u,j,t) \in S : \\ j\neq i, \cl{C}(j) = \cl{C}(i)}} (1-\dap(u,j,t))  \prod_{\substack{(u,j,\tau) \in S: \\ \tau < t, \cl{C}(j) = \cl{C}(i)}} (1-\dap(u,j,\tau))   .
\end{align}

Also, define $\dap_S(u,i,t) = 0$ whenever $(u,i,t)\not\in S$.
\end{definition}

\begin{example}[Dynamic Adoption Probability]\label{ex:dap}
Suppose the strategy $S$ consists of $\{(u,i,1), (u,j,2), (u,i,3)\}$, where $\cl{C}(i) = \cl{C}(j)$,
and for all three triples, the (primitive) adoption probability is $a$.
Then,
$\dap_S(u,i,1) = a$, 
$\dap_S(u,j,2) = (1-a) \cdot a \cdot \beta_i^{\frac{1}{1}}$, and 
$\dap_S(u,i,3) = (1-a)^2 \cdot a \cdot \beta_i^{\frac{1}{1}+\frac{1}{2}}$. 
\end{example}

We next define the \emph{expected revenue} of a recommendation strategy, as the expected amount of money the business (seller) can earn from recommendation-led adoptions by following the strategy.

\begin{definition}[Revenue Function]\label{def:rev}
Given a set $U$ of users, a set $I$ of items, and time horizon $[T]$, the \emph{expected revenue} of any strategy $S\subseteq U\times I\times [T]$ is
\begin{align}\label{eqn:rev}
\Rev(S) = \sum\nolimits_{(u,i,t)\in S}  \price(i,t) \times \dap_S(u,i,t),
\end{align}
where $\dap_S(u,i,t)$ is defined in Definition~\ref{def:dap}.
\end{definition}

The \emph{revenue maximization problem} asks to find a strategy $S$ with maximum expected revenue, subject to
the following two natural constraints:
First, the \emph{display constraint}, following standard practice, requires that no more than $k$ items be recommended to a user at a time.
Second, as items all have limited quantities in inventory, we impose a \emph{capacity constraint}: no item $i$ may be recommended to more than $q_i$ distinct users at any time, where $q_i$ is a number determined based on current inventory level and demand forecasting \cite{Porteus90}.
The intention is to avoid potential customer dissatisfaction caused by out-of-stock items.
In general, $q_i$ can be somewhat higher than the actual inventory level, due to uncertainty in product adoption.
We call a strategy $S$ \emph{valid} if $S$ satisfies the following constraints:
(i) for all $u \in U$ and all $t \in [T]$, $|\{i \mid (u, i, t) \in S \}| \leq k$;
(ii) for all $i \in I$, $|\{u \mid \exists t\in [T]: (u,i,t) \in S \} \mid \leq q_i$.

\begin{problem}[Revenue Maximization (\revmax)]
\label{prob:revmax} 
Given a set $U$ of users, $I$ of items, time horizon $[T]$, display limit $k$; for each $i\in I$, capacity $q_i$, price $\price(i,t)$ $\forall t\in [T]$, and saturation factor $\beta_i$; adoption probabilities $\ap(u,i,t)$ for all $(u,i,t)\in U\times I\times [T]$,
	find the optimal valid recommendation strategy, i.e., $S^* \in  \argmax_{S \text{ is valid }} \Rev(S) $.
\end{problem}

\subsection{Hardness of Revenue Maximization}
As we will show shortly, \revmax is in general NP-hard.
However,  when $T=1$, it is PTIME solvable since it can be cast as a special case of the \emph{maximum-weighted degree-constrained subgraph} problem (Max-DCS) \cite{gabow83, gabow89}.
In Max-DCS, we are given a graph $G=(V,E)$, a weight $w_e$ for each edge $e\in E$, and a number $d_v$ for each node $v\in V$.
The task is to find a subgraph of $G$ that has maximum weight and for each $v\in V$, it contains no more than $d_v$ edges incident to $v$.
The best known combinatorial algorithm for Max-DCS takes time $O(D(|E|+|V|\log|V|))$  \cite{gabow83, gabow89}. 

Given an instance $\I$ of \revmax with $T=1$, we create an instance $\J$ of Max-DCS, by building a bipartite graph $G'=(U\cup I, E')$, where $E' \subseteq U\times I$.
We create one node per user, and per item.
For each user-node $u$, set $d_u = k$, and for each item-node $i$, set $d_i = q_i$.
There is an edge $(u,i)\in E'$ if $\ap(u,i,1) > 0$ and set the weight $w_{(u,i)} = \price(i,1) \times \ap(u,i,1)$.
It is easy to see the optimal subgraph in $\J$ corresponds to the optimal strategy in $\I$. Thus, this special case of \revmax can be solved efficiently using known algorithms for Max-DCS. 

We now show that the general \revmax is NP-hard.
The reduction is from a restricted version of the Timetable-Design problem (RTD), which is NP-complete \cite{gareyJohnson, even75}, to the decision-version of \revmax (D-\revmax).
An instance of the RTD problem consists of a set $C$ of craftsmen, a set $B$ of jobs, and a set of $H$ of hours, where $|H| = 3$. Each craftsman $c\in C$ is available for a subset $A(c) \subseteq H$ of hours, thus $|A(c)|$ is the number of hours that $c$ is available. There is a function $R: C \times B \to \{0, 1\}$, such that $R(c,b)$ specifies the number of hours craftsman $c$ is required to work on job $b$.
A craftsman is a {\em $\tau$-craftsman} if $|A(c)| = \tau$.
Furthermore, he is {\em tight}, if he is a $\tau$-craftsman for some $\tau$ and is required to work on $\tau$ jobs, i.e., $\sum_{b\in B}R(c,b) = |A(c)| = \tau$. In RTD, every craftsman is either a $2$-craftsman or a $3$-craftsman and is tight. A timetable is a function $f: C \times B \times H \to \{0, 1\}$. It is \emph{feasible} if all of the following hold:
\begin{enumerate} 
\item $f(c,b,h) = 1$ implies $h\in A(c)$;
\item for each $h\in H$ and $c\in C$ there is at most one $b\in B$ such that $f(c,b,h) = 1$;
\item for each $h\in H$ and $b\in B$ there is at most one $c\in C$ such that $f(c,b,h) = 1$; and 
\item for each pair $(c,b)\in C\times B$ there are exactly $R(c,b)$ values of $h$ for which $f(c,b,h) = 1$.
\end{enumerate} 
The task in RTD is to determine if a feasible timetable exists.

\begin{theorem}\label{thm:nph}
Problem \ref{prob:revmax} (\revmax) is NP-hard.
\end{theorem}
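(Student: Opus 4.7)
The plan is a polynomial-time Karp reduction from restricted Timetable-Design (RTD) with $|H|=3$, which is known to be \NPcomplete~\cite{even75}, to the decision version of \revmax (``does there exist a valid $S$ with $\Rev(S) \geq K$?''). The delicate aspect is that RTD enforces ``at most one'' constraints both per (craftsman, hour) \emph{and} per (job, hour), whereas \revmax offers only a per-$(u,t)$ display cap and a per-$i$ \emph{total}-capacity cap. The key design move is to lift hours into the item coordinate rather than leave them only as time steps, so that the per-(job, hour) cap becomes an ordinary per-item capacity. Concretely, set $U := C$, $I := B \times H$, $T := |H| = 3$, $k := 1$, $q_{(b,h)} := 1$, $\price((b,h),t) := 1$, and $\beta_{(b,h)} := 1$ (no saturation); declare items in the same class iff they share their first (job) coordinate, i.e., $\cl{C}((b,h)) := b$; and set $\ap(c,(b,h),t) := 1$ exactly when $t = h$, $h \in A(c)$, and $R(c,b) = 1$, and $0$ otherwise. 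Take $K := \sum_{c} |A(c)|$, which equals $|\{(c,b) : R(c,b) = 1\}|$ by tightness.

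For the forward direction, given a feasible timetable $f$, I take $S := \{(c,(b,h),h) : f(c,b,h) = 1\}$: RTD's condition~2 together with the ``time equals hour'' alignment yields the display bound $k = 1$, RTD's condition~3 yields the capacity bound $q_{(b,h)} = 1$, and condition~4 (exactly one hour per required pair) together with $\beta = 1$ makes all within- and cross-time competition factors in $\dap_S$ evaluate to $1$, so every triple contributes $\price = 1$ and $\Rev(S) = K$. For the converse, observe that any triple with $\ap = 0$ contributes $0$ to $\Rev(S)$, so only those with $t = h \in A(c)$ and $R(c,b) = 1$ (call them \emph{valid}) matter; within each slice $(c,b)$ with $R(c,b) = 1$, if $(c,(b,h'),h') \in S$ is a valid triple with $h' < h$, then the cross-time competition factor $(1 - \ap(c,(b,h'),h')) = 0$ forces $\dap_S(c,(b,h),h) = 0$, so only the earliest valid triple for that pair contributes and each required pair contributes at most $1$. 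This gives $\Rev(S) \leq K$, and equality forces every required pair to appear; setting $f(c,b,h) := 1$ iff $h$ is the earliest index with $(c,(b,h),h) \in S$ recovers conditions~1 and~4 by construction, $k = 1$ gives~2, and $q_{(b,h)} = 1$ gives~3.

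I expect the main technical obstacle to be the careful unpacking of $\dap_S$ in the backward direction, since its definition intertwines saturation, within-time competition, and cross-time competition. One must verify that ``junk'' triples (those with $\ap = 0$) factor out cleanly through $(1-0)=1$ and thus neither inflate $\Rev(S)$ nor corrupt the competition products; that the choice $\beta=1$ wipes out $\beta^{M_S}$ without residue; and, crucially, that no valid strategy can extract more than one unit of revenue from any single required pair, which is what licenses the tight upper bound $\Rev(S) \leq K$ upon which the whole reduction rests.
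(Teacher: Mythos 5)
Your reduction is correct, and it is built on the same skeleton as the paper's: both reduce from restricted Timetable-Design, both turn each job into three hour-indexed items forming one competing class, both set $k=1$ and unit capacities and unit prices, and both use the cross-time competition factor $(1-\ap(\cdot))$ to force ``at most one contributing hour per required $(c,b)$ pair.'' The one genuine difference is how craftsman availability is enforced. The paper keeps $\ap(c,i^b_\tau,t)=R(c,b)$ for \emph{all} times and therefore needs an extra gadget: a unique expensive item $e_c$ per craftsman, adoptable only at $c$'s unavailable hours, priced at $E>N$, with the threshold inflated to $N+\Upsilon E$ so that any near-optimal strategy must burn $c$'s single display slot on $e_c$ at every unavailable hour, thereby blocking job items there. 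You instead encode availability directly into the primitive adoption probabilities ($\ap=0$ unless $t=h\in A(c)$ and $R(c,b)=1$), which makes the expensive-item gadget and the two-scale threshold unnecessary and lets you use the clean target $K=\sum_c|A(c)|$. Your version is simpler and equally valid; the paper's version buys nothing extra here beyond illustrating that the hardness survives even when adoption probabilities are time-uniform. Both constructions share the property the paper emphasizes, namely that $\beta_i=1$ throughout, so hardness holds even without saturation. Your accounting in the backward direction (junk triples contribute $0$ revenue and multiply in as $(1-0)=1$; only the earliest valid triple of each required pair survives the competition product; equality at $K$ forces every required pair to be covered) is exactly the argument needed and matches the paper's Case 1/Case 2 analysis in substance.
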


\begin{proof}
Given an instance $\I$ of RTD, we create an
instance $\J$ of D-\revmax as follows. Craftsmen and hours in $\I$ correspond to
users and time steps in $\J$. 
For each job $b$, create three items of class $b$: $i^b_1, i^b_2, i^b_3$.
The subscripts correspond to the three hours in $H$.
For every item the capacity constraint is $1$, i.e., $q(i^b_\tau) = 1$, $\forall b\in B, \tau\in H$. 
The display constraint $k = 1$. 
Set price $\price(i^b_\tau, t) = 1$ if $t = \tau$, and $0$ otherwise. 
For user $c$, item $i^b_\tau$, and any time step $t \in H$, set $\ap(c, i^b_\tau, t) = R(c,b)$.
For every user $c\in C$, we create a unique expensive item $e_c$ and set $\price(e_c,t) = E$ for all time $t$, where $E$ is an arbitrary number such that $E > N \eqdef \sum_{c\in C, b\in B} R(c,b)$.
For any time at which $c$ is unavailable, i.e., if $t \in H \setminus A(c)$, set $\ap(c, e_c, t) = 1$ and for all times at which $c$ is available, i.e., for $t\in A(c)$, $\ap(c, e_c, t) = 0$.
Finally, let $\Upsilon = \sum_{c\in C} |H\setminus A(c)|$, i.e., the total number of unavailable hours of all craftsmen. 

{\sc Claim}: $\J$ admits a recommendation strategy of expected revenue
$\geq N + \Upsilon E$ if and only if $\I$ admits a feasible timetable. 

($\Longleftarrow$): Suppose $f$ is a timetable for $\I$.
Define the recommendation strategy as $S = \{(c, i^b_t, t)
\mid f(c, b, t) = 1\} \cup \{(c, e_c, t) \mid t \in H \setminus A(c)\}$. It
is straightforward to verify that every recommendation will lead to an
adoption, including the expensive items. Notice that by feasibility of $f$, no
craftsman is assigned to more than one job at a time and no job is assigned to
more than one craftsman at any time, so the display and capacity constraints are satisfied by $S$. It follows that the expected
revenue is exactly $N + \Upsilon E$. 

($\Longrightarrow$): Let $S$ be a valid recommendation strategy that leads to an
expected revenue $\geq N + \Upsilon E$. Unless every recommendation in $S$
leads to adoption, the expected revenue cannot reach $N + \Upsilon E$. This implies that $S$ cannot recommend more than one item from a certain class to
any user, i.e., $S$ cannot contain both $(c,i^b_{t_1},t_1)$ and $(c,i^b_{t_2},t_2)$ for
any user $c$, class $b$ and times $t_1 \neq t_2$, for if it did, one of the
recommendations (the one that occurred later) would be rejected by user $c$ for
sure, so the opportunity to make a profit at that future time is lost. 
Define $f(c, b, t) = 1$ whenever $(c, i^b_t, t) \in S$, and $f(c, b, t) = 0$ otherwise.
We will show that $f$ is a feasible timetable.
By construction, expensive items do not correspond to
jobs and must have been recommended to each user $c$ when $c$ is unavailable according to $A(c)$. Any optimal strategy will contain these expensive item recommendations which account for a profit of $\Upsilon E$.  
Thus, the remaining recommendations in $S$ must net an expected revenue of at least $N$. Since each of them recommends an inexpensive item corresponding to a job, it can lead to an expected revenue of at most $1$. 
We shall next show that $f$ is feasible.
(1) Let $f(c,b,t) = 1$. Then we have  $(c,i^b_t,t)\in S$ by construction. Since every recommendation must make a non-zero revenue in an optimal strategy, we must have $t \in A(c)$ by construction. 
(2) Suppose $f(c,b,t) = f(c',b,t) = 1$, for some distinct users $c,
c'$. This implies $(c,i^b_t,t), (c',i^b_t,t) \in S$. But this is impossible,
since $q(i^b_t)=1$,
which would make $S$ invalid.
(3) Suppose $f(c,b,t) = f(c,b',t) = 1$, for some distinct jobs $b,
b'$. This is impossible as it would make $S$ violate the display constraint $k=1$, as
it would recommend both $i^b_t$ and $i^{b'}_t$ to user $c$ at time $t$.
(4) Finally, we shall show that for each $(c,b) \in C \times B$, there are exactly $R(c,b)$ hours, $t$, for which $f(c,b,t) = 1$.
We will prove this by contradiction.
The key observation is that every recommendation in $S$ must make a non-zero revenue.
Specifically, $N$ of them should make a revenue of $1$ each and $\Upsilon$ of them should make a revenue of $E$ each. It is easy to see that if this is not the case, then $\pi(S) < N + \Upsilon E$.

The following cases arise. 

\noindent\underline{Case 1}: $R(c,b) = 1$.
Suppose $f$ does not assign $c$ to job $b$ even once.
In this case, either $S$ has fewer than $N$ recommendations making a revenue of $1$ each or has a recommendation making no profit.
In both cases, $\pi(S) < N + \Upsilon E$. 
Suppose $f$ assigns $b$ to $c$ more than once, say $f(c,b,t) = f(c,b,t') = 1$, and assume w.l.o.g.\ that $t < t'$. Then $S$ must contain the recommendations, say $(c,i^b_t,t)$ and $(c,i^b_{t'},t')$.
By construction, however, $\dap_S(c,i^b_{t'},t') = 0$ so this is a wasted recommendation. 
Thus, $\pi(S) < N + \Upsilon E$.

\noindent\underline{Case 2}: $R(c,b) = 0$. 
Suppose $f(c,b,t) = 1$ for some $t\in H$. This implies $(c,i^b_t,t) \in S$, but $\dap_S(c,i^b_t,t) = 0$ by construction, which again makes it a wasted recommendation, so $\pi(S) < N + \Upsilon E$.

This shows that the timetable function $f$ is indeed feasible, which completes the proof. 
\end{proof} 

We stress that the proof does not involve saturation, which means even when the saturation parameter $\beta_i = 1$ for all $i\in I$ (no saturation at all!), \revmax remains NP-hard.



\section{Theoretical Analysis \& Approximation Algorithm}\label{sec:theory}

\eat{Despite solving \revmax optimally is not possible in
	polynomial time unless P = NP,} 
Given the hardness of \revmax, a natural question is whether we can design an approximation algorithm.
In this section, we present an approximation algorithm by establishing connections to the problem of
	\emph{maximizing non-negative submodular functions subject to a 
	matroid constraint}.
To this end, we first show that the revenue function $\Rev(\cdot)$
	is submodular (albeit non-monotone).
Next, we transform the display constraint of \revmax into a {\em partition matroid} constraint. 
Finally, we ``push'' a ``smoothened'' version of the capacity constraint into the objective function. This results in a version of the \revmax problem that has a relaxed capacity constraint. Solving it reduces to maximizing a non-negative non-monotone submodular function under a matroid constraint. This can be approximated to
	within a factor of $1/(4+\epsilon)$, for any $\epsilon > 0$,
	achieved by applying a local search algorithm due to Lee et al.\
	\cite{lee10}. 


\subsection{Submodularity of Revenue Function}\label{sec:submod}
We show that the revenue function $\pi(\cdot)$ is submodular.
Let $X$ be any finite ground set.
A non-negative set function $f: 2^X \to \mathbb{R}_{\ge 0}$
	is \emph{monotone} if $f(S) \leq f(S')$ whenever
	$S\subseteq S'\subseteq X$.
Also, $f$ is \emph{submodular} if for any two sets
	$S\subset S'\subseteq X$ and any $w\in X\setminus S'$,
	$f(S\cup\{w\}) - f(S) \geq f(S'\cup\{w\}) - f(S')$.
Submodularity captures the principle of \emph{diminishing marginal
	returns} in economics. 
$\Rev(\cdot)$ is nonnegative by definition. 
Next, we describe how to compute the marginal revenue
	of a triple $z=(u,i,t)$ w.r.t.\ a strategy $S$, denoted
	$\Rev_S(z) \eqdef \Rev(S\cup\{z\}) - \Rev(S)$.
This is useful both for showing submodularity and
	illustrating algorithms in \textsection\ref{sec:algo}.

The marginal revenue $z$ brings to $S$ consists of two parts:
	a {\em gain} of $\dap_{S\cup\{z\}}(z) \cdot \price(i,t)$ from $z$
	itself, and a {\em loss} as adding $z$ to $S$ drops the dynamic adoption
	probability of triples $(u,j,t')\in S$ with $j \in \cl{C}(i)$ and $t' > t$.
More specifically, $\dap_{S\cup\{z\}}(u,j,t') \leq \dap_S(u,j,t')$
	due to the increased memory caused by $z$ and the interdependency rule.
The exact formula for computing $\Rev_S(z)$ is as follows.

\begin{definition}[Marginal Revenue]
Given any valid strategy set $S$ and a triple $z=(u,i,t)\not\in S$, the marginal revenue of $z$ with respect to $S$ is
\begin{align}\label{eqn:mr}
&\Rev_S(z) = \price(i,t) \times \dap_{S\cup\{z\}}(z)  \\
&+ \price(i,t') \times \sum\nolimits_{\substack{(u,j,t') \in S: \\ j\in \cl{C}(i), t'>t}} (\dap_{S\cup\{z\}}(u,j,t') - \dap_S(u,j,t'))  . \nonumber 
\end{align}
\end{definition}


It can be shown that the revenue function satisfies submodularity.
\begin{theorem}\label{thm:sm}
The revenue function $\Rev(\cdot)$ is non-monotone and submodular
	w.r.t.\ sets of user-item-time triples.
\end{theorem}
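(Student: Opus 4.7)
The theorem has two claims. Non-monotonicity I would establish by a small explicit counterexample: take one user $u$, one class with two items $i, j$, horizon $T = 2$; set $\ap(u, i, 1) = \ap(u, j, 2) = 1/2$, $\price(i, 1) = 1$, $\price(j, 2) = 100$, and $\beta_i = \beta_j = 1$. Then $\Rev(\{(u, j, 2)\}) = 50$ while $\Rev(\{(u, i, 1), (u, j, 2)\}) = 0.5 + 25 = 25.5 < 50$: a cheap early recommendation cannibalizes a valuable later one through both competition and saturation, so adding a triple strictly decreases revenue.

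For submodularity, fix $S \subseteq S'$ and $z = (u_0, i_0, t_0) \notin S'$; the target is $\Rev_S(z) \geq \Rev_{S'}(z)$. Two preliminary reductions simplify matters. First, by the marginal revenue formula (7), only triples of $S$ for user $u_0$ and class $\cl{C}(i_0)$ contribute to $\Rev_S(z)$, so $\Rev$ decomposes additively over $(u, c)$ pairs and it suffices to treat one such pair. Second, by a well-known equivalence it suffices to handle the pairwise case $S' = S \cup \{y\}$ with $y \neq z$ and $y \notin S$. Within a fixed $(u, c)$, the key structural lemma is a multiplicative identity that follows directly from (2): for distinct triples $w, z$, $\dap_{S \cup \{w\}}(z) = \dap_S(z) \cdot \rho(w, z)$, where $\rho(w, z) \in [0, 1]$ depends only on $(w, z)$ -- the saturation factor $\beta_z^{1/(t_z - t_w)}$ when $t_w < t_z$, times the competition factor $1 - \ap(w)$ when $w$ interferes with $z$. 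Iterating gives the clean product form $\dap_S(z) = \ap(z) \prod_{w \in S} \rho(w, z)$.

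Plugging this product form into (7) and subtracting, the gain $\price(i_0, t_0) \dap_{S \cup \{z\}}(z)$ is scaled down by exactly $\rho(y, z) \leq 1$ when passing from $S$ to $S \cup \{y\}$, contributing a non-negative difference. Each loss term on an existing later triple $(u_0, j, t') \in S$ with $t' > t_0$ is similarly scaled by $\rho(y, (u_0, j, t'))$, and if $y$ itself lies later than $t_0$ a fresh loss term is introduced. After regrouping, $\Rev_S(z) - \Rev_{S \cup \{y\}}(z)$ collapses into a sum of terms of the form $(\text{price})(\text{non-negative }\dap\text{-factor})(1 - \rho)(1 - \rho')$, each manifestly non-negative. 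The main obstacle is the bookkeeping in this last step: three competing effects must be balanced -- the gain from $z$ shrinks (helpful), the magnitude of the loss on each existing later triple in $S$ also shrinks (harmful), and if $y$ lies later than $t_0$ a fresh loss term appears (harmful). The multiplicative product form is precisely what puts all three effects into a common multiplicative language so that they combine into non-negative terms, closing the inequality.
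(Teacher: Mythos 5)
Your non-monotonicity example is valid (though note that with $\beta=1$ only competition, not saturation, is doing the work there), and your product-form lemma is correct: by Definition~\ref{def:dap} and the additivity of the memory term $M_S$, each $w\in S$ contributes to $\dap_S(z)$ a factor $\rho(w,z)\in[0,1]$ depending only on the pair $(w,z)$, so $\dap_S(z)=\ap(z)\prod_{w\in S}\rho(w,z)$ (Example~\ref{ex:dap} confirms the competition factors use primitive probabilities, as you assume). The reductions to a single user--class pair and to the pairwise case $S'=S\cup\{y\}$ are also fine. The gap is in the final ``regrouping'' step, and it is fatal. Writing $P_S(w)=\prod_{v\in S\setminus\{w\}}\rho(v,w)$, the pairwise difference is exactly $\Rev_S(z)-\Rev_{S\cup\{y\}}(z)=A+C-B$, where $A=\price(z)\ap(z)P_S(z)\bigl(1-\rho(y,z)\bigr)\ge 0$ is the reduction in $z$'s gain, $C=\price(y)\ap(y)P_{S\cup\{y\}}(y)\bigl(1-\rho(z,y)\bigr)\ge 0$ is the fresh loss $z$ inflicts on $y$, and $B=\sum_{w\in S}\price(w)\ap(w)P_S(w)\bigl(1-\rho(z,w)\bigr)\bigl(1-\rho(y,w)\bigr)\ge 0$ enters with a \emph{minus} sign. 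The term $B$ arises because whenever both $y$ and $z$ precede a common $w\in S$, the loss $z$ inflicts on $w$ \emph{shrinks} once $y$ is present ($w$'s adoption probability is already discounted by $\rho(y,w)$), which pushes $\Rev_{S\cup\{y\}}(z)$ up. No regrouping turns $A+C-B$ into a sum of non-negative terms, because the inequality is genuinely false for the model as defined: take one user $u$, one item $i$, $T=3$, $\beta_i=1$, $\ap(u,i,t)=0.5$ for all $t$, $\price(i,1)=\price(i,2)=1$, $\price(i,3)=100$, and let $S=\{(u,i,3)\}$, $y=(u,i,2)$, $z=(u,i,1)$. Then $\Rev(S)=50$, $\Rev(S\cup\{z\})=25.5$, $\Rev(S\cup\{y\})=25.5$, $\Rev(S\cup\{y,z\})=13.25$, so $\Rev_S(z)=-24.5<-12.25=\Rev_{S\cup\{y\}}(z)$, violating submodularity.

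You should know that the paper's own proof stumbles on precisely the same point. Its Case~2 asserts that since $z$ precedes at least as many same-class triples in $S'$ as in $S$, the revenue loss caused by $z$ in $S'$ is at least as large; the example above is an instance of Case~2 in which the loss in $S'$ ($12.75$) is strictly \emph{smaller} than in $S$ ($25$), because the per-triple loss decreases even as the number of affected triples grows. So your approach is not wrong so much as more honest: the explicit algebra exposes a sign that the paper's qualitative case analysis glosses over. To salvage a provable statement by your method you would need a hypothesis that kills $B$ or dominates it --- for instance, forbidding two same-class recommendations to the same user at distinct times (so no $w\in S$ succeeds both $y$ and $z$), or assuming a uniform price within each class over the horizon, under which (at least for $\beta=1$) the per-user-class revenue becomes $\price\cdot\bigl(1-\prod_{t}(1-\ap(u,i,t))\bigr)$, a genuinely submodular coverage-type function.
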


First, we give a useful lemma that will be used later for proving Theorem~\ref{thm:sm}.
\begin{lemma}\label{lemma:dap}
Given any triple $(u,i,t)\in S$, its dynamic adoption probability
	$\dap_S(u,i,t)$ is \emph{non-increasing} in the strategy set $S$.
That is, given two strategy sets $S, S'$ such that $(u,i,t)\in S\subseteq S'$, we have
$\dap_S(u,i,t) \geq \dap_{S'}(u,i,t). $
\end{lemma}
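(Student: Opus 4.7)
The plan is to argue factor-by-factor from the definition of the dynamic adoption probability in equation \eqref{eqn:finalap}. Writing out
\[
\dap_S(u,i,t) = \ap(u,i,t) \cdot \beta_i^{M_S(u,i,t)} \cdot P^{=}_S(u,i,t) \cdot P^{<}_S(u,i,t),
\]
where $P^{=}_S$ and $P^{<}_S$ denote the two products appearing in \eqref{eqn:finalap} (over the same time step and over earlier time steps, respectively), I want to verify that each of the four factors is non-increasing as $S$ enlarges to $S'$.

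First I would observe that $\ap(u,i,t)$ does not depend on $S$ at all, so it is trivially non-increasing. Next, from the definition \eqref{eqn:mem} of the memory, $M_S(u,i,t) = \sum_{j\in \cl{C}(i)}\sum_{\tau<t} X_S(u,j,\tau)/(t-\tau)$ is a sum of non-negative terms whose indicator values $X_S(u,j,\tau)$ are monotone non-decreasing in $S$. Hence $M_S(u,i,t) \le M_{S'}(u,i,t)$ whenever $S\subseteq S'$. Since $\beta_i \in [0,1]$, the function $x \mapsto \beta_i^{x}$ is non-increasing in $x$, so $\beta_i^{M_S(u,i,t)} \ge \beta_i^{M_{S'}(u,i,t)}$.

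Third, for the two products $P^{=}_S$ and $P^{<}_S$: the index sets
\[
\{(u,j,t)\in S : j\ne i,\ \cl{C}(j)=\cl{C}(i)\} \quad \text{and} \quad \{(u,j,\tau)\in S : \tau<t,\ \cl{C}(j)=\cl{C}(i)\}
\]
are monotone non-decreasing in $S$, so replacing $S$ by $S'\supseteq S$ can only append additional factors of the form $(1-\ap(u,j,\cdot))$, each of which lies in $[0,1]$. Multiplying by extra factors in $[0,1]$ can only make the product smaller or keep it the same, so $P^{=}_S \ge P^{=}_{S'}$ and $P^{<}_S \ge P^{<}_{S'}$.

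Finally, since all four factors are non-negative and each is non-increasing when $S$ is enlarged to $S'$, their product $\dap_S(u,i,t)$ is also non-increasing, which yields $\dap_S(u,i,t) \ge \dap_{S'}(u,i,t)$. I do not expect any real obstacle here; the only subtlety is to make sure that the products in \eqref{eqn:finalap} always consist of factors in $[0,1]$ (so that appending terms is monotone-decreasing) and to explicitly note that non-negativity is required before concluding monotonicity of a product of non-increasing factors.
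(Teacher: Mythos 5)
Your proposal is correct and takes essentially the same route as the paper: the paper's proof simply observes that every relevant $(u,j,\tau)$-triple of $S$ with $\tau<t$ is also in $S'$ and concludes directly from Equation \eqref{eqn:finalap}, which is exactly your factor-by-factor monotonicity argument in compressed form. Your version merely spells out the details (memory term, $\beta_i\in[0,1]$, and the fact that each appended product factor lies in $[0,1]$), all of which are consistent with the paper.
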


\begin{proof}
By definition, all triples with $t'>t$ do not affect
	dynamic adoption probability at $t$.
Since every $(u,j,*)$-triple with $t'<t$ and $j\in \cl{C}(i)$ in $S$
	is also in $S'$, by Equation \eqref{eqn:finalap}, the lemma follows
	naturally.
\end{proof}

\begin{proof}[of Theorem~\ref{thm:sm}]
For non-monotonicity, consider an instance which has 
$U=\{u\}$, $I=\{i\}$, $T=2$, $k=1$, $q_i=2$,
	$\ap(u,i,1) = 0.5$, $\ap(u,i,2) = 0.6$, $\price(i,1) = 1$, $\price(i,2) = 0.95$,
	and $\beta_i = 0.1$.
Consider two sets $S=\{(u,i,2)\}$ and $S'=\{(u,i,1), (u,i,2)\}$.
Clearly $\Rev(S') = 0.5285 < \Rev(S) = 0.57$, implying that $\Rev(\cdot)$ is non-monotone. 

For submodularity, consider two strategies $S \subseteq S'$, and a triple $z =(u,i,t)\not\in S'$.
We say that $z$ \emph{succeeds} triple $z'=(u', i', t')$ if $t > t'$, or \emph{precedes} $z'$ if $t < t'$.
The following cases arise.

\underline{Case 1}:
$z$ succeeds all $(u,j)$-triples in $S'$ such that $\cl{C}(j) = \cl{C}(i)$.
Since $S\subseteq S'$, $z$ succeeds all such triples in $S$, too, and thus adding
$z$ to $S$ or $S'$ will not cause any loss in revenue.
By Lemma~\ref{lemma:dap}, $\dap_{S'\cup\{z\}}(z) \leq \dap_{S\cup\{z\}}(z)$,
and thus $\Rev_S(z) \geq \Rev_{S'}(z)$.

\underline{Case 2}:
$z$ precedes all $(u,j)$-triples in $S'$ such that $\cl{C}(j) = \cl{C}(i)$.
Since $S\subseteq S'$, $z$ also precedes all such triples in $S$,
in which case $z$ brings the same amount of revenue gain to both sets
since $\dap_{S\cup\{z\}}(z) = \dap_{S'\cup\{z\}}(z) = \ap(z)$.
However, the number of triples $z$ precedes in $S'$ is no less than that in $S$,
	so is the revenue loss $z$ causes.
Thus, $\Rev_S(z) \geq \Rev_{S'}(z)$.  

\underline{Case 3}:
$z$ precedes a few $(u,j)$-triples in $S'$ where $\cl{C}(j) = \cl{C}(i)$
	and succeeds the rest of such triples in $S'$.
The argument for this case combines the reasoning for the two cases above.
First, for revenue gain, by Lemma~\ref{lemma:dap},\
$\dap_{S'\cup\{z\}}(z) \leq \dap_{S\cup\{z\}}(z)$.
Second, for revenue loss, the number of triples $z$ precedes in $S'$ is
	no less than that in $S$, so is the revenue loss.
Hence $\Rev_S(z) \geq \Rev_{S'}(z)$ holds.

This completes the proof.
\end{proof}

Theorem~\ref{thm:sm} not only helps us in the approximation analysis
	below, but also lays the foundation for the lazy forward technique
	used in our greedy algorithm in Section \ref{sec:algo}.

\subsection{Matroid and Approximation Guarantees}

Matroids are a family of abstract structures widely studied in
	combinatorics.
A {\em matroid} $\M$ is a pair $(X,\I)$ where $X$ is a finite
	ground set and $\I \subseteq 2^X$ is a collection of subsets of
	$X$ called \emph{independent sets}, whose membership satisfies:
	(1).\ $\emptyset\in \I$;
	(2). Downward closure: \ $\forall T\in \I$, $S\subset T$ implies $S\in \I$;
	(3). Augmentation: $\forall S, S'\in \I$, if $|S| < |S'|$, then
	$\exists w\in S'\setminus S$ s.t.\ $S\cup\{w\}\in \I$.

The problem of maximizing $f: 2^X\to \mathbb{R}_+$
	subject to a matroid constraint $\M=(X,\I)$ is to find
	$S^* \in \argmax_{S\in \I} f(S)$, which is in general NP-hard 
	for nonnegative submodular functions~\cite{lee10}.
\eat{Furthermore, the problem can be extended to having multiple matroid
	constraints: let $\M_i=(X,\I_i)$, $i=1,2, \dotsc, n$, be $n$
	matroids on $X$, and the task is to find $S^* \in \argmax_{S\in \cap_{i=1}^{n} \I_i} f(S)$.
For our purpose, we will show that the display constraint ($C_1$) and
	the capacity constraint ($C_2$) can be transformed into a
	partition matroid and projection matroid (which we will define
	shortly) respectively.
} 

We deal with the display constraint first. 
A \emph{partition matroid} $(X, \I)$ is defined by $m$ disjoint subsets of $X$
	with $\cup_{j=1}^{m} X_j = X$, with each $X_j$ having a maximum
	cardinality constraint $b_j$.
$\I$ contains $S\subseteq X$ iff $|S\cap X_j| \le b_j$ for all $j$.

\begin{lemma}\label{lemma:pat-mat}
The display constraint in \revmax is equivalent to a partition
	matroid constraint.
\end{lemma}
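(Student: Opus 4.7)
The plan is straightforward: exhibit the partition that turns the display constraint into the cardinality constraints of a partition matroid, and then verify the equivalence.

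First, I would take the ground set of the matroid to be $X = U \times I \times [T]$, the set of all possible user-item-time triples from which a recommendation strategy is built. I would then partition $X$ into blocks indexed by user-time pairs: for every $(u,t) \in U \times [T]$, let $X_{u,t} \eqdef \{(u,i,t) : i \in I\}$. It is immediate that the $X_{u,t}$'s are pairwise disjoint (triples with distinct $(u,t)$ components cannot coincide) and that $\bigcup_{(u,t) \in U \times [T]} X_{u,t} = X$, so this is indeed a partition of $X$. For each block I set the cardinality bound $b_{u,t} = k$.

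Next, I would define the independent sets as $\I = \{S \subseteq X : |S \cap X_{u,t}| \leq k \text{ for all } (u,t) \in U \times [T]\}$. Observing that $S \cap X_{u,t} = \{(u,i,t) \in S : i \in I\}$, the condition $|S \cap X_{u,t}| \leq k$ is literally the display constraint at user $u$ and time $t$. Hence $S$ is valid with respect to the display constraint if and only if $S \in \I$, establishing the desired equivalence of constraints.

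Finally, to justify that $(X, \I)$ really is a matroid (and not merely a set system), I would verify the three axioms, all of which are routine for a partition with cardinality caps: $\emptyset$ trivially satisfies every bound; if $S \subseteq T$ and every block intersection in $T$ has size $\leq k$, then every block intersection in $S$ has size $\leq k$ as well (downward closure); and if $|S| < |S'|$ with both in $\I$, there must exist some block $X_{u,t}$ with $|S' \cap X_{u,t}| > |S \cap X_{u,t}|$, so any element of $(S' \setminus S) \cap X_{u,t}$ can be added to $S$ while keeping $|S \cap X_{u,t}| \leq k$ (augmentation). Since the axioms hold blockwise and the blocks are disjoint, no obstacle arises. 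There is no real hard step here; the content of the lemma is essentially recognizing that ``at most $k$ items per $(u,t)$'' is the defining form of a partition-matroid constraint.
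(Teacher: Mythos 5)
Your proof is correct and follows exactly the same route as the paper: partition the ground set $U \times I \times [T]$ into blocks indexed by user-time pairs, assign each block the cardinality bound $k$, and observe that membership in the resulting partition matroid is literally the display constraint. The only difference is that you also verify the matroid axioms explicitly, which the paper omits since partition matroids are a standard matroid family.
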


\begin{proof}
Let the ground set $X= U\times I\times [T]$.
We project $X$ onto all user-time pairs
	$(u^*, t^*)\in U\times [T]$ to obtain a collection of subsets
	$X(u^*, t^*) := \{(u,i,t) | u = u^*, i\in I, t = t^*\}$.
Clearly, the sets  $X(u,t)$, $u\in U, t\in [T]$,  are pairwise disjoint and
	$\cup_{u,t} X(u,t) = X$.
Next, set $b(u,t) = k$ for all $(u,t)$.
This gives a partition matroid $\M = (X, \I)$ and any
	$S\subseteq X$ satisfies the display constraint iff $S\in \I$.
\end{proof}

Unlike the display constraint, the capacity constraint does not correspond to a matroid constraint. The observation is that while downward closure and inclusion of empty set are satisfied, augmentation is not satisfied by the capacity constraint, as we next show with an example. 

\begin{example}[Capacity Constraint] 
\label{ex:projection} 
Consider two strategies $S' = \{(u_1, i_2, t_1), (u_1, i_2, t_2), (u_2, i_1, t_1), (u_2, i_1, t_2)\}$ and $S = \{(u_1, i_1, t_1), (u_2, i_2, t_2)\}$. Assume the display constraint is $k=1$ and the capacity constraint is $q_{i_1} = q_{i_2} = 1$.
While $|S'| > |S|$, there is no triple in $S'\setminus S$ which can be added to $S$ without violating the capacity constraint. 
\end{example} 

Thus, we resort to a different method for obtaining an approximation algorithm. In \revmax, the capacity constraint is a hard constraint: no item $i$ can be recommended to more than $q_i$ distinct users. This can in principle result in fewer than $q_i$ adoptions, because of the inherent uncertainty involved. Consider making a calculated move in making a few more recommendations than the capacity would allow, should all recommended users wish to adopt. Since they are not all guaranteed to adopt, it is possible that such a strategy may result in a greater expected profit than a strategy that \revmax would permit. We next formalize this idea. 

Let $S$ be any strategy and $(u,i,t) \in S$ be a triple.
If $i$ is not recommended to more than $q_i$ distinct users up to time $t$, $\dap_S(u,i,t)$ should remain exactly the same as in \eqref{eqn:finalap}. 
Suppose $i$ has been recommended to $\ge q_i$ users \emph{besides} $u$ up to time $t$. Two cases arise w.r.t. adoptions. If $q_i$ of the previously recommended users adopt $i$, then we must conclude $u$ cannot adopt it since it is not available any more!
On the other hand, if fewer than $q_i$ previously recommended users adopt $i$, then $u$ may adopt $i$ with probability $\dap_S(u,i,t)$. We refer to this relaxed version of \revmax as $R$-\revmax. Compared to \revmax, $R$-\revmax has no hard capacity constraint on number of recommendations of an item. This constraint is essentially  ``pushed'' inside the \emph{effective dynamic adoption probability} for $R$-\revmax, defined as follows. 

\begin{definition}[Effective Dynamic Adoption Prob.] 
\label{def:edap} 
Let $S$ be a strategy and suppose $(u,i,t)\in S$ is a triple and let $S_{i,t} = \{(v,i,\tau)\in S \mid v\neq u, \tau\leq t\}$ be the set of recommendations of $i$ made to users other than $u$ up to time $t$.
Let $\prob_S(i,t) = \Pr[\mbox{at most } (q_i-1) \mbox{ users in } S_{i,t} \mbox{ adopt i}]$.
Then the {\em effective dynamic adoption probability} of $(u,i,t)$ given $S$ is: 

\begin{align}\label{eqn:edap}
&\edap_S(u,i,t) := \ap(u,i,t) \times \beta_i^{M_S(u,i,t)}
\times \prod_{\substack{(u,j,t) \in S \\ j\neq i, \cl{C}(j) = \cl{C}(i)}} (1-\dap(u,j,t)) \nonumber \\
  &\quad \times \prod_{\substack{(u,j,\tau) \in S, \tau < t, \cl{C}(j) = \cl{C}(i)}} (1-\dap(u,j,\tau)) \times \prob_S(i,t)  
\end{align}

\end{definition}

The following example illustrates Definition~\ref{def:edap}.

\begin{example}[Exceeding Capacity]\label{ex:excap} 
Consider one item $i$, three users $u,v,w$, display constraint $k=1$, capacity $q_i = 1$, and saturation parameter $\beta_i = 0.5$. Consider the strategy $S = \{(u,i,1), (v,i,2), (w,i,1), (w,i,2)\}$. Then the effective dynamic adoption probability of $(w,i,2)$ is $\edap(w,i,2) = \dap(w,i,2) \times (1-\dap(w,i,1)) \times (1-\dap(u,i,1)) \times  (1-\dap(v,i,2)) \times 0.5^{1/1}$.
\end{example} 

The only change to the definition of $\edap_S(u,i,t)$ in \eqref{eqn:edap} compared with that of $\dap_S(u,i,t)$ is the factor $\prob_S(i,t)$.
If $|S_{i,t}| < q_i$, this probability is $1$. Computing it exactly in the general case can be hard but can be computed exactly in worst-case exponential time in $q_i$. We can use Monte-Carlo simulation for estimating this probability. The point is that given an oracle for estimating or computing probability, we can define $R$-\revmax as follows. 

An instance of $R$-\revmax is identical to that of \revmax without a hard capacity constraint. The revenue function $\Rev(S)$ is defined exactly as in \eqref{eqn:rev}, except that instead of $\dap_S(u,i,t)$, $\edap_S(u,i,t)$ is used.
A strategy is now called valid if it satisfies the display constraint, and the problem is to find a strategy $S^*$ that maximizes $\Rev(S)$ amongst all valid strategies $S$. 

Given an oracle for computing or estimating $\prob_S(i,t)$, we can devise an approximation algorithm for $R$-\revmax as follows.
As shown in Lemma \ref{lemma:pat-mat}, the display constraint corresponds to a matroid constraint. The revenue function $\Rev(S)$ for $R$-\revmax can be easily shown to be non-monotone and submodular.  
Thus, \textsl{solving $R$-\revmax corresponds to maximizing a non-negative non-monotone submodular function subject to a partition matroid constraint.}
This can be solved using local search to give a $1/(4+\epsilon)$-approximation, for any $\epsilon > 0$ \cite{lee10}. 
However, unfortunately, the time complexity of this approximation algorithm, even assuming unit cost for each invocation of the oracle, is $O(\frac{1}{\epsilon}|X|^4 \log|X|)$ where $X = U \times I \times [T]$ in our case.
This is prohibitive for our application and motivates the quest for good and efficient heuristics that perform well in practice, a topic addressed in the next section. 



\eat{ 
Let $X = A\cup B\cup C$, where $A$, $B$, and $C$ are finite sets.
We define the \emph{projected count} of a set of triples as the size
	of the set w.r.t.\ a \emph{projection query}.
For instance, let $Y = \{(a_1, b_1, c_1), (a_2, b_1, c_2)\}$, then
	the projection of $Y$ to $A$ gives $\proj_A(Y) = \{a_1, a_2\}$,
	and thus the projected count of $Y$ w.r.t.\ projection to $A$
	is $|\proj_A(Y)| = 2$.

\begin{definition}[Projection Set System]\label{def:proj-mat}
Given $A$, $B$, $C$, and $X = A\cup B\cup C$.
For any $b\in B$, let $X(b) = \{(a,b,c) | a\in A, c\in C)\}$.
Consider the partition of $X$: $\cup_{b\in B} X_b$, and consider the
	family $\I$ of subsets of $X$ defined as:
	$S\in \I$ iff for all $b\in B$, the projected count of
	$S\cap X(b)$ w.r.t.\ projection to $A$ is at most $q(b)$, i.e.,
	$|\proj_A (S\cap X(b))| \leq q(b)$, where $q(b)$ is a positive
	integer.
\end{definition}

\begin{theorem}[Projection Matroid]\label{thm:proj-mat}
The projection set system $(X, \I)$ in Definition~\ref{def:proj-mat}
	is a matroid.
\end{theorem}

\begin{proof}
First, $\emptyset \in \I$ trivially holds.
Second, $\I$ is downward closed, because if a set $S\cap X(b)$ has a
	projection count under $q(b)$, then no subset of $S$ intersected
	with $X(b)$ can have a higher projection count.
Third, given any two sets $S$ and $T$ such that $S\in \I$, $T\in \I$,
	and $S\subseteq T$, consider any triple $(a,b,c)\in T\setminus S$
	and the set $S\cup \{(a,b,c)\}$. 
Assume, without loss of generality, that
	$|\proj_A(S\cap X(b))| = |\proj_A(T\cap X(b))| = q(b)$.
Therefore $|\proj_A((S\cup \{(a,b,c)\}) \cap X(b))|$ cannot exceed
	$q(b)$, either, since the projected count eliminates duplicates.
\end{proof}

\begin{lemma}\label{lemma:proj-mat}
The capacity constraint in \revmax is equivalent to a
	projection matroid constraint.
\end{lemma}

\begin{proof}
A quick mapping that maps $U$ to $A$, $I$ to $B$, and $[T]$ to $C$
	completes the proof.
\end{proof}

Therefore, we have successfully transformed \revmax into a submodular
	maximization problem subject to two matroid constraints:
	one being partition matroid; the other being projection matroid.
In~\cite{lee10}, Lee et al. proposes a local search algorithm for maximizing
	a submodular function subject to $n\geq 2$ matroid constraints.
The algorithms operates $n+1$ iterations. 
Let $S_j$ be the output of
	iteration $j$, in which we consider a restricted ground-set
	$X_j = X\setminus \cup_{j'=0}^{j} S_{j'}$, where $S_0 = \emptyset$.
Initially, $S_j = \{w\}$ such that $f(\{w\}) \ge f(\{w'\})$ for all $w'\in X_j$.
Then, 
	the algorithm repeatedly applies one of the following two operations at a time,
	as long as no constraint is violated and there is an improvement of a factor
	of $(1+\epsilon/|X_j|^4)$:
(1) deleting an element from $S_j$;
(2) swapping $n$ elements out of $S_j$ and replacing them with a new element.
After $n+1$ iterations, it returns the final solution $S = \argmax_{j=1}^{n+1} f(S_j)$.

\begin{proposition}[Theorem 2.6 in~\cite{lee10}]\label{prop:lee}
For any $n \geq 1$ and any fixed constant $\epsilon>0$, the local
	search algorithm provides a
	$1/(n+2+\frac{1}{n}+\epsilon)$-approximation for maximizing a
	nonnegative submodular function over $n$ matroids.
\end{proposition}

Combining Lemma~\ref{lemma:pat-mat}, Lemma~\ref{lemma:proj-mat}, and
	Proposition~\ref{prop:lee}, we have:
\begin{corollary}
The local search algorithm in \cite{lee10} is a $1/(4.5+\epsilon)$
	approximation algorithm for \revmax.
\end{corollary}

However, despite nice theoretical properties, this algorithm is
	not practical in our setting due to its prohibitively high time
	complexity.
Each of the $(n+1)$ iterations has complexity
	$\mathrm{O}(\frac{1}{\epsilon}|X|^4 \log|X|)$~\cite{lee10},
	and for \revmax, $|X|$ can be $\mathrm{O}(T\cdot |U|\cdot |I|)$.
Here $\epsilon > 0$ is a constant representing trade-off between
	accuracy and running time: a smaller $\epsilon$ implies a better
	approximation ratio but longer running time.

} 

\eat{Similarly, constraint $C_2$ can be transformed by projecting
	$X$ onto all items: define subset
	$X(i^*) = \{(u,i,t)|u\in U, i=i^*, t\in [T]\}$ and make
	$b(i^*) = q_{i^*}$.
This also leads to a partition matroid $\M_2 = (X, \I_2)$, and
	any subset $S\subseteq X$ satisfies (2) iff $S\in \I_2$.
Hence, a valid strategy set $S$ for a \revmax instance lies in
	$\I_1 \cap \I_2$. 
By Theorem~\ref{thm:sm}, $\Rev(\cdot)$ is submodular; thus this
	theorem holds.}

\eat{
For submodularity, consider any two strategies $S$ and $S'$ with $S\subset S'$, and a tuple $z =(u,i,t)\not\in S'$.
We say that $z$ is \emph{after} another tuple $z'=(u', i', t')$ if $t > t'$,
	or \emph{before} $z'$ if $t < t'$.
The following cases arise. 

\underline{Case 1}:
$z$ happens after all $(u,i)$-triples in $S'$.
Since $S\subseteq S'$, $z$ is also after all triples in $S$.
First, $z$ does not exert any effect because it is the largest
	time step for $(u,i)$.
Second, $\dap_{ui}(t, S'\cup\{z\}) \leq \dap_{ui}(t, S\cup\{z\})$.
Hence $\Rev_S(z) = \Rev_{S'}(z)$.


\underline{Case 2}:
$x$ happens before all $(u,i)$-triples in $S'$ (hence $S$):
First, $\dap_{ui}(t, S\cup\{z\}) = \dap_{ui}(t, S'\cup\{z\}) = \ap^0_{ui}(t)$.
Second, $z$ exerts saturation and interdependency effects for more
	triples in $S'$, i.e., it causes more loss to $S'$.
Hence, $\Rev_S(z) \geq \Rev_{S'}(z)$.

\underline{Case 3}:
$x$ happens between some $(u,i)$-tuples in $S'$, i.e.,
	$\exists t',t'' \in \List_{ui}(S')$ so that $t' < t < t''$.
Since there are more $t'<t$ triples in $S'$, $\dap_{ui}(t, S'\cup\{z\}) < \dap_{ui}(t, S\cup\{z\})$, 
	$z$ brings less gain to $S'$.
Similarly, $z$ exerts discounts on more $t''>t$ triples in $S'$,
	causing more loss there.
Hence, $\Rev_S(z) \geq \Rev_{S'}(z)$.
}

\section{Greedy Algorithms}\label{sec:algo}
We propose three intelligent greedy algorithms for \revmax: {\em Global Greedy} (\GGreedy),\
	{\em Sequential Local Greedy} (\SGreedy), and {\em Randomized Local Greedy} (\RGreedy). 
They all start with an empty strategy set and incrementally grow it in a greedy manner.
As we shall see shortly, the main difference is that  \GGreedy operates on the entire ground set $U \times I \times [T]$ and
	makes recommendations disregarding time order, while \SGreedy and \RGreedy finalize recommendations in a predetermined chronological order. 

\subsection{The Global Greedy Algorithm} \label{sec:gg}

\spara{Overview}
We first give a natural hill-climbing style algorithm called Global Greedy (\GGreedy for short; pseudo-code in Algorithm~\ref{alg:gg}).
In a nutshell, the algorithm starts with $S$ being $\emptyset$, and in
	each iteration, it adds to $S$ the triple that  provides the
	largest positive marginal revenue w.r.t.\ $S$ without violating 
	the display or capacity constraint. 
More formally, let $\mathcal{V}(S) \subset U \times I \times[T]$ be the set of triples which,
	when added to $S$, would {\em not} violate the display or capacity constraints.
Thus, in every iteration, \GGreedy selects the triple satisfying:
\begin{align}\label{eqn:bestMG}
z^* \in \argmax \{\Rev_S(z) > 0 \mid z \in \mathcal{V}(S) \setminus S\}.
\end{align} 
Recall that $\Rev_S(z)$ represents the marginal revenue of $z$ w.r.t.\ $S$, defined as $\Rev(S\cup\{z\}) - \Rev(S)$ ({\em cf.} Equation \eqref{eqn:mr}).
Also, we use priority queues in the implementation to 
	support efficient operations in the greedy selection process.

To enhance efficiency, we also employ two implementation-level optimizations.
First, we propose the idea of {\em two-level heaps} data structure to reduce the overhead of heap operations in the greedy selection process.
Second, the \emph{lazy forward} scheme \cite{minoux78, leskovec07} is used for computing and ranking triples w.r.t. their marginal revenue.

The algorithm also maintains several auxiliary variables to facilitate
constraint enforcement.
First, counter variables are used to keep track of the
	number of items recommended to each user $u$ at each time $t$,
	facilitating the enforcement of display constraint.
Second, we also keep track of the set of users to whom item $i$
	has been recommended so far, facilitating the enforcement of capacity constraint.
Third, for each user $u$ and each item class $c$, we bookkeep the set
	of triples in $S$ involving $u$ and items of class $c$.
That is, $\List(u,c) := \{(u,i,t) \in S \mid \cl{C}(i) = c, t\in [T]\}$.
This is needed for marginal revenue computation and lazy forward.

\begin{algorithm}[t!]\label{alg:gg}
\caption{\GGreedy (Two-Level Heaps \& Lazy Forward)}
\SetKwInOut{Input}{Input}
\SetKwInOut{Output}{Output}
\Input{$U, I, T, k, \{q_i\}, \{\price(i,t)\}, \{\ap(u,i,t)\}, \{\beta_i\}$.}
\Output{A valid strategy $S \subseteq U\times I\times [T]$.}
	
$S \gets \emptyset$; \tcc*[f]{initialization}\; 
$\uheap \gets$ an empty maximum binary heap\;
\ForEach{$(u,i)\in U\times I$ {\em such that} $\exists \ap(u,i,t) > 0$} {
	$\lheap_{u,i} \gets$ an empty maximum binary heap\;  \label{line:gg1}
}
\ForEach{$u \in U$, {\em item class} $c$}{
	$\List_{u,c} \gets \emptyset$\;
}

\ForEach{$(u,i,t) \in U\times I\times [T]$ {\em with} $\ap(u,i,t) > 0$} {
	$\lheap_{u,i}$.\Insert($(u,i,t), \ap(u,i,t) \cdot \price(i,t)$) \;  \label{line:gg2} 
	$\flag((u,i,t)) \gets 0$; \tcc*[f]{for lazy forward}\;  \label{line:gg2b} 
}
$\uheap$.Heapify() \tcc*[f]{populate and heapify it with roots of all lower-level heaps}\;  \label{line:gg3} 
	
\While {$|S| < k\cdot T\cdot |U| \wedge \uheap$ {\em is not empty} } {
	$\Top \gets \uheap$.FindMax(); \tcc*[f]{root}\;
	\lIf{$\Rev_S(z) < 0$} {\textbf{break}; \tcc*[f]{negative case}\;}
	\If {$S\cup \{\Top\}$ {\em doesn't violate any constraint}} {  
		\If{$\flag(\Top) < |\List_{\Top.u, \cl{C}(\Top.i)}|$} {  \label{line:gg4a}
			\ForEach{\emph{triple $z' \in \lheap_{\Top.u, \Top.i}$}} {
				calculate $\Rev_S(z')$;  \tcc*[f]{Eq.~\eqref{eqn:mr}}\;  
				$\flag(z') \gets |\List_{\Top.u, \cl{C}(\Top.i)}|$\;
			}
			update $\lheap_{\Top.u, \Top.i}$ and $\uheap$; \tcc*[f]{using Decrease-Key on heaps}\;  \label{line:gg4b}
		}
		\ElseIf{$\flag(\Top) == |\List_{\Top.u, \cl{C}(\Top.i)}|$}{       \label{line:gg5a}
			$S \gets S\cup \{\Top\}$\;						
			$\List(\Top.u, \cl{C}(\Top.i))$.Add($\Top$)\;
			$\uheap$.DeleteMax()\;  						 \label{line:gg5b} 
		}
	} \Else(\tcc*[f]{remove from considerations}){
              $\uheap$.DeleteMax()\;
		delete $\lheap_{\Top.u, \Top.i}$\;
       }
}
\end{algorithm}

\spara{Two-Level Heaps Data Structure}
For each user-item pair $(u,i)$ with a positive primitive adoption probability for some time step,
	we create a priority queue (implemented as a maximum binary heap)
	to store the marginal revenue of all triples $(u,i,*)$, where $*$ denotes
	any applicable time steps (line~\ref{line:gg1}).
Such heaps are initially populated with the revenue of all triples
	computed using primitive adoption probabilities (line~\ref{line:gg2}).
They form the {\em lower level} and have no direct involvement in
	seed selection.
Then, note that the best candidate triple at any point of the execution
	must be the root of one of those heaps.
Precisely, it is the one with largest marginal revenue amongst all roots.
Hence, we sort the roots of all lower-level heaps in a master, {\em upper-level}
	priority queue, which is directly involved in seed selection (line~\ref{line:gg3}).
Ties are broken arbitrarily.

The intuition is that if we were to use one ``giant'' heap that contains all
	triples, it would incur larger overhead in heap operations like Decrease-Key, or
	Delete-Max, as updated keys will have to traverse a 
	taller binary tree.
Conversely, in the two-level structure, each low-level heap contains at most $T$ elements,
	and thus the overhead will be almost negligible as long as $T$ is reasonable ($7$ in our experiments),
	while the upper-level heap has at most $|U|\cdot|I|$ elements, a factor of
	$T$ smaller than the ``giant one''.

\spara{Lazy Forward and Greedy Selection Details}
By model definition, observe that after a triple is added to $S$, the marginal
	revenue of all triples with the same user and same class of items should be updated before they can be considered for selection.
For each update made, a {\em Decrease-Key} operation is needed
	in both the lower-level and upper-level heap to reflect the change.
However, if a triple's marginal revenue is small, chances are it will never
	be percolated up to the root of the upper-level heap. 
For such triples, an eager update will be wasted and result in inefficiency.
Thanks to submodularity of the revenue function (Theorem~\ref{thm:sm}),
	it is possible to avoid unnecessary computation by using the
	{\em lazy forward} optimization proposed in \cite{minoux78} and  recently used in \cite{leskovec07}.

More specifically, we associate with each triple $z$ a flag variable,
	$\flag(z)$, initialized as $0$ (line \ref{line:gg2b}).
When a triple $z=(u,i,t)$ is percolated up to the root of the upper-level heap,
	we first examine if adding it to $S$ will violate any constraint.
If no violation is caused, and $\flag(z) = |\List(u, \cl{C}(i))|$ holds,
	then $z$'s marginal revenue is {\em up-to-date} and
	will be added to $S$ (lines \ref{line:gg5a} to \ref{line:gg5b}).
If, however, $\flag(z) < |\List(u, \cl{C}(i)))|$,
	then we retrieve the corresponding lower-level heap, re-compute
	all stale triples $\not\in S$, and insert the 
	updated root back to the upper-level heap (lines \ref{line:gg4a} to \ref{line:gg4b}).

\eat{
Triple selection starts from line \ref{gg5}: the
	algorithm keeps examining the root element of the heap, denoted $\Top$.
If $\flag(\Top) < |\List(u,\cl{C}(i))|$, then $\Top$ is deemed
	{``dirty''} as its marginal revenue is not up-to-date.
	w.r.t.\ the current $S$. 
We then invoke Equation \eqref{eqn:mr} to compute the latest marginal revenue
	and re-heapify the heap.
If $\flag(\Top) = |\List(u,\cl{C}(i))|$, meaning the marginal
	revenue is up-to-date, then we add $\Top$ to $S$ as long as neither
	constraint is violated  and increment corresponding
	counter variables (lines \ref{gg8} to \ref{gg9}).
}


The soundness of lazy forward in \GGreedy stems from submodularity.
More specifically, if the root's flag is up-to-date, then its marginal revenue
	is indisputably the highest regardless of if others are up-to-date
	(consider any triple ranked lower that is not up-to-date, its actual marginal revenue
	can be no more than the stale value, due to submodularity).
A similar idea was used in the greedy algorithms proposed in
	\cite{minoux78,leskovec07}, where
	flags are compared to the overall solution size $|S|$.
But in our case, the revenues from different user-class
	pairs do not interfere, so we need to check flag values against the size
	of the corresponding $\List$ ({\em cf.} lines \ref{line:gg4a} and \ref{line:gg5a}).

\spara{Termination}
The algorithm terminates when one of the following conditions is met:
(i).\ The upper-level heap is exhausted (empty);
(ii).\ All users have received $k$ recommendations in all time steps;
(iii).\ None of the remaining triples in upper-level heap has a positive marginal revenue.\ 
Regardless of the sequence in which the output strategy
	$S$ is formed, the final recommendation results are presented to
	users in natural chronological order.

\spara{Space and Time Complexity}
The upper-level heap has at most 
$|U| \times |I|$ triples, while the lower-level
	heaps, between them, have at most $|X| = |U| \times |I| \times T$.
Thus total space complexity is $O(|X|)$.
However, users are typically interested merely in a small subset of items,  and thus \textsl{the actual triples in consideration can be much fewer than} $|U|\times |I|\times T$.
For time complexity, it is difficult to analytically estimate how many calls to marginal revenue re-computations
	are needed with lazy forward, thus we give a \emph{worst-case upper bound} using
	the number of calls \textsl{would have been made without lazy forward}.
	Let $y = kT|U|$ be the total number of selections made and the total time complexity is thus $O(y(|I| T \log T + |I| \log(|U|\times |I|)  ))$.
This expression is an upper bound corresponding to the worst case scenario where lazy forward is not used (as it is difficult to reason about its actual savings) and all items are in one class (that is, after each addition to $S$, all $|I|$ lower level heaps associated with $u$ need to be updated).  
That said, we expect \GGreedy to run much faster in practice because of lazy-forward (a \emph{700 times speedup} was reported in \cite{leskovec07}).


\eat{
Let $X \eqdef \{(u,i,t) \in U\times I\times [T] \mid \ap(u,i,t) >0\}$.
The size of the heap is $O(|X|)$, i.e., effectively linear in the input size.
The vanilla \GGreedy without lazy forward needs at most
	$O(T|X|)$ calls to the marginal revenue computation
	procedure, each of which takes time $O(T^2)$;
	it also needs $O(|X|)$ heap operations, each of
	which takes $O(\log|X|)$.
Therefore, the overall complexity is $O(|X|\log|X| + T^3 |X|)$.
Note that usually users are only interested in a subset of
	items much smaller than $|I|$, so in practice
	the term $|X|$ tends to be a small fraction of $|U\times I\times [T]|$. 
Additionally, we anticipate that recommendation strategies will be planned over relatively short time horizons, making $T$ small. The reason is that planning a strategy for long time horizons based on presently available feedback may make the strategy sub-optimal. 
} 

\subsection{Two Local Greedy Algorithms}

\GGreedy evaluates all candidate triples together and 
	determines a  recommendation strategy in a holistic manner.
\eat{
may lead to high space complexity ({\em cf.} all lower-level heaps),
	which in turn increases running time due to heavier
	heap operations.
}
It is interesting to investigate more lightweight heuristics 
	hopefully leading to similar performance. We propose two ``local''
	greedy algorithms that generate recommendations on a
	{\em per-time-step} basis.
\eat{In contrast to \GGreedy which ``jumps'' freely in
	the entire time horizon, a local greedy algorithm}
Unlike \GGreedy, these algorithms first finalize 
	$k$ recommendations to all users for a {\sl single} time step $t$, 
	before moving on to another time step $t'$ until recommendations for
	all $T$ time steps are rolled out.


\begin{algorithm}[t!]
\caption{\SGreedy} \label{alg:sg}
\SetKwInOut{Input}{Input}
\SetKwInOut{Output}{Output}
\Input{$U, I, T, k, \{q_i\}, \{\price(i,t)\}, \{\ap(u,i,t)\}, \{\beta_i\}$.}
\Output{A valid strategy $S \subseteq U\times I\times [T]$.}

$S \gets \emptyset$;  \tcc*[f]{initialization}\; 
\ForEach{$u \in U$, {\em item class} $c$}{
	$\List_{u,c} \gets \emptyset$\;
}
	
\For{$t = 1$ to $T$} {
	$\heap \gets$ an empty maximum binary heap\; \label{line:sg1}
	\ForEach{$(u,i,t) \in U\times I\times [T]$} { 
		compute $\dap_S(u,i,t)$; \tcc*[f]{Eq.~\eqref{eqn:finalap}} \label{sg3} 
		$\heap$.\Insert($(u,i,t), \price(i,t) \times \dap_S(u,i,t)$) \;
	}

	\While{$\heap$ {\em is not empty} } { \label{sg4}
		$\Top \gets \heap$.FindMax(); \tcc*[f]{root of heap}\;
		\lIf{$\Rev_S(\Top) \le 0$} {
			break\;
		}
		\If{$S\cup \{\Top\}$ \emph{does not violate either constraint}} {
			$S \gets S\cup \{\Top\}$\;
			$\List(\Top.u, \cl{C}(\Top.i)).\Add(\Top)$\;
			Compute $\Rev_{S\cup\{\Top\}}(\Top.u, j, t)$, $\cl{C}(j) = \cl{C}(\Top.i)$\;
		}
		$\heap$.\Remove()\; \label{line:sg0}
	}
}
\end{algorithm}

\eat{
\begin{algorithm}[t!]
\caption{\RGreedy}
\label{alg:rg}
\SetKwInOut{Input}{Input}
\SetKwInOut{Output}{Output}
\Input{$U, I, T, k, \{q_i\}, \{\price(i,t)\}, \{\ap(u,i,t)\}, \{\beta_i\}$, $N$}
\Output{A valid strategy $S \subseteq U\times I\times [T]$.}
	
Randomly generate $N$ distinct permutation of $[T]$\;
\For{$j\gets 1$ \KwTo $N$}
{
	$S_j \gets \emptyset$\; \label{rg1}
	\ForEach{$u \in U$, {\em item class} $c$}{
		$\List_{u,c} \gets \emptyset$\;
	}
	\For{$\tau \gets 1$ \KwTo $T$} {
		$t \gets \tau$-th time step in the $j$-th permutation\;
		execute line \ref{line:sg1} to line \ref{line:sg0} in Algorithm \ref{alg:sg} with $t$\;
		$S_j \gets S_j \cup$ the set of greedy selections for $t$\;
	}
}
$S \gets \argmax_{j=1\dotso N} \Rev(S_j)$\;
\end{algorithm}
}

\spara{Sequential Local Greedy (\SGreedy)}
As suggested by its name, this algorithm (presented in Algorithm~\ref{alg:sg}) follows the
	natural chronological order $t = 1, 2, \dotsc, T$ to form  recommendations. 
\eat{The pseudo-code is presented in Algorithm~\ref{alg:sg}.} 
Note that the key data structures such as $S$ and $\List_{u,c}$ are
	still maintained as global variables for correct computation of
	marginal revenue.
The outer-loop iterates $T$ times, each corresponding to one time
	step, and a priority queue (maximum binary heap) is used to sort and store marginal revenue values.

In \SGreedy, in each iteration $t$, the heap only needs to store triples
	of $t$ (thus the two-level heaps in \GGreedy are not necessary here) and is initially populated with marginal revenue
	values computed using dynamic adoption probability given $S$, which
	already contains recommended triples up to $t-1$
	(line \ref{sg3}).
The selection procedure is done in a similar fashion to that in \GGreedy, and
lazy forward can be applied within each round (i.e., each single time step, lines \ref{line:sg1}-\ref{line:sg0} in Algorithm \ref{alg:sg}).
For lack of space, the detailed operations of lazy forward (cf. Algorithm~\ref{alg:gg}) is omitted.
\SGreedy takes $O(|U|\times|I|)$ space and $O(y |I| \log (|U|\times |I|) )$ time (same reasoning has been applied here as in the case of \GGreedy). 

\spara{Randomized Local Greedy}
The natural time ordering used in $\SGreedy$ (from $1$ to $T$) may not be  
	optimal in terms of revenue achieved.
To illustrate it, we revisit the example used in the proof of Theorem~\ref{thm:sm}, reproduced below for completeness.

\begin{example}[Chronological is Not Optimal] 
\label{ex:randlg}
Let $U=\{u\}$, $I=\{i\}$, $T=2$, $k=1$, $q_i=2$,
	$\ap(u,i,1) = 0.5$, $\ap(u,i,2) = 0.6$,
	$\price(i,1) = 1$, $\price(i,2) = 0.95$,
	$\beta_i = 0.1$.
\SGreedy follows the chronological order $\langle 1,2 \rangle$ and outputs a strategy $S = \{(u,i,1), (u,i,2)\}$ with $\Rev(S) = 0.5285$.
However, if the order of $\langle 2,1 \rangle$ were followed, we would have a better strategy $S' = \{(u,i,2)\}$ with $\Rev(S') = 0.57$.
This is because the marginal revenue of $(u,i,1)$ w.r.t.\ $S'$ is negative and hence will not be added to $S'$.
\end{example}

Ideally, we shall determine an optimal permutation of $[T]$ such that
	the recommendations generated in that ordering yields the best revenue.
However, there seems to be no natural structure which we can exploit to avoid
	enumerating all $T!$ permutations for finding the optimal permutation. 
To circumvent this, we propose \RGreedy that first repeatedly samples $N \ll T!$
	distinct permutations of $[T]$ and executes greedy selection for each permutation.
Let $S_j$ be the strategy set chosen by the $j$-th execution.
In the end, \RGreedy returns the one that yields the largest revenue, i.e., $S = \argmax_{j=1\dotso N} \Rev(S_j)$.
Under any permutation, the seed selection is done on a per-time-step basis, following lines \ref{line:sg1}-\ref{line:sg0} in Algorithm \ref{alg:sg}.
Due to lack of space and similarity to \SGreedy, we omit the pseudo-code of \RGreedy.
\RGreedy is a factor of $N$ slower than \SGreedy due to repeated sampling and has the same space complexity as \SGreedy.


\eat{
\spara{Computing Marginal Revenue}
$\Rev(S\cup\{\langle u,i,t\rangle\})-\Rev(S)$
	calculates the marginal revenue of triple $\langle u,i,t\rangle$ w.r.t.\ $S$.
Let it be denoted by $\MG(\langle u,i,t\rangle|S)$, and also let
$\List_S(u,i)$ be the set of time steps $t$ for $(u,i)$ such that
	$(u,i,t)\in S$.
Observe that the marginal effect $\langle u,i,t\rangle$ has on $S$ consists
	of two parts: (1).\ expected revenue gain by a potential adoption of
	item $i$ by user $u$ at time $t$, and (2).\ expected change in revenue
	that $\langle u,i,t\rangle$ brings to the triples in $\mathcal{L}_S(u,i)$,
	due to both saturation effect and chain computation of adoption
	probabilities.

The first part has a positive value of
\begin{align}
p_i(t) \cdot \hat{\ap}_{ui}(t) \cdot \prod_{t'< t \wedge t'\in \List_S(u,i)} (1-\hat{\ap}_{ui}(t')),
\end{align}
where $\hat{\ap}_{ui}(t)$ and $\hat{\ap}_{ui}(t')$ are adoption
	probabilities properly discounted by saturation.

For the second part, consider any $t' \in \List_S(u,i)$.
If $t' < t$, then the revenue contribution $(u,i,t')$ makes in $S$
	will not get affected.
If $t' > t$, then 

\redline

In each iteration, it finds a triple $(u,i,t)\not\in S$ such that it
	provides the greatest \emph{marginal gain} w.r.t.\ $S$.
Marginal gain is defined as the incremental expected total revenue
	that a triple brings when it gets added to the current seed set
	$S$.
More formally,
\begin{align}
& \MG(\langle u, i, t\rangle | S) = \Rev(S\cup\{\langle u,i,t \rangle\}) - \Rev(S) \nonumber \\
&= \hat{\ap}_{ui}^S(t) r_i(t) - \sum_{t'>t: \langle u,i,t' \rangle \in S} (\hat{\ap}^{S+t}_{ui}(t') - \hat{\ap}^S_{ui}(t')) r_i(t')
\end{align}

\note[Wei]{Define the new terms in the above formula.}

For example, if the candidate tuple is $(u_2, i_1, t=3)$, but $u_2$ has already received $k$ recommendations in time $t=3$, then this tuple will be skipped and so will be all other tuples in the form of $(u_2, i, t=3)$, where $i\in I$.
Similarly, if $i_1$ is already recommended $q_{i_1}$ times, we shall also skip this tuple as well as all other $(u,i_1,t)$, where $u\in U, t\in [\tau]$.
}

\eat{
	\begin{algorithm}\label{alg:mg}
	\caption{\ComputeMG}
	\SetKwInOut{Input}{Input}
	\SetKwInOut{Output}{Output}

	\Input{$S, \Top, $}
	\Output{$\Top.mg \gets \Rev(S\cup\{\langle \Top.u, \Top.i, \Top.t \rangle\}) - \Rev(S)$.}
	
	\end{algorithm}
}

\eat{
\mpara{Value-based Local Greedy}
This algorithm associates each time step $t\in [\tau]$ a value $val(t)$, which is defined as
$$
val(t) = \sum_{u\in U} \sum_{i\in \mathsf{Top}_k(u)} \alpha_u(i,t) \cdot [p_i(t) + (\tau-t)c_i],
$$
where $\mathsf{Top}_k(u)$ is the set containing the top-$k$ items for $u$ in terms of expected revenue.

In each iteration, we compute $val(t)$ for all remaining time steps and select the best one to proceed, by applying line X to Y of \SGreedy.
The algorithm terminates when all time steps are finished.

\begin{algorithm}[t!]
\caption{{\sc Value-based Local Greedy}}
\label{alg:rg}
\textbf{Input}: $U,I,k, \tau$, $[q_i]$, $[p_i(t)]$, $[\alpha_u^0(i,t)]$, $r$, $[\beta_i]$\;
$\mathbf{S} \gets\emptyset$\;
\For{$i = 1$ to $\tau$}{
	Compute $val(t)$ for remaining all time steps $t\in [\tau]$\;
	$t_{\max} = \argmax_{t:\text{all unfinished time steps}} val(t)$\;
	Execute line 4 to line 18 of \SGreedy for $t_{\max}$\;
}
\textbf{Output}: $\mathbf{S}$\;
\end{algorithm}
}

\eat{
A more locally-focused and less expensive family of algorithms would determine the recommendations on a per-time-step basis.
We refer to them as the \emph{local greedy algorithms} hereafter.

The simplest execution is to stick to the natural time serious $t=1,2,\dotsc,\tau$: first recommend every user $k$ items at time $1$, and give that, recommendations for time $2$, and so on.
We call this algorithm \SGreedy (Algorithm~\ref{alg:sqg}).
However, due to boredom, in \SGreedy, recommendations to be made at time $t$ are greatly affected by those that are already made up to time $t-1$, which leads to sub-optimality to some extent.

To alleviate such effect, we choose a better permutation of time steps and execute the algorithm accordingly.
Ideally, one would like to find an optimal permutation among all possible ones (there are a total of $\tau!$ of them), i.e., the one that maximizes the expected revenue.
Unfortunately, this is NP-hard and thus we will resort to intuitive heuristics.
\begin{theorem}
	The problem of finding the optimal permutation of time series $t=1,2,\dotsc,\tau$ is NP-hard.
\end{theorem}

\begin{proof}
	
\end{proof}

\note[Wei]{Which NP-hard problem should we reduce?}
}

\eat{
\RGreedy is another local greedy variation.
It first generates a random permutation of $1,2,\dotsc,\tau$, and then execute line X to Y of \SGreedy with that random time series.
One can possibly repeat the process for sufficiently many times, and pick the best outcome of all repetitions.
A clear tradeoff for this method is as the number of repeats goes up, so does the total running time.
}

\eat{
	\begin{algorithm}[t!]\label{alg:gg-new}
	\caption{\GGreedy with Two-Level Heaps}
	\SetKwInOut{Input}{Input}
	\SetKwInOut{Output}{Output}

	\Input{$U, I, T, k, \{q_i\}, \{\price(i,t)\}, \{\ap(u,i,t)\}, \{\beta_i\}$.}
	\Output{A valid strategy set $S \subseteq U\times I\times [T]$.}
	
	$S \gets \emptyset$\; 
	$\uheap \gets$ an empty max binary heap\;
	\lForEach{$(u,t)\in U\times [T]$} {
		$\cnt(u,t) \gets 0$\;
	}
	\lForEach{$i\in I$}{
		$\userSet(i) \gets \emptyset$\;
	}

	\ForEach{$(u,i)\in U\times I$ such that $\exists t$, $\ap(u,i,t) > 0$} {
		Build the lower-level heap $\lheap(u,i)$\;
	}

	Populate $\uheap$ with root elements from all $\lheap(u,i)$\;
	
	\While {$\Top \gets \uheap[0]$} { 
		\lIf {$\Rev_S(\Top) \leq 0$} {
			break\; 
		}
		\If{$S \cup \{\Top\}$ is valid}{
			$S \gets S\cup \{\Top\}$\;
			$\cnt(\Top.u, \Top.t) \gets \cnt(\Top.u, \Top.t) + 1$\;
			\If{$\Top.u\not\in \userSet(\Top.i)$}{
				$\userSet(i) \gets \userSet(i) \cup \{u\}$\;
			}
			Compute $\Rev_S(\Top.u, \Top.i, \Top.t), \forall t$\;
			Update $\lheap(\Top.u, \Top.i)$\;
			Update $\uheap$ with new maximum value from $\lheap(\Top.u, \Top.i)$\;
		}
		\Else {$\heap$.\Remove()\;}
	}
	\end{algorithm}
}

\section{Empirical Evaluation}\label{sec:exp}

\begin{table}[t!]
\centering
\smallskip
	\begin{tabular}{|l | c | c | c|}
		\hline
		 & \textbf{Amazon} & \textbf{Epinions} & \textbf{Synthetic} \\ \hline
		{\bf \#Users} & 23.0K & 21.3K  & 100K -- 500K  \\ \hline
		{\bf \#Items} & 4.2K & 1.1K & 20K   \\ \hline
	    {\bf \#Ratings} & 681K & 32.9K & N/A \\ \hline
		{\bf \#Triples with positive $\ap$} & {\bf 16.1M} & {\bf 14.9M} & {\bf 50M -- 250M} \\ \hline
	    {\bf \#Item classes} & 94 & 43 & 500 \\ \hline
	    {\bf Largest class size} & 1081  & 52 &  60 \\ \hline
		{\bf Smallest class size} & 2 & 10  & 24\\ \hline
		{\bf Median class size} & 12 & 27  &  40 \\ \hline
		
	\end{tabular}
\caption{Data Statistics (K: thousand)} \label{table:dataset}
\end{table}

We conduct extensive experiments on two real datasets --
	Amazon and Epinions (\url{http://www.epinions.com/}) --
	to evaluate \revmax algorithms.
Statistics of the datasets are in Table~\ref{table:dataset}. 

In the experiments, an important task is to pre-compute the primitive adoption probabilities for each applicable $(u,i,t)$ triple.
The intuition is that if the user is predicted to like the item a lot, i.e., if the predicted rating $\hat{r}_{ui}$ from an established classical RS is high, then $\ap(u,i,t)$ should be high.
Intuitively, $\ap(u,i,t)$ should be anti-monotone w.r.t.\ price\footnote{Our framework and algorithms do not assume this.}, and we use the notion of buyer valuation to instantiate this intuition.
Let $\val_{ui}$ be user $u$'s valuation on item $i$, which is the maximum amount of money $u$ is willing to pay for getting $i$.
For trust or privacy reasons users typically do not reveal their true valuations, thus we make the independent value (IPV) assumption which says that $\val_{ui}$ is drawn from a common probability distribution and is independent of others \cite{klbbook, agtbook}.
Thus, we use a simple definition to estimate $\ap(u,i,t)$ to be $\Pr[\val_{ui} \geq \price(i,t)] \cdot \hat{r}_{ui} / r_{\max}$, where $r_{\max}$ is the maximum rating allowed by the system.

For real data, the first step is to compute predicted ratings using a ``vanilla'' MF model (we used the stochastic gradient descent algorithm) \cite{mfsurvey}.
Then, for all users we select 100 items with the highest predicted ratings and compute primitive adoption probabilities (if the rating is too low, the item is deemed to be of little interest).
Naturally, only triples with nonzero adoption probability will be considered by (and given as input to) any \revmax algorithm, and thus \textsl{the number of such triples is the true input size} (as an analogy, recall that the number of known ratings is the true input size that affects the running time of RS algorithms such as matrix factorization).
We highlight this number in Table~\ref{table:dataset} in bold font.

\subsection{Data Preparations and Experiments Setup}\label{sec:dataset}

\spara{Amazon}
We selected 5000 popular items from the Electronics category and crawled their prices from August 31, 2013 to November 1, 2013 via Amazon's Product Advertising API\footnote{\scriptsize\url{https://affiliate-program.amazon.com/gp/advertising/api/detail/main.html}}.
The reason to select popular items is that they receive enough ratings for computing predicted ratings in a reliable manner.
The reason to focus on one category is that we want the buyers of different items to have a reasonable overlap.
The items include, e.g., Amazon Kindle and accessories, Microsoft Xbox 360 and popular Xbox games, etc.
For all items, we record one price per day.
In addition, we gathered all historical ratings of these items and the users
	providing them.
Items with fewer than 10 ratings are filtered out.
We then train a low-rank matrix model using the implementation in MyMediaLite~\cite{mymedialite} to obtain predicted ratings.
The model yields a RMSE of
	$0.91$ on five-fold cross validation, which is reasonably good
	by RS standards.
We also set $T=7$ to simulate a horizon of one week.

\spara{Epinions}
When giving reviews, Epinions users can optionally
	report the price they paid (in US dollars).
This makes Epinions a valuable source for obtaining price data.
We extracted item information from a public Epinions dataset
	\cite{richardson02} and followed the provided URLs (of the
	product pages) to crawl all reviews and prices.
For accurate estimations of price and valuation distributions,
	items having fewer than 10 reported prices were filtered out.
We also trained a matrix factorization model
	with a RMSE of $1.04$ on five-fold cross validation.
It has been noted before that Epinions is an ultra sparse dataset
	and hence has a higher RMSE \cite{mymedialite}.

\spara{Learning Price and Valuation Distributions}
The prices on Epinions cannot be mapped to a ground-truth time-series
	as users bought the item from many different sellers.
To circumvent this, we apply the \emph{kernel density estimation}
	(KDE) method~\cite{silverman86} to estimate price distributions.
Consider an arbitrary item $i$ and let
	$\langle p_1, p_2, \dotsc, p_{n_i} \rangle$ be the list of prices
	reported for $i$.
In KDE, it is assumed that the $n_i$ prices
	are i.i.d. with a probability distribution whose density
	function takes the form
$\hat{f}_i(x) = \frac{1}{n_i h}\sum_{j=1}^{n_i} \kappa\left(\frac{x-p_j}{h}\right)$,
where $\kappa(\cdot)$ is the kernel function and $h>0$ is a
	parameter called \emph{bandwidth}, controlling the scale of
	smoothing.
We apply the Gaussian kernel \cite{jiang07} by setting
	$\kappa(x)$ to be the standard Gaussian density function
	$\phi(x) = \frac{1}{\sqrt{2\pi}} \mathrm{exp}(-\frac{x^2}{2})$.
The optimal bandwidth $h$ for the Gaussian kernel can be
	determined by Silverman's rule of thumb~\cite{silverman86}:
	$h^* = ( \frac{4\hat{\sigma}^5}{3 n_i} )^{\frac{1}{5}}$,
	where $\hat{\sigma}$ is the empirical standard deviation.
Then, from the estimated $f_i$, we
	generate $T=7$ samples and treat the samples as
	if they were the prices of $i$ in a week.
For each item $i$ we also use $f_i$ as
	a proxy for its valuation distribution.
Note that the distribution $f_i$ remains Gaussian with mean
	$\mu_i = \sum_{j=1}^{n_i} p_j / (n_i h)$ and
	variance $\sigma_i^2 = h$.
Thus for any price value $p$, $\Pr[\val_{ui} \ge p] = 1 - \hat{F}_i(p) = \frac{1}{2} (1-\mathrm{erf}(\frac{p-\mu_i}{\sqrt{2}\sigma_i}))$, where $\mathrm{erf}(\cdot)$ is the Gauss error function.

\begin{figure*}[t!]
\begin{tabular}{cccc}
    \includegraphics[width=.24\textwidth]{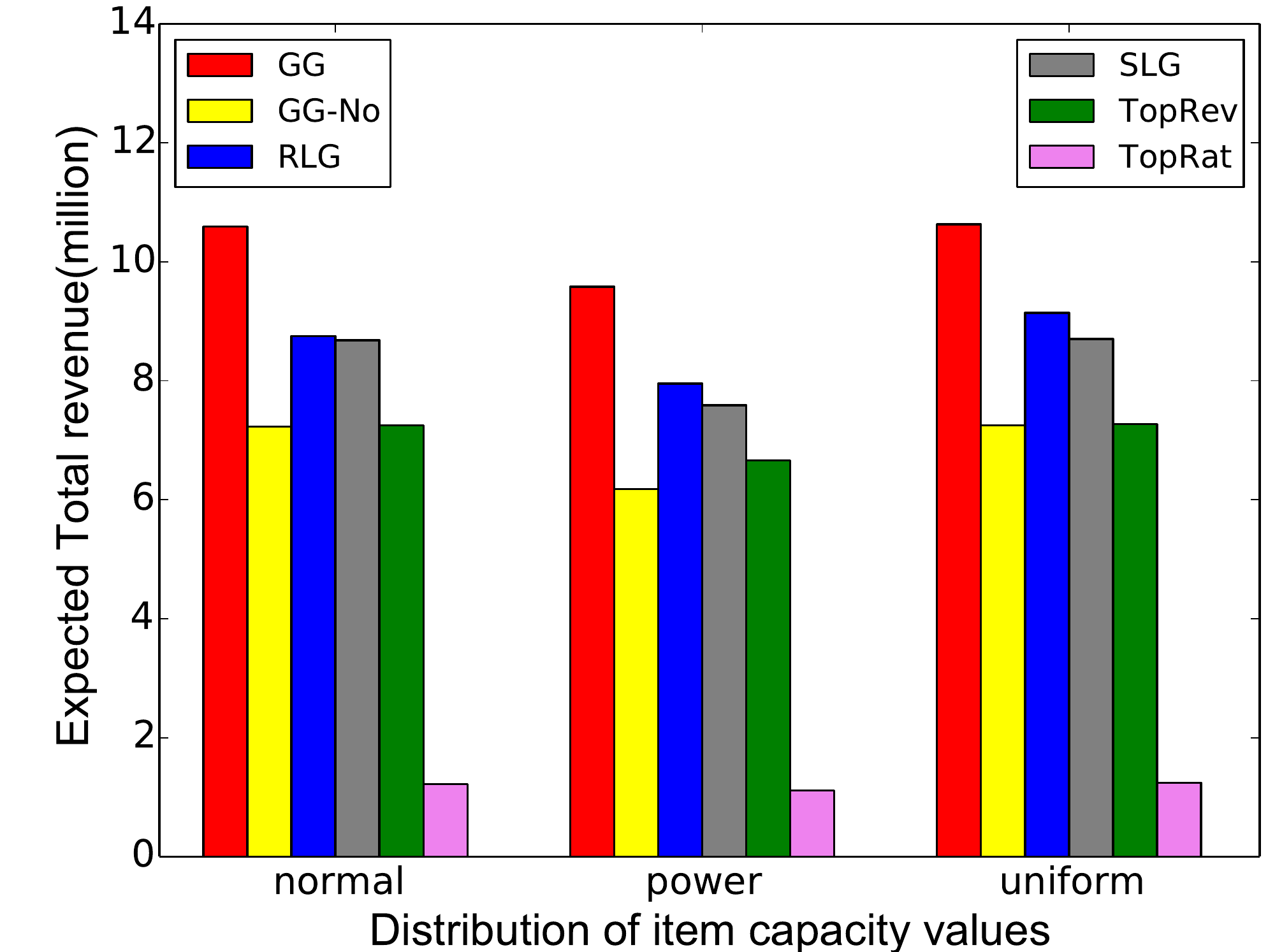}&
    \hspace{-2mm}\includegraphics[width=.24\textwidth]{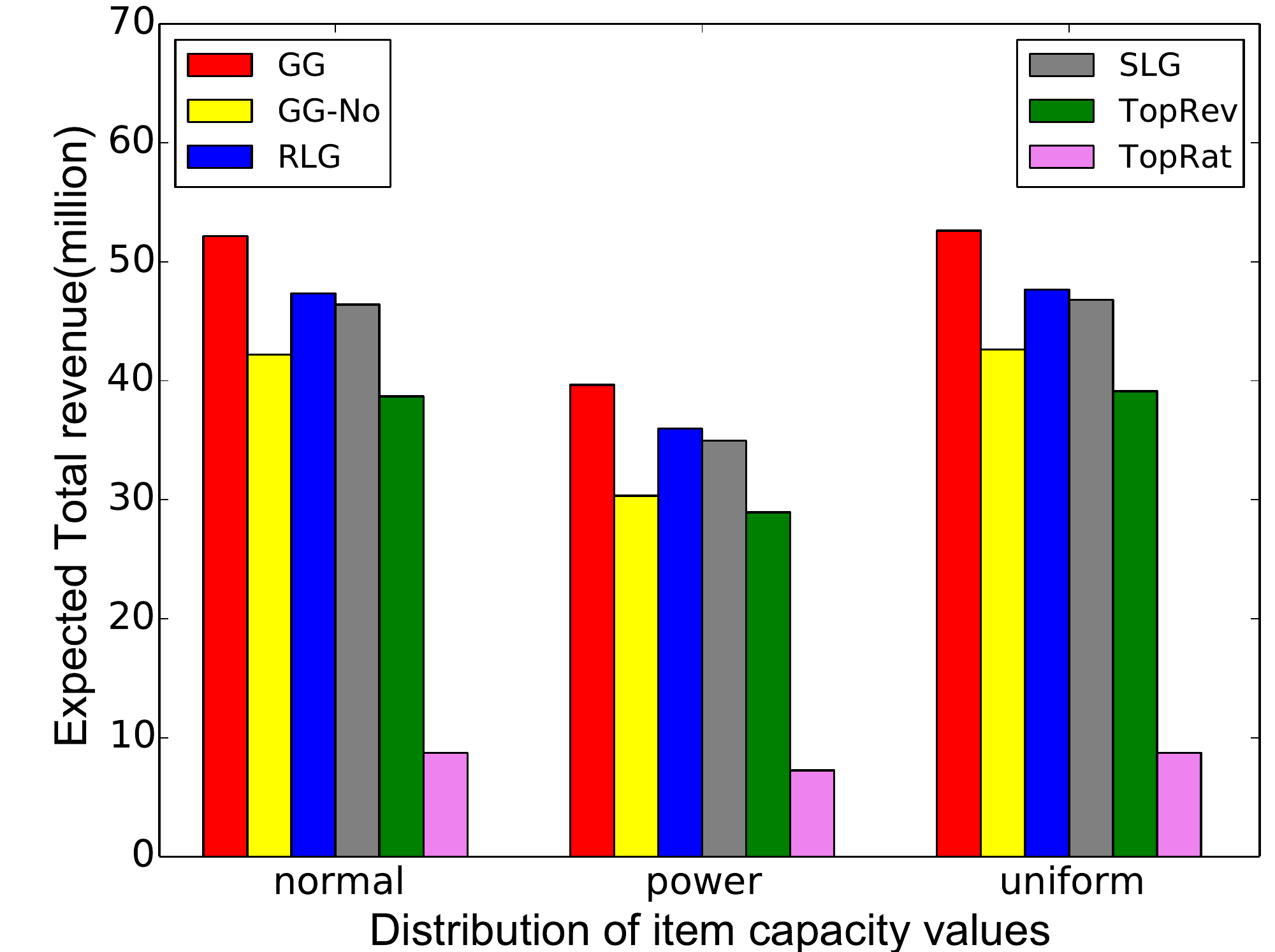}&
    \hspace{-2mm}\includegraphics[width=.24\textwidth]{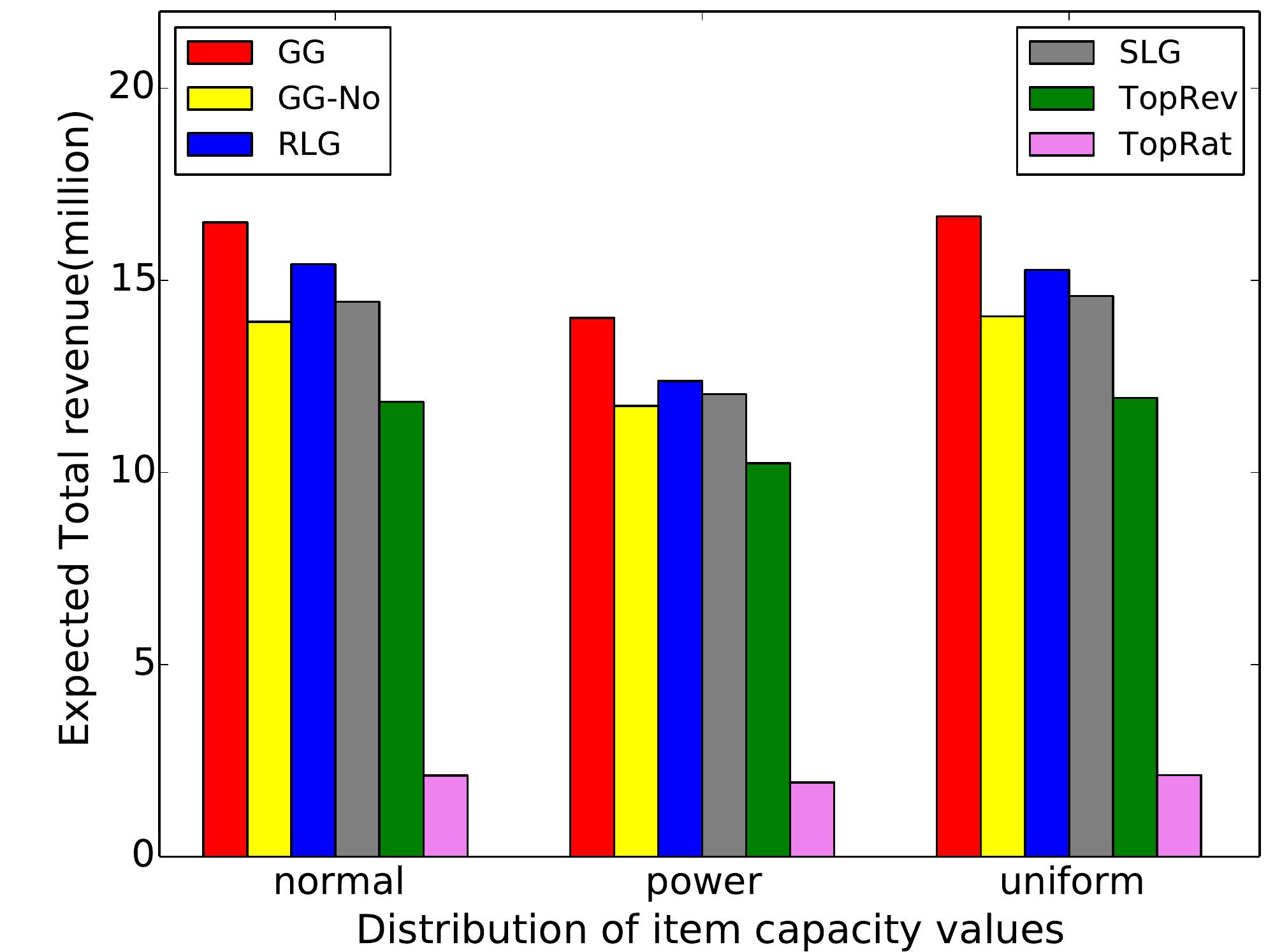}&
    \hspace{-2mm}\includegraphics[width=.24\textwidth]{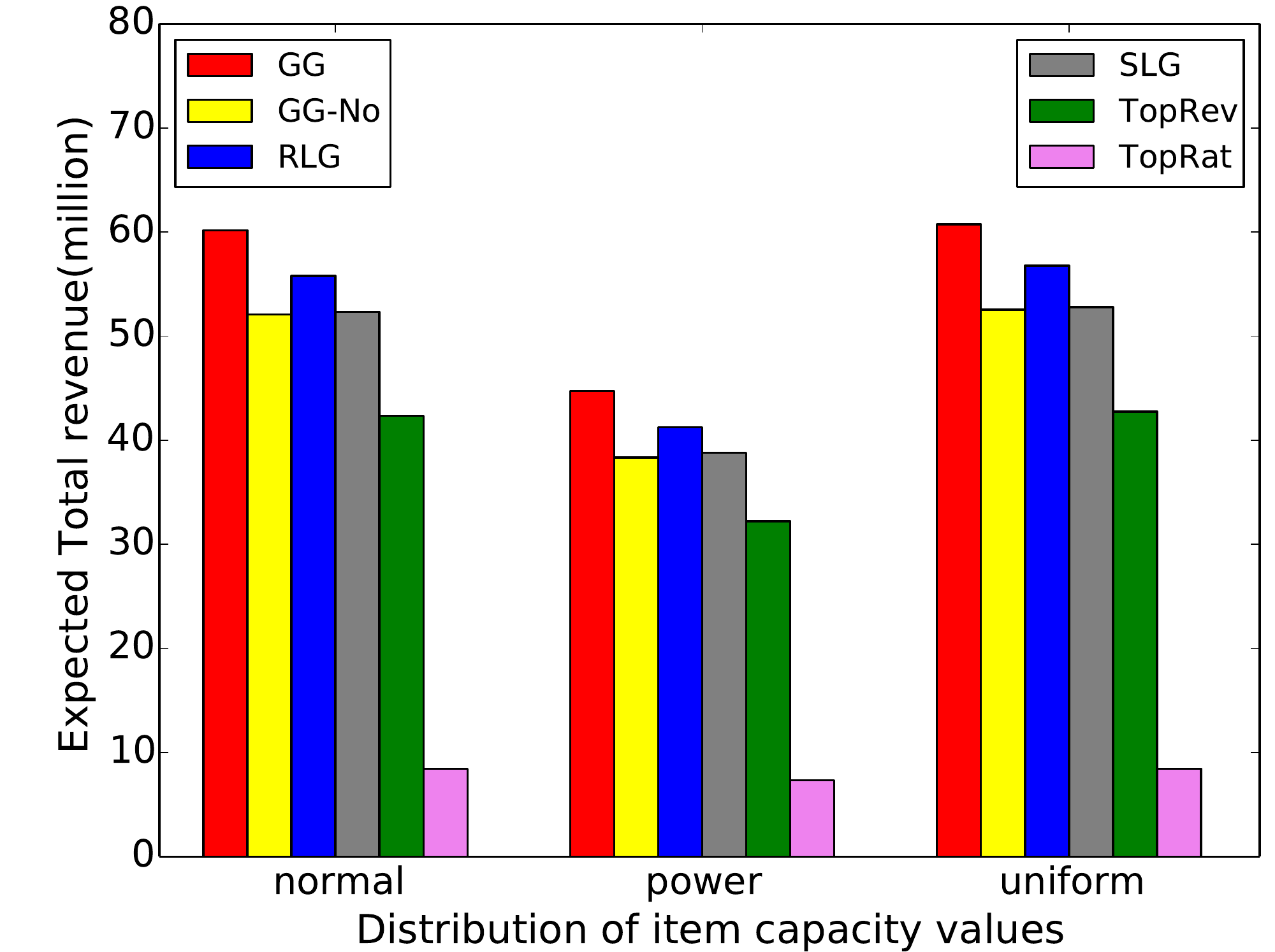}\\
	(a) Amazon & (b) Epinions  & (c) Amazon, class size $1$ & (d) Epinions, class size $1$  \\
\end{tabular}
\caption{ Expected total revenue achieved, with each $\beta_i$ chosen uniformly at random from $[0,1]$.}
\label{fig:uniformBeta}
\end{figure*}

\begin{figure*}[t!]
\begin{tabular}{cccc}
    \includegraphics[width=.24\textwidth]{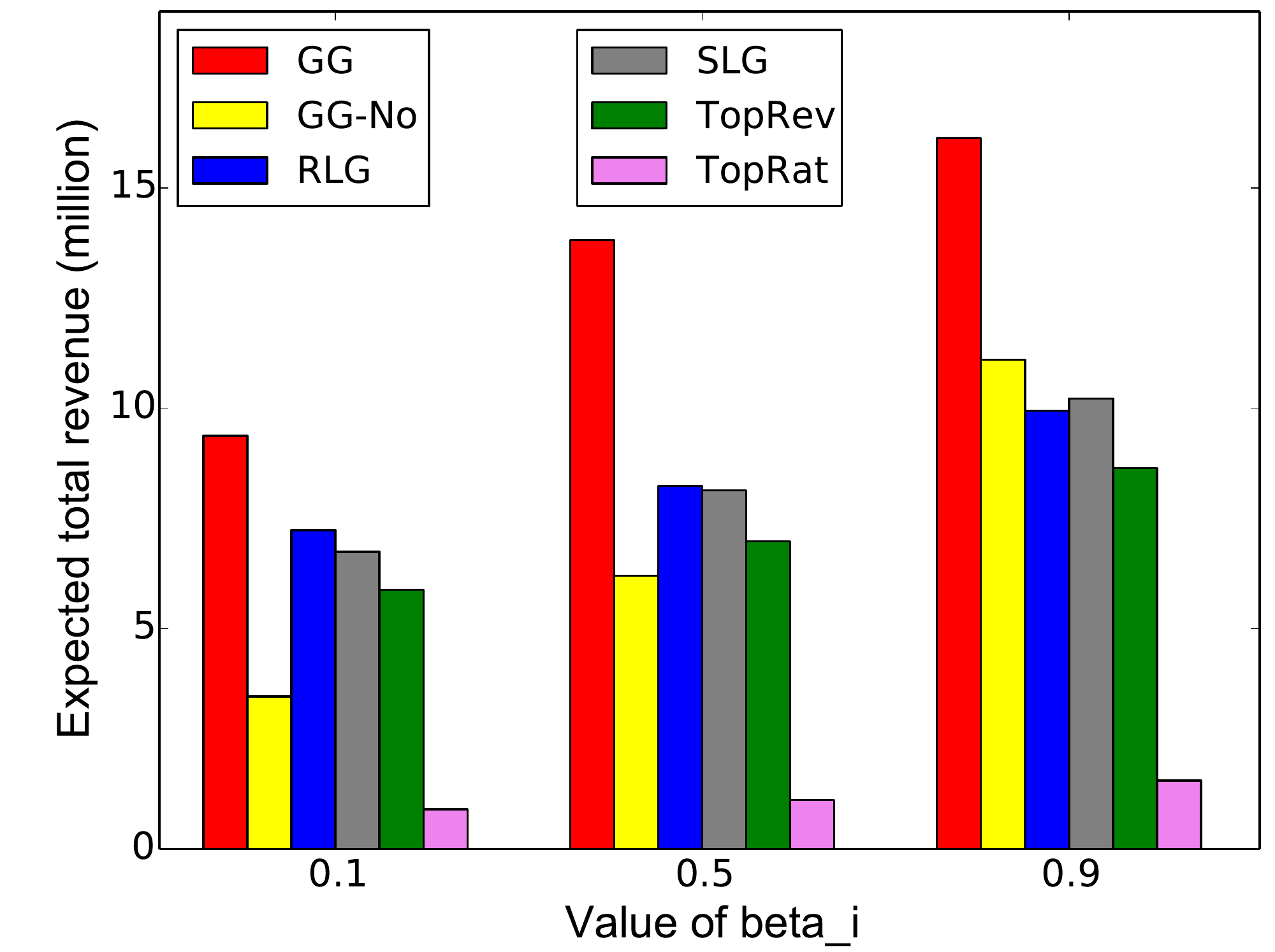}&
    \hspace{-2mm}\includegraphics[width=.24\textwidth]{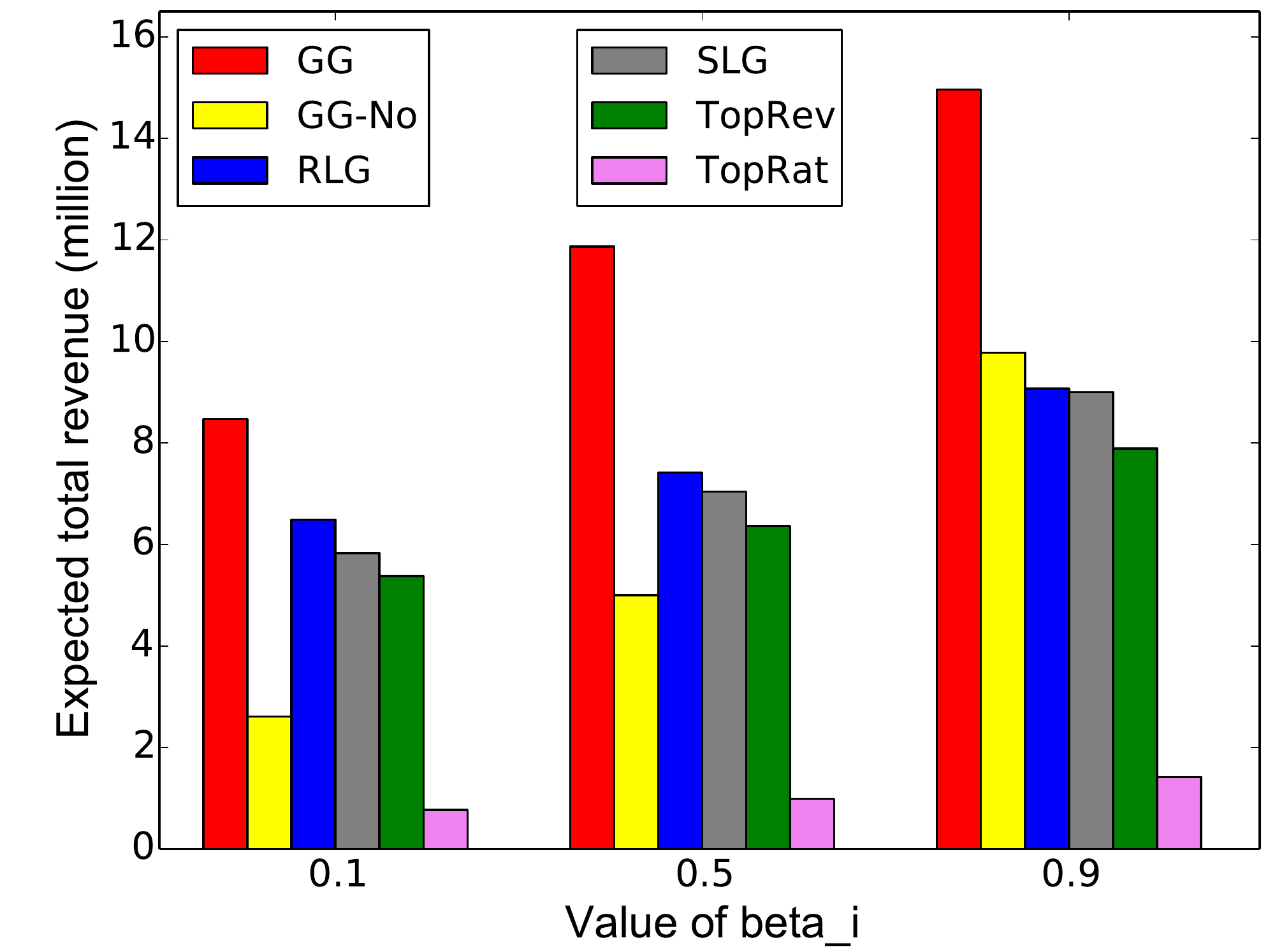}&
    \hspace{-2mm}\includegraphics[width=.24\textwidth]{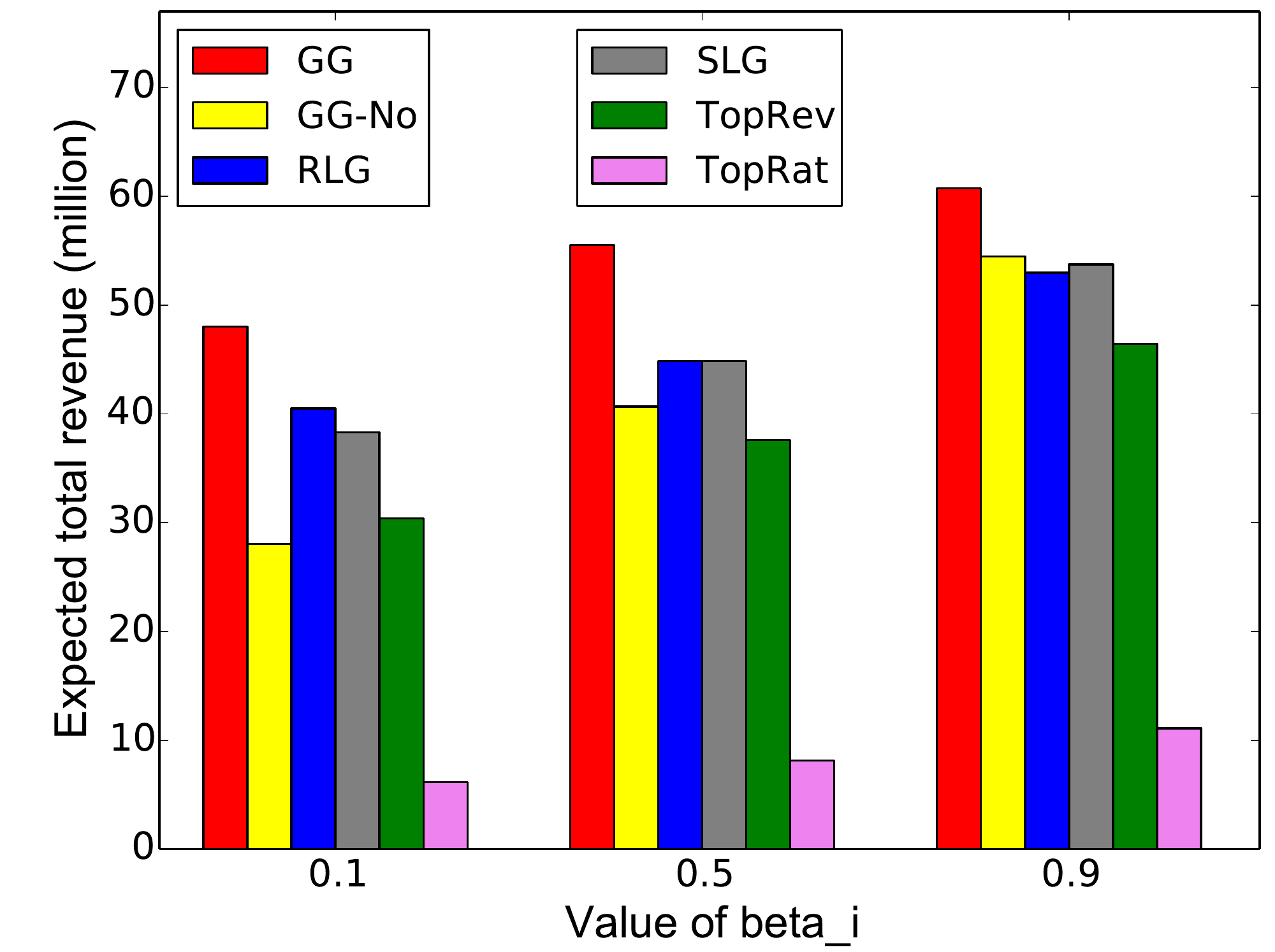}&
    \hspace{-2mm}\includegraphics[width=.24\textwidth]{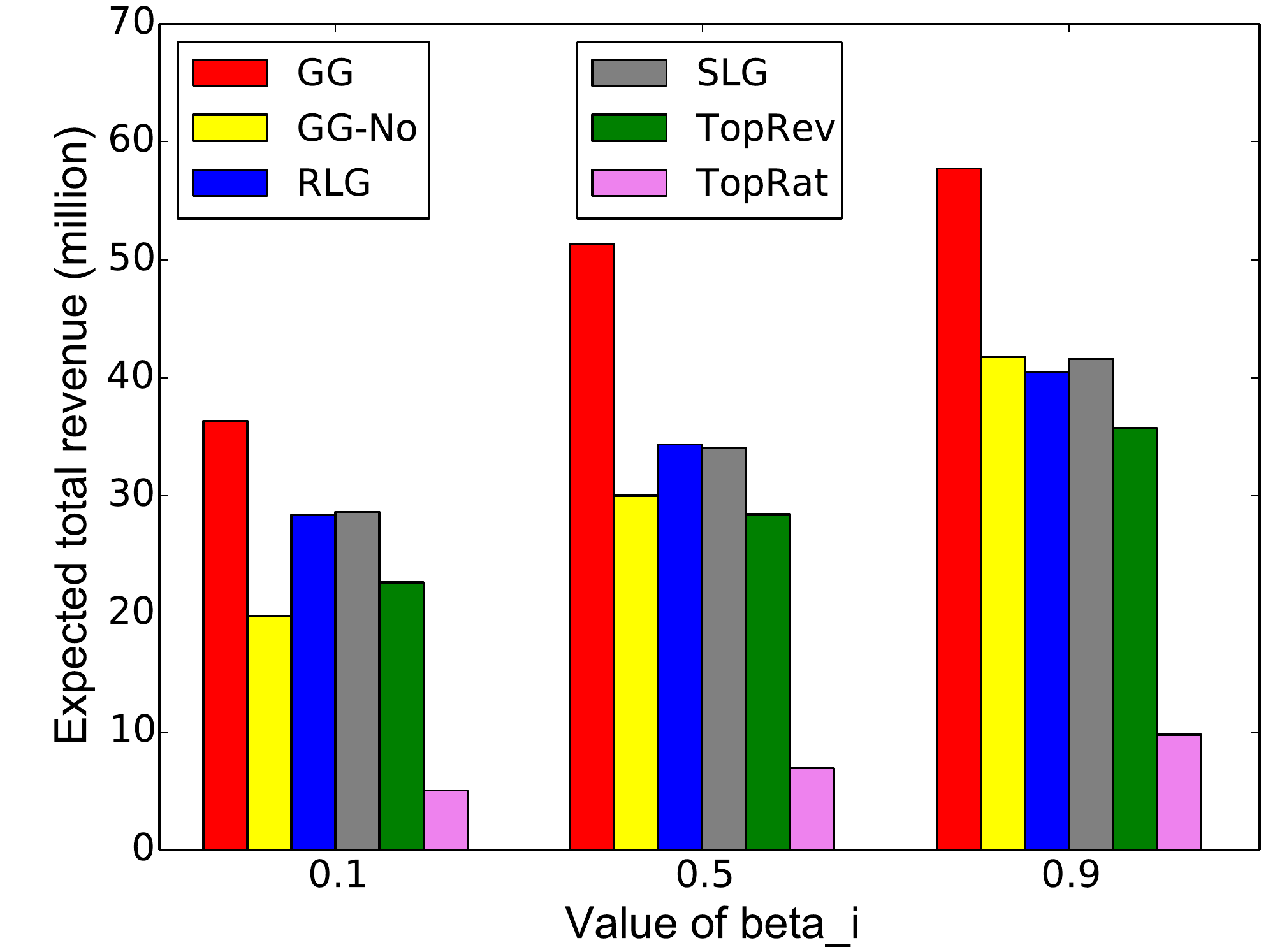} \\
    { (a) Amazon (Gaussian)} & { (b) Amazon (Exponential)} & { (c) Epinions (Gaussian)}  &  { (d) Epinions (Exponential) }  \\
\end{tabular}
\caption{Expected total revenue with varying saturation strength, item class size $> 1$.}
\label{fig:BetaClass}
\end{figure*}

\begin{figure*}[t!]
\begin{tabular}{cccc}
    \includegraphics[width=.24\textwidth]{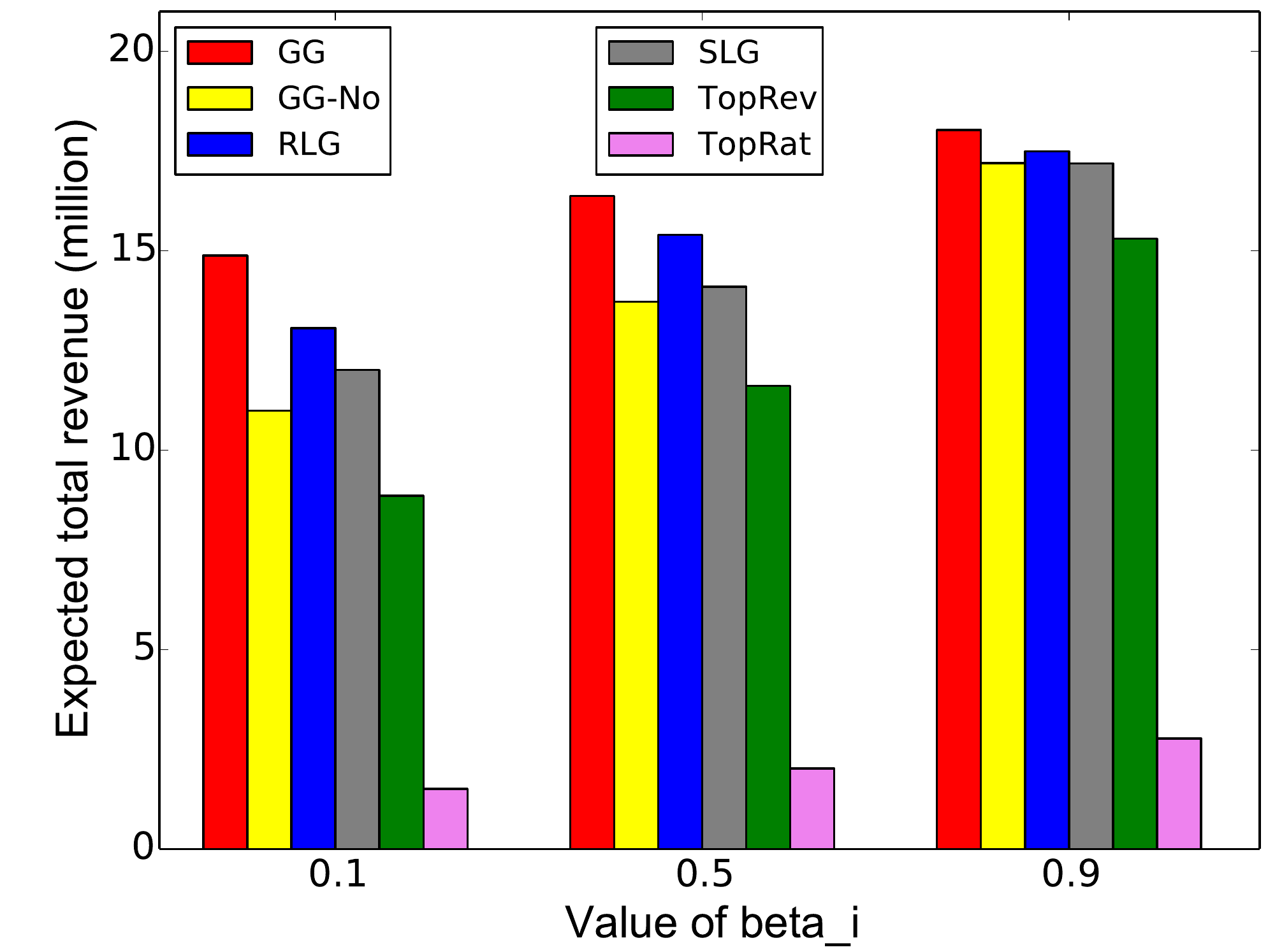}&
    \hspace{-2mm}\includegraphics[width=.24\textwidth]{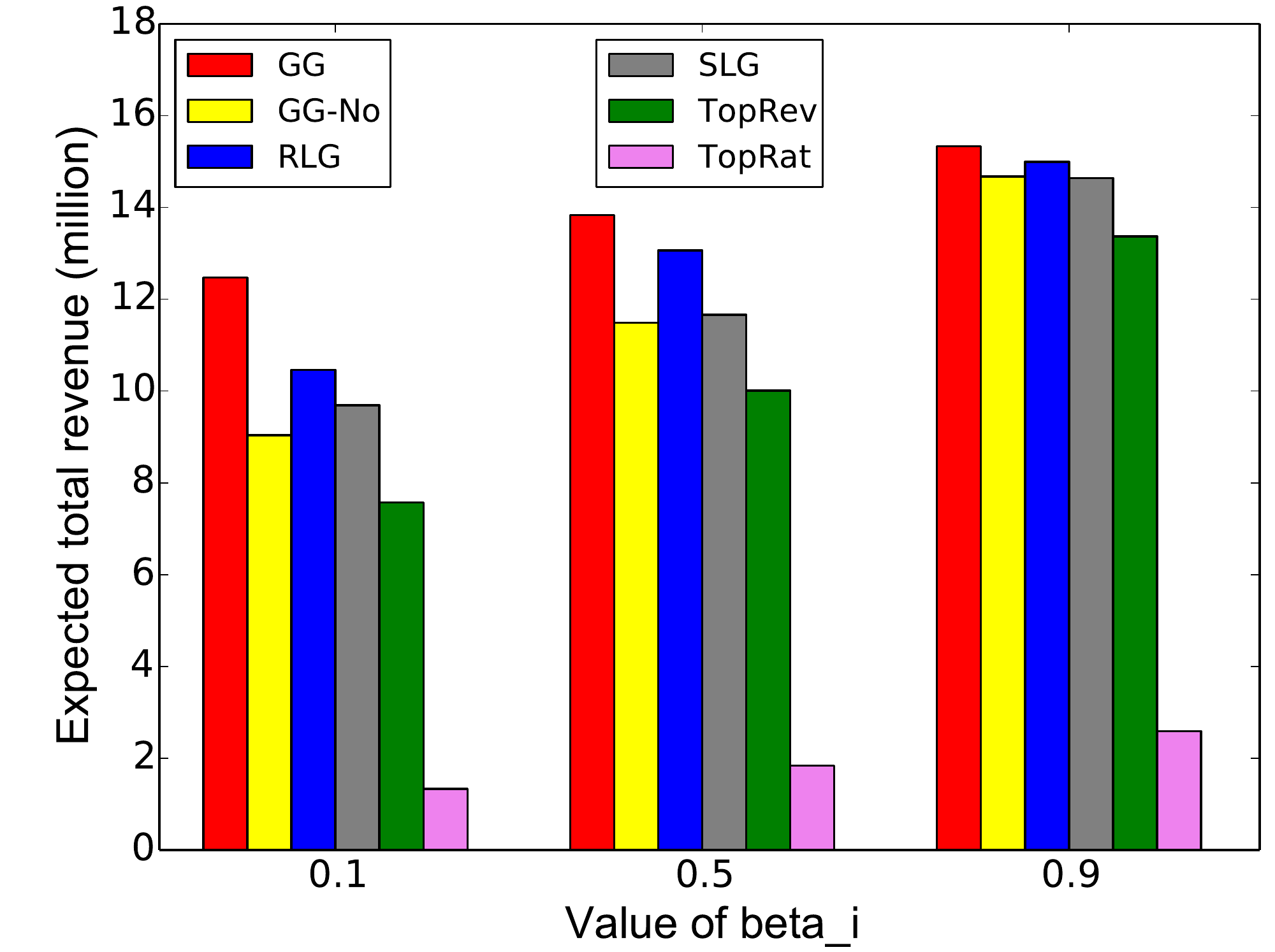}&
    \hspace{-2mm}\includegraphics[width=.24\textwidth]{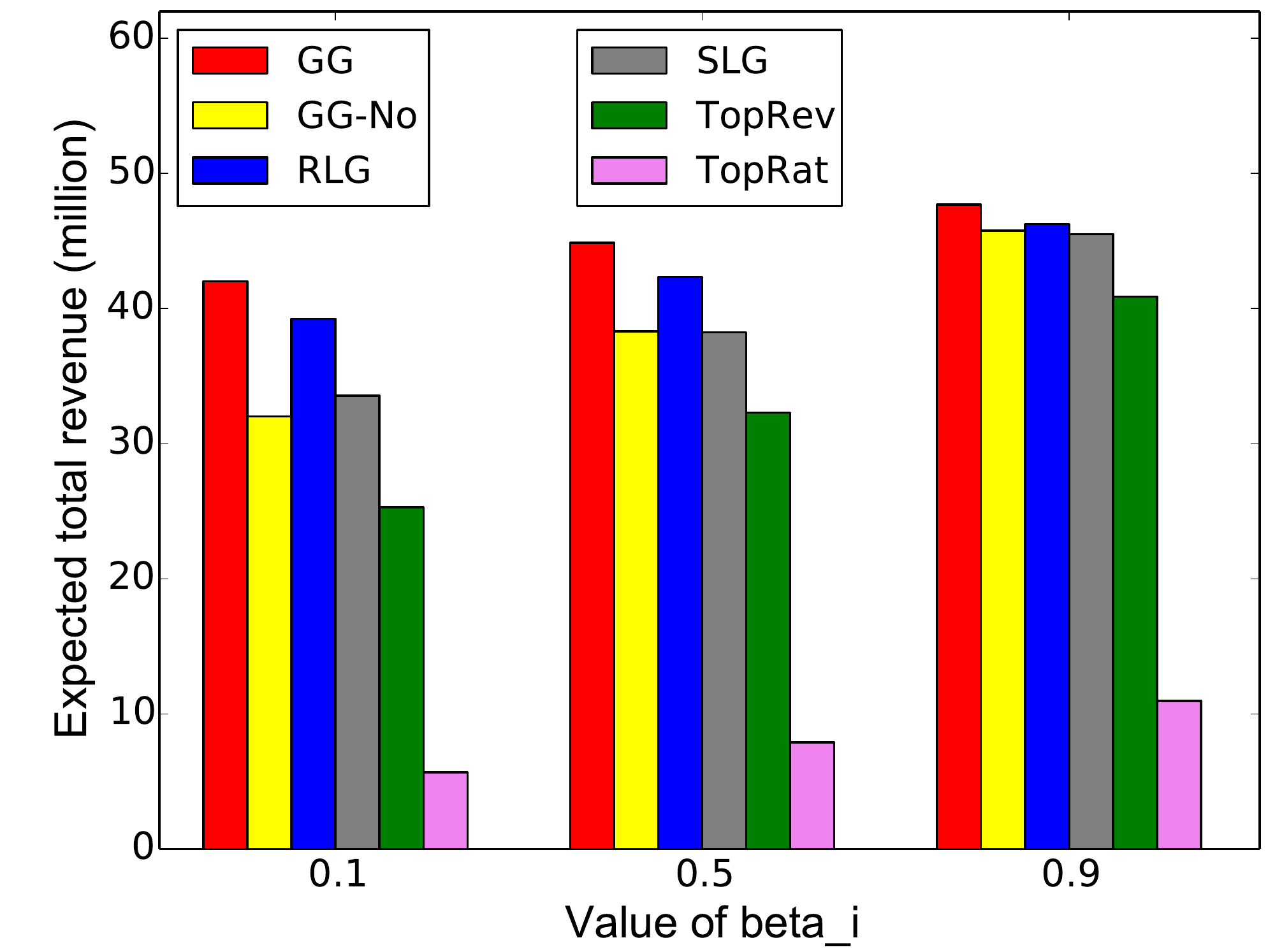}&
    \hspace{-2mm}\includegraphics[width=.24\textwidth]{plots_pdf/rev_epinions_power_noclass} \\
    { (a) Amazon (Gaussian)} & { (b) Amazon (Exponential)} & { (c) Epinions (Gaussian)}  &  { (d) Epinions (Exponential) }  \\
\end{tabular}
\caption{Expected total revenue with varying saturation strength, item class size $= 1$.}
\label{fig:BetaNoClass}
\end{figure*}

\spara{Synthetic Data}
We use synthetic datasets much larger than Amazon and Epinions to gauge the scalability of algorithms.
We do \emph{not} report revenue achieved on this data, since the generation process is artificial and \textsl{this dataset is solely used for testing scalability}.
In total there are five datasets with $|U| = 100K, 200K, \ldots, 500K$.
$T$ is set to $5$.
The item-set is the same: $|I| = 20K$ and for each item $i$, choose a value $x_i$ uniformly at random from $[10,500]$, and for each time $t$, sample $\price(i,t)$ uniformly at random from $[x_i, 2 x_i]$.
For each user $u$, we randomly choose 100 items to be the highest rated, and for each such item, sample $T$ (primitive) adoption probability values from a Gaussian distribution with $\mu = y_i$ and $\sigma^2 = 0.1$, where $y_i$ itself is chosen randomly from $[0,1]$; then we match adoption probabilities with prices so that anti-monotonicity holds.
Input size will then be $100T|U|$ ({\em cf.} Table~\ref{table:dataset}). Recall, the number of nonzero adoption probabilities is the critical factor in deciding the scalability of any \revmax algorithm.  


\spara{Parameter Settings for Amazon and Epinions}
It is unrealistic and unlikely that a user would be interested in buying all items.
That is, for each user $u$, $\ap(u,i,t)$ will be nonzero for a small subset of items:
we rank the items based on their predicted ratings for $u$ and compute adoption
probabilities for the top-100 items.
We also need the saturation factor $\beta_i$ for computing marginal revenue.
We test the following two cases.
First, we hard-wire a uniform value for all items, testing
	three different cases: $0.1$, $0.5$, and $0.9$, representing
	strong, medium, and weak effect of saturation.
Second, for each item $i\in I$, its $\beta_i$ is chosen uniformly at
	random from $[0,1]$ -- this effectively
	averages over possible values for all the items to model
	our lack of knowledge.
For capacity constraints $q_i$, we consider two probability
	distributions from which $q_i$ is sampled:
(1) Gaussian: $\cl{N}(5000, 200)$ for Epinions and $\cl{N}(5000, 300)$ for Amazon;
(2) exponential with inverse scale $2\times 10^{-3}$ (mean $5000$).
We believe the above scenarios are representative and worth testing.

\spara{Algorithms Evaluated}
We  compare \GGreedy, \SGreedy, and \RGreedy 
	with the following natural baselines.
\topkrating (for Top RAting) recommends to every user the $k$ items with
	highest predicted rating by MF;
\topkrev (for Top REvenue) recommends to every user the $k$ items with
	highest ``expected revenue'' (price $\times$ primitive adoption probability).
Since \topkrating is inherently ``static'',
	to evaluate the expected revenue it yields over $[T]$,
	the recommended items are repeated in all $T$ time steps.
We also consider a ``degenerated'' version of \GGreedy,
	which we call \GlobalNo: it ignores saturation effects when selecting triples.
That is, when computing marginal revenue and selecting triples, \GlobalNo will
	behave as though $\beta_i = 1$ for all $i\in I$,  but when we compute the final revenue
	yielded by its output strategy, the true $\beta_i$ values will be used.
This is to measure how much revenue would be lost if 
	saturation effects are present but ignored.

\begin{figure*}[t!]
\begin{tabular}{cccc}   
    \includegraphics[width=.245\textwidth]{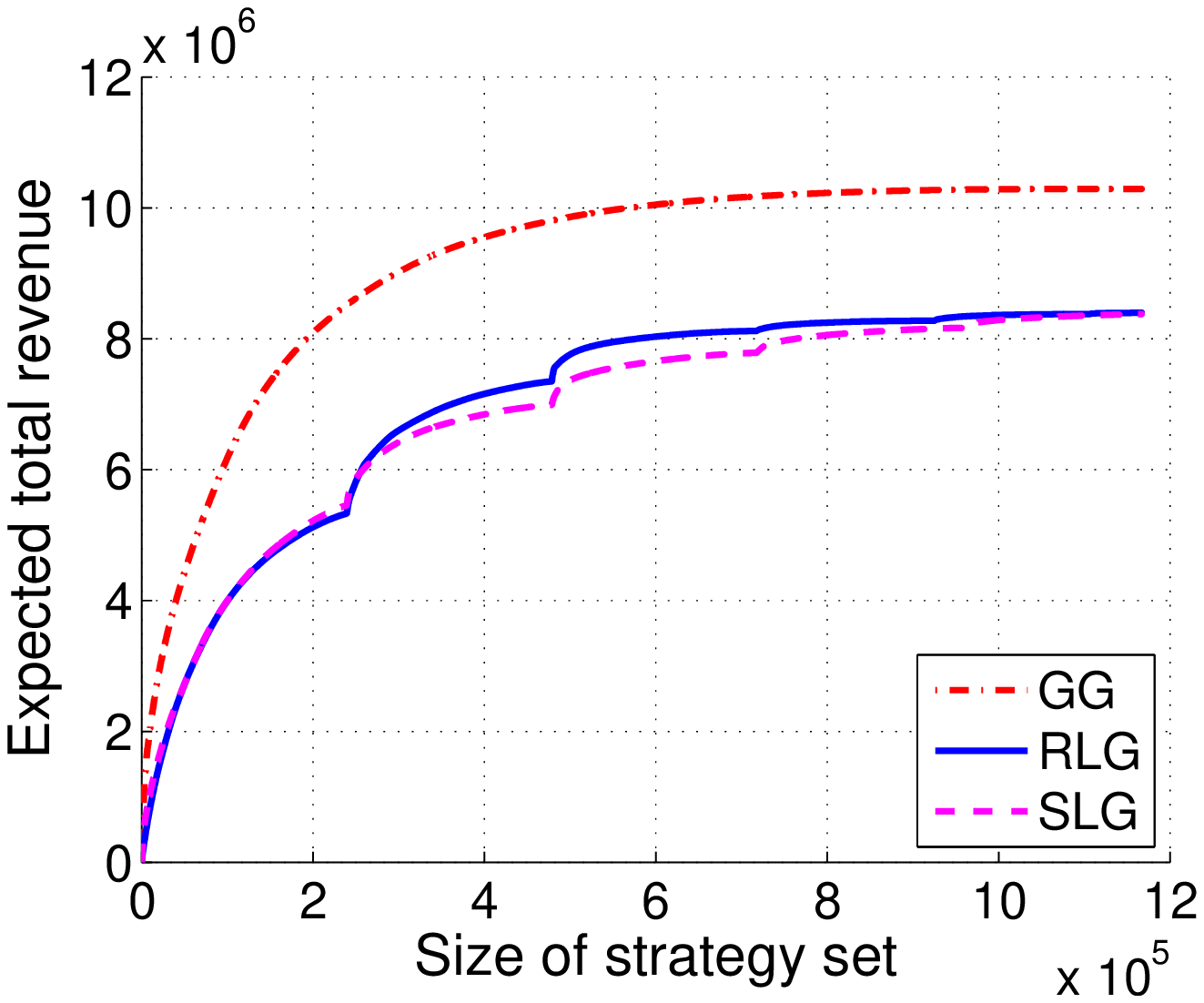}&
    \hspace{-2mm} \includegraphics[width=.245\textwidth]{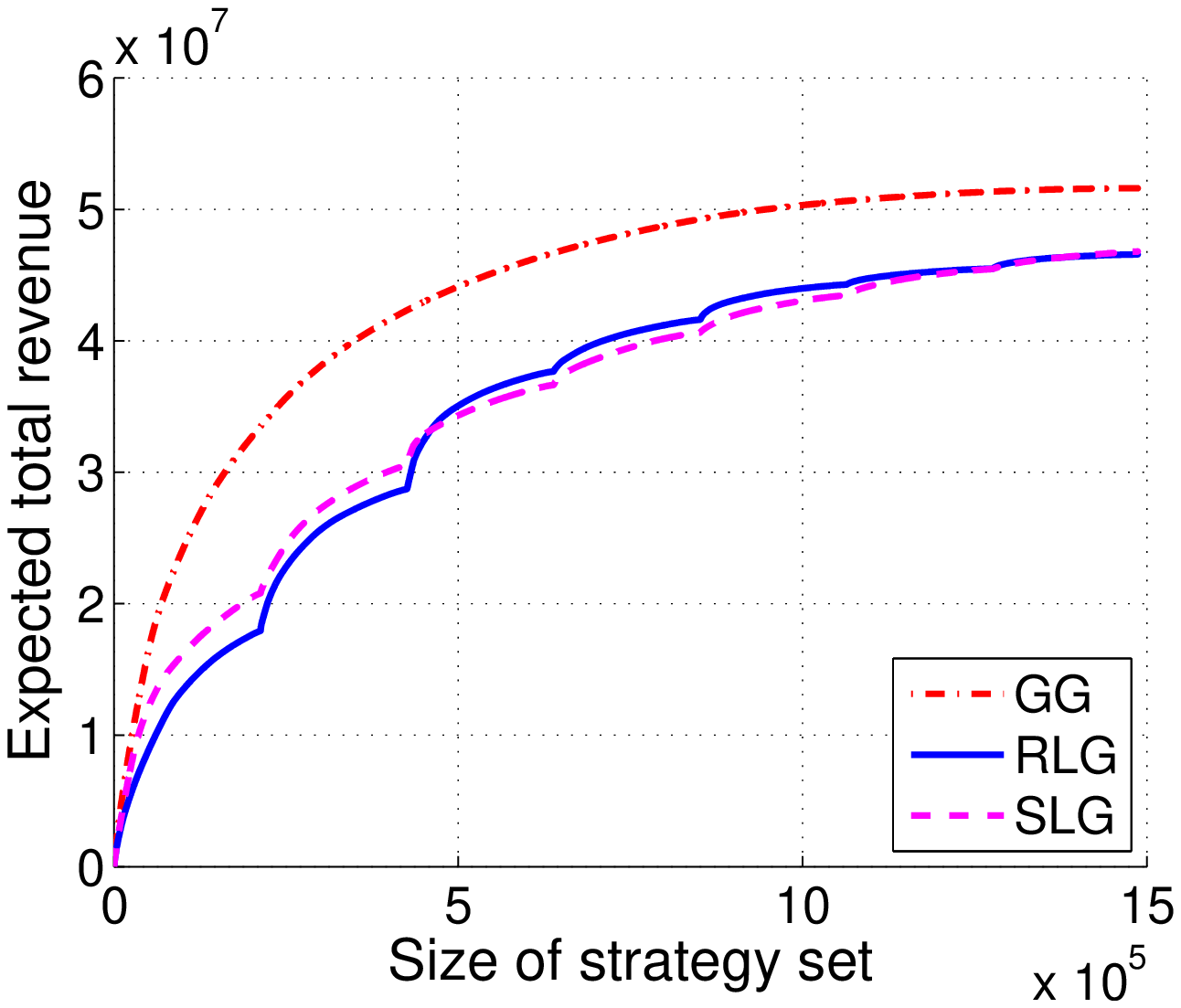}&
    \hspace{-2mm} \includegraphics[width=.245\textwidth]{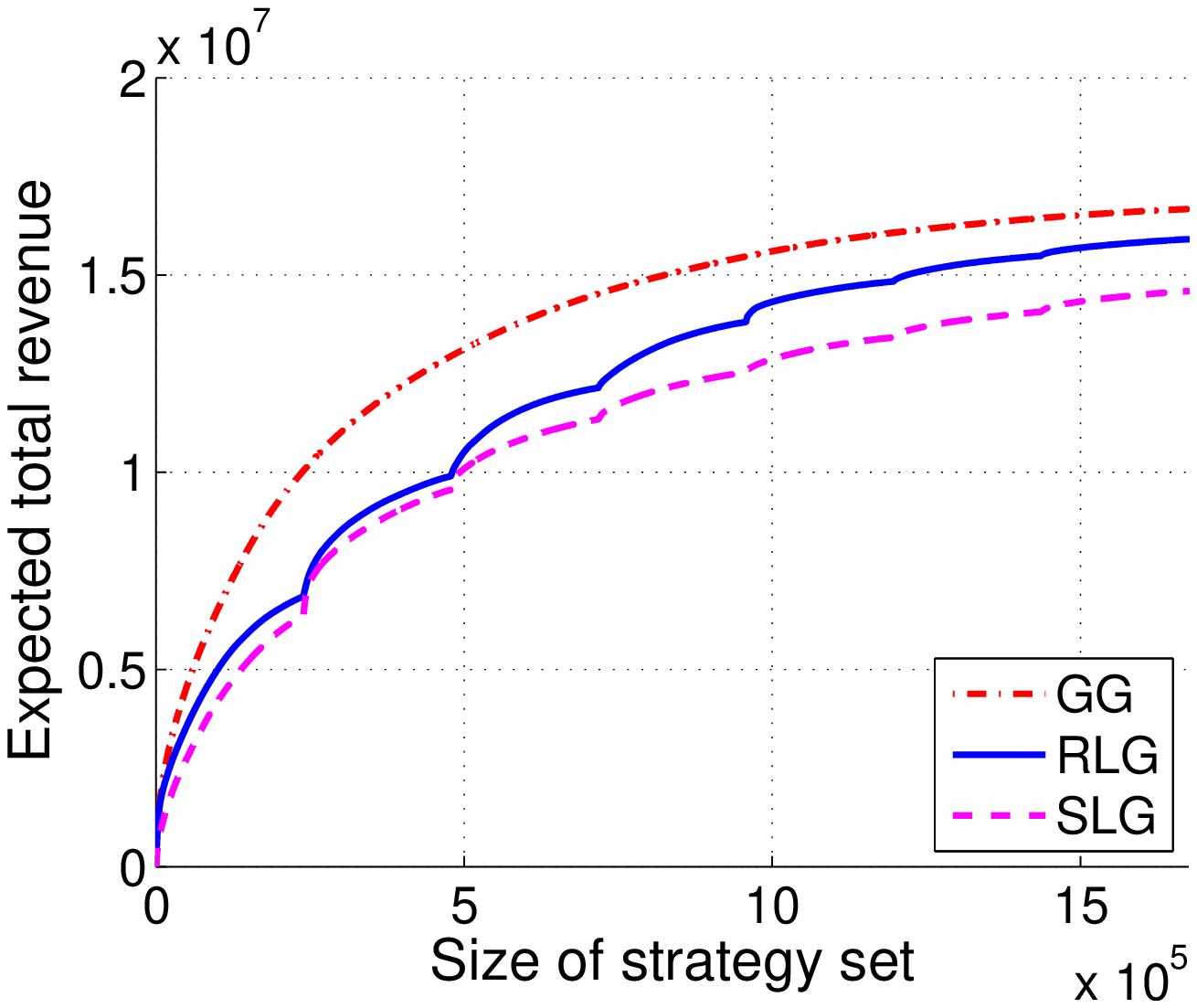}&
    \hspace{-2mm} \includegraphics[width=.245\textwidth]{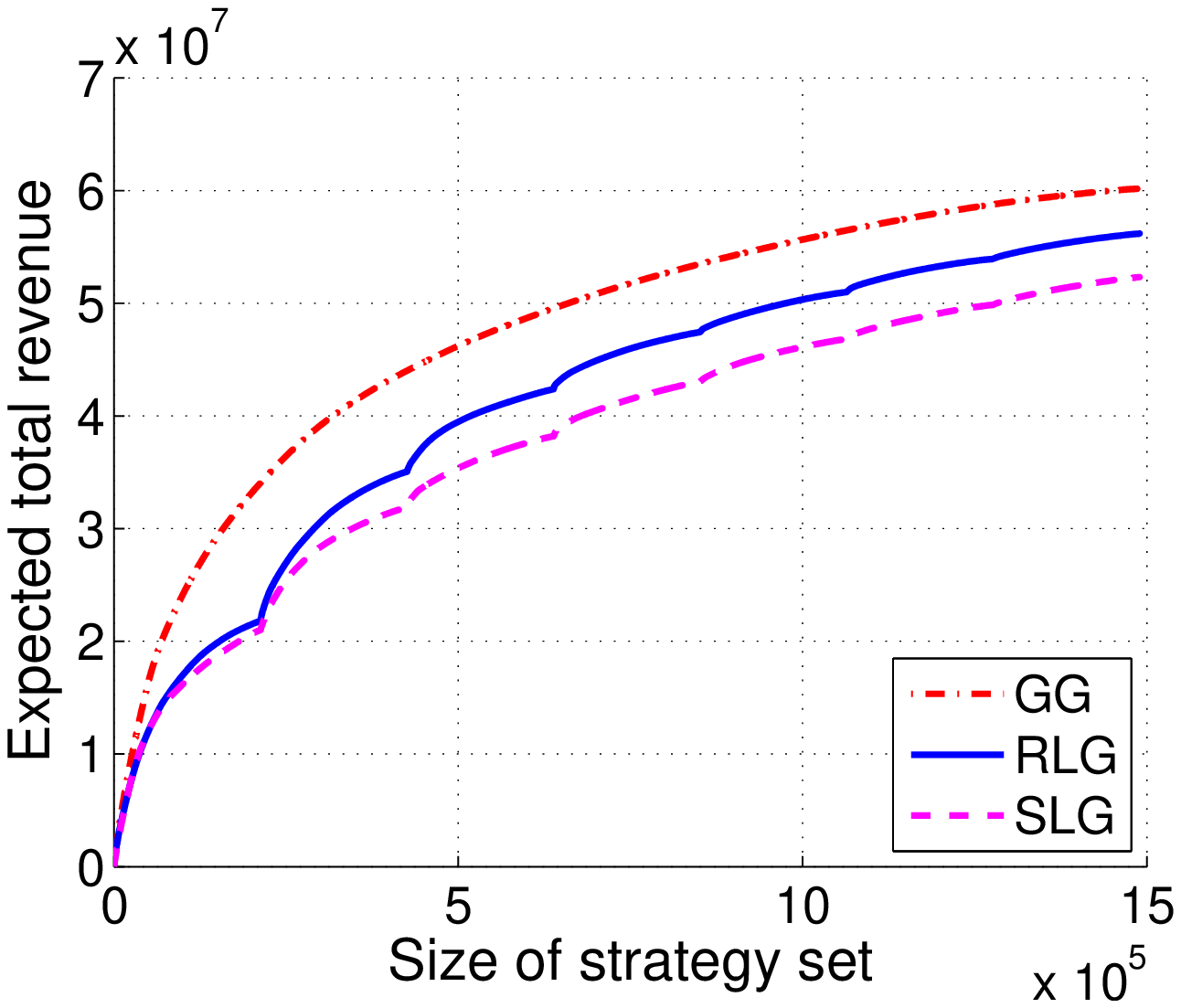}\\
	(a) Amazon & (b) Epinions & (c) Amazon, class size $1$ & (d) Epinions, class size $1$  \\
\end{tabular}
\caption{Expected total revenue of $\GGreedy$, $\SGreedy$, $\RGreedy$ vs.\ solution size ($|S|$).}
\label{fig:submod}
\end{figure*}

\begin{figure*}[t!]
\begin{tabular}{cccccc}   
    \includegraphics[width=.18\textwidth]{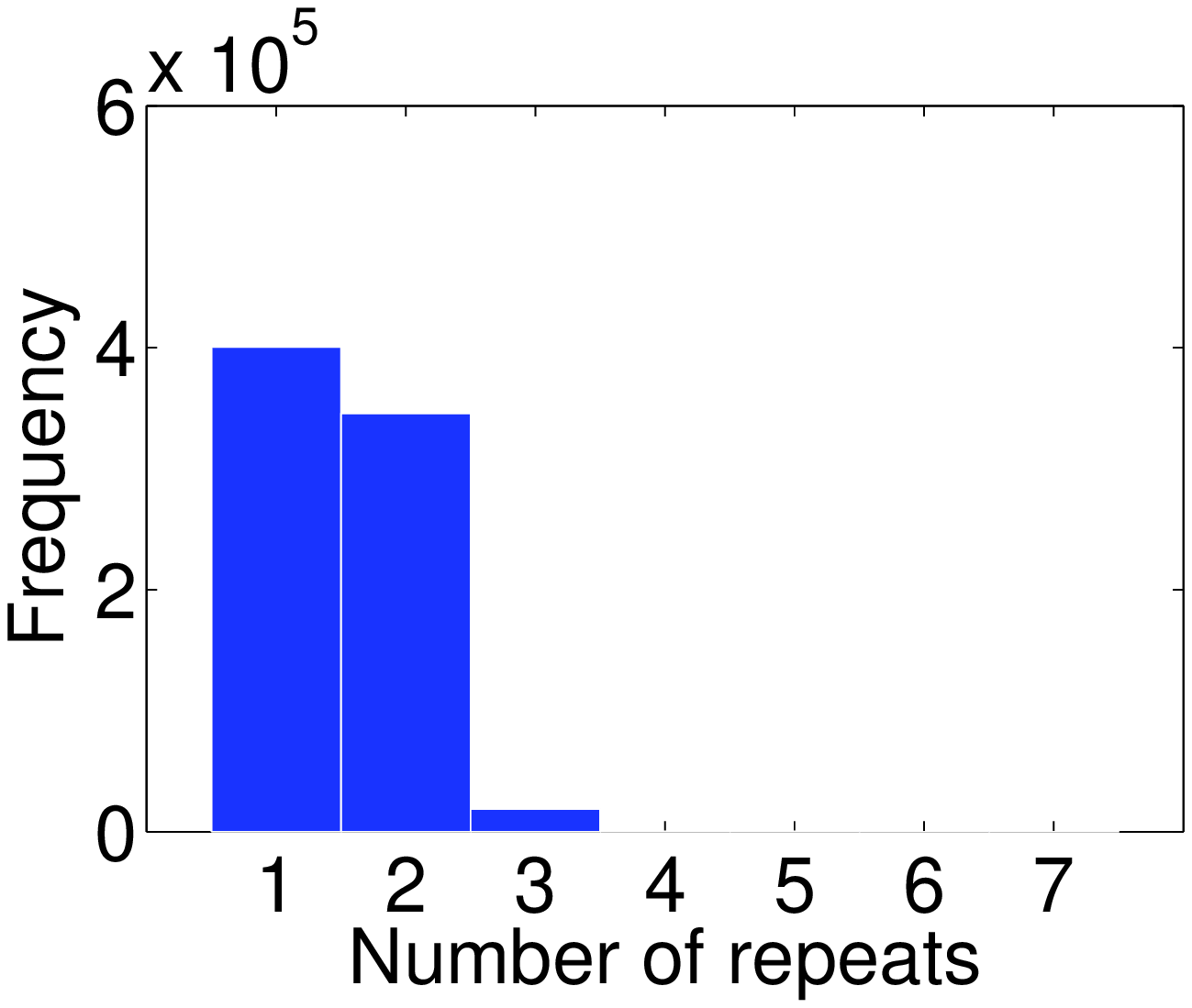}&
    \hspace{-6mm}\includegraphics[width=.18\textwidth]{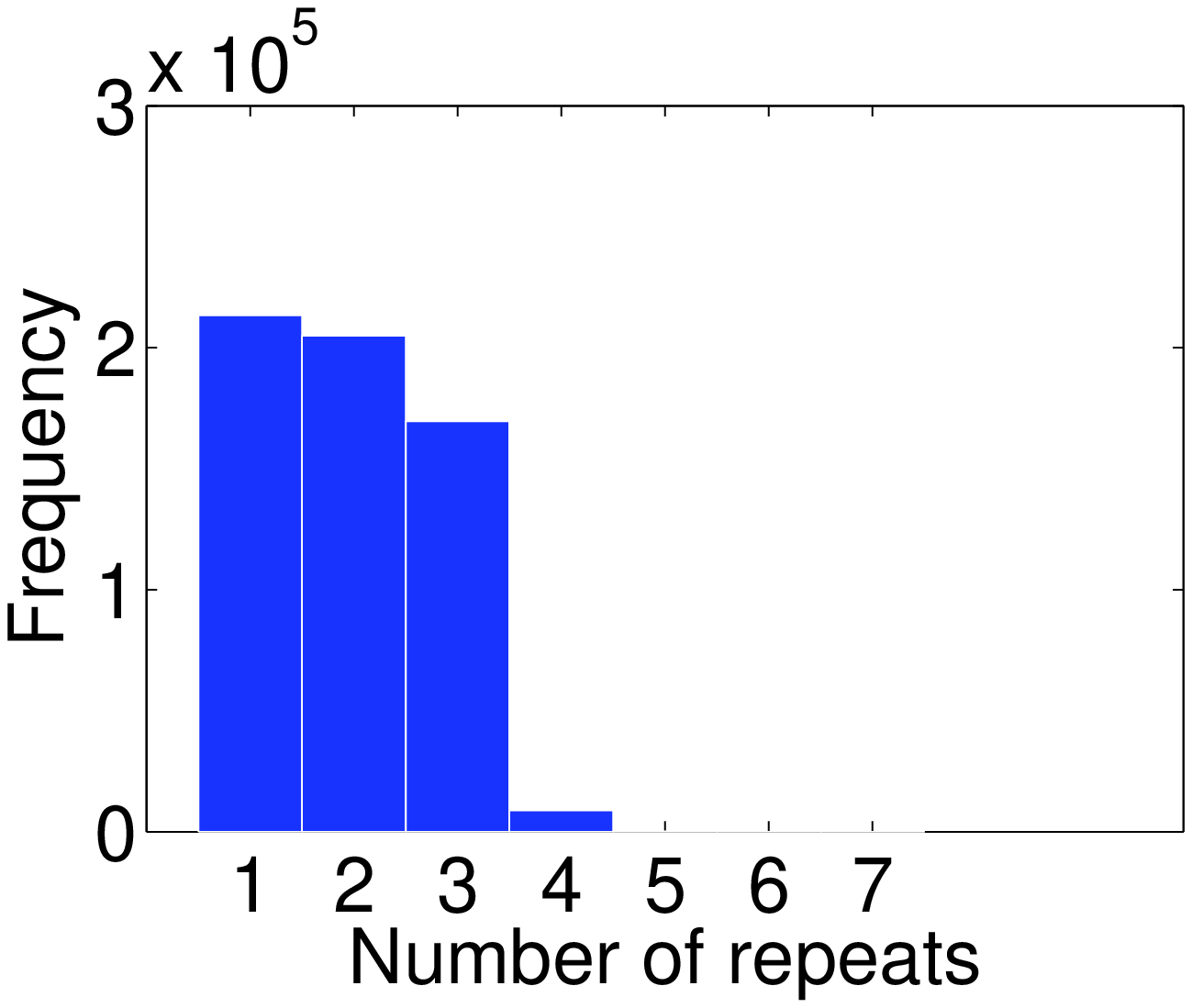}&
    \hspace{-6mm}\includegraphics[width=.18\textwidth]{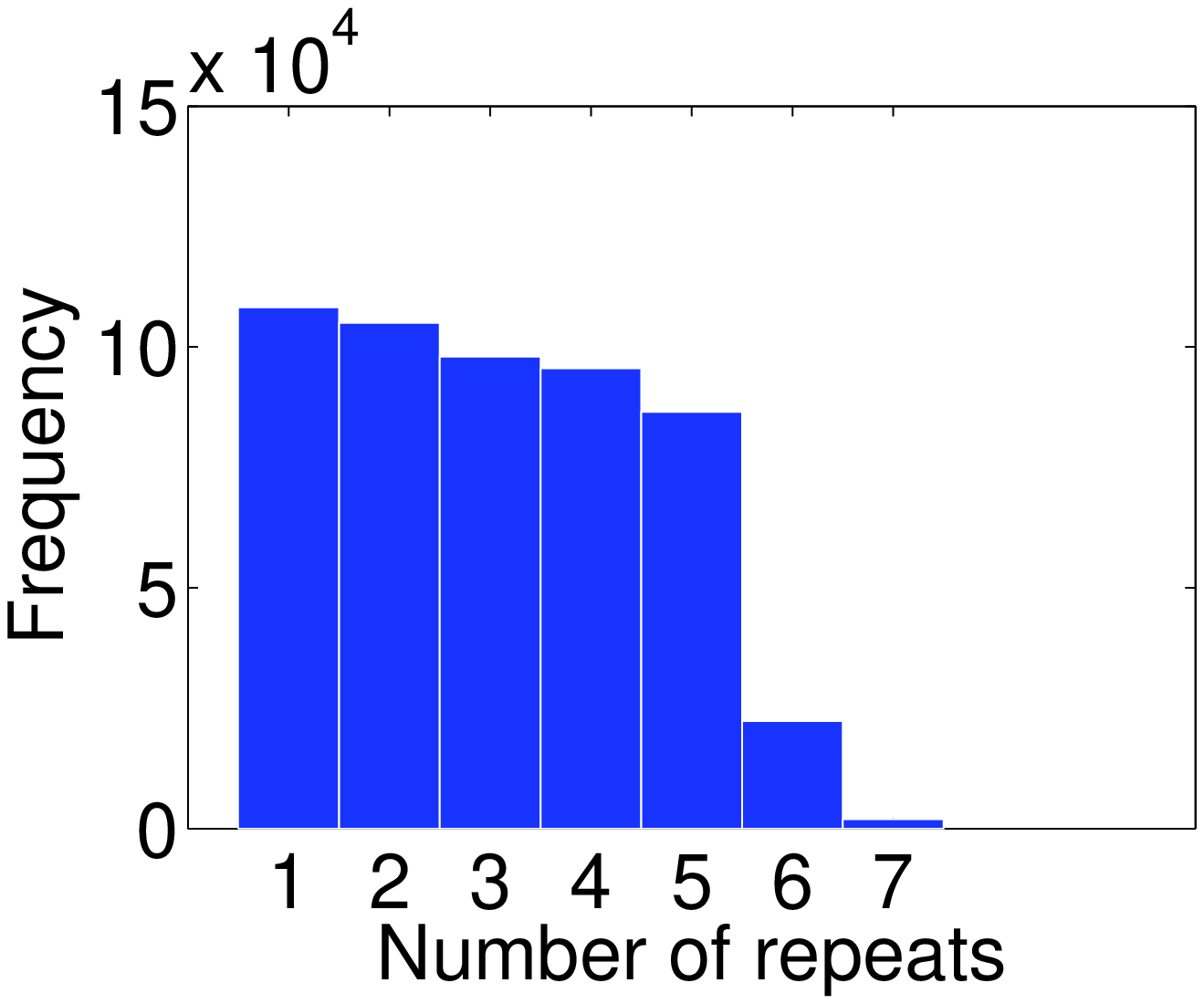}&
    \hspace{-6mm}\includegraphics[width=.18\textwidth]{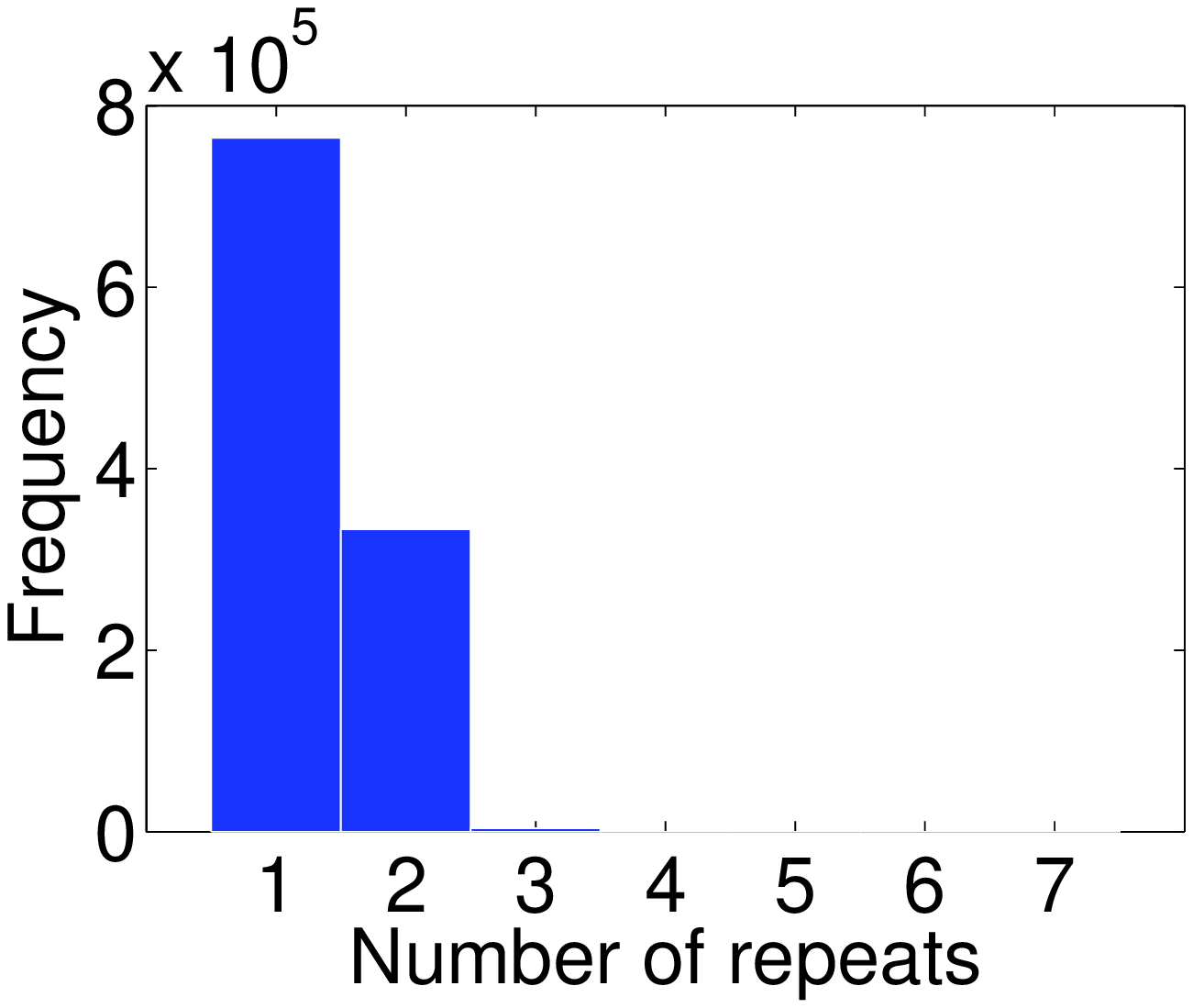}&
    \hspace{-6mm}\includegraphics[width=.18\textwidth]{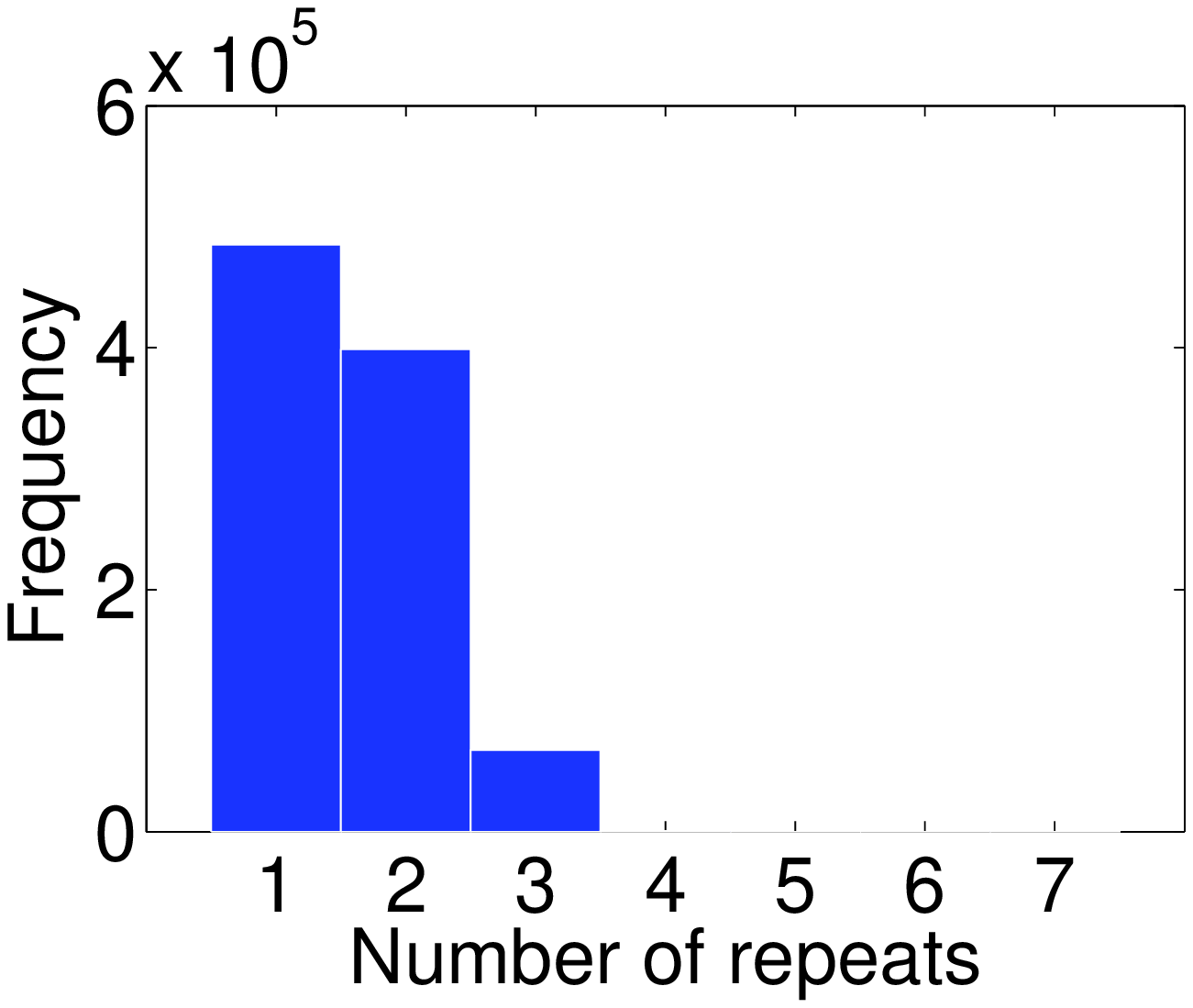}&
    \hspace{-6mm}\includegraphics[width=.18\textwidth]{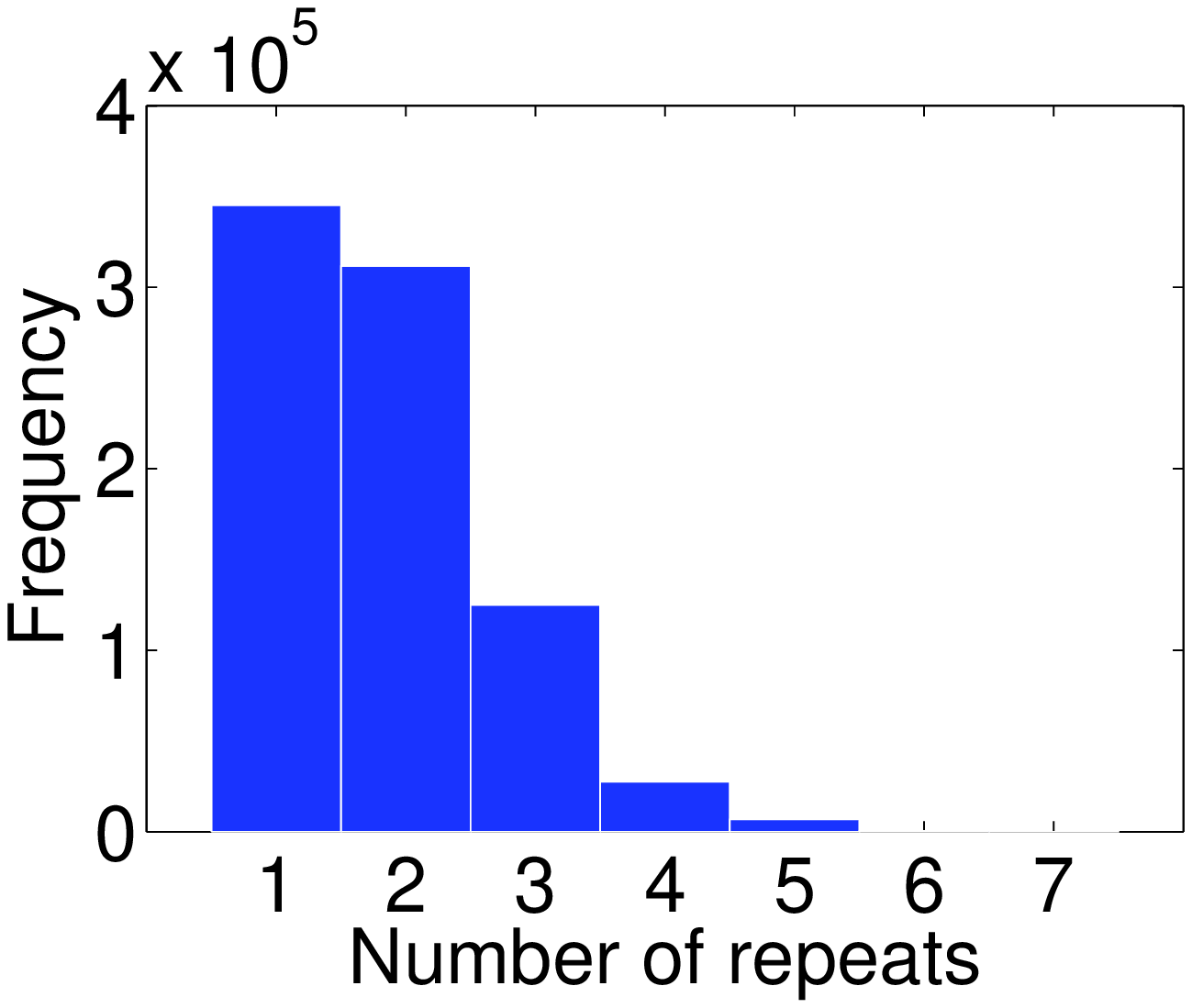}\\
    (a) {\scriptsize Amazon ($\beta_i = 0.1$)} & (b) {\scriptsize Amazon ($\beta_i = 0.5$)} & (c) {\scriptsize Amazon ($\beta_i = 0.9$)} & (d) {\scriptsize Epinions ($\beta_i = 0.1$)} & (e) {\scriptsize Epinions ($\beta_i = 0.5$)} & (f) {\scriptsize Epinions ($\beta_i = 0.9$) } \\
\end{tabular}
\caption{Histogram on the number of repeated recommendations made by $\GGreedy$ for each user-item pair.}
\label{fig:histClass}
\vspace{-3mm}
\end{figure*}

\subsection{Results and Analysis}\label{sec:result}

Experimental results are reported on two metrics: expected revenue achieved (quality of recommendations, {\em cf.} Def. \ref{def:rev}) and number of repeated recommendations made.
In addition to using ground-truth information on item classes from the data, we also test the scenario with class size $= 1$, i.e., each item is in its own category. 
For \RGreedy, we generate $N=20$ permutations of $[T]$.
All implementations are in Java and programs were run on a Linux server with Intel Xeon CPU X5570 (2.93GHz) and 256GB RAM.

\spara{Quality of Recommendations}
Figure~\ref{fig:uniformBeta} shows the expected total revenue achieved by various algorithms and baselines when $\beta_i$ is chosen uniformly at random from $[0,1]$.
As can be seen, \GGreedy \emph{consistently} yields the best revenue, leading the runner-up \RGreedy by non-trivial margins (about $10\%$ to $20\%$ gain).
\GlobalNo is always behind \GGreedy (about $10\%$ to $30\%$ loss), and so is \SGreedy compared to \RGreedy (about $1\%$ to $6\%$ behind).
\topkrev and \topkrating are always outperformed by all greedy algorithms (though it is less so for \topkrev).
\GGreedy is typically $30\%$ to $50\%$ better than \topkrev.
Note that the expected revenues are in the scale of tens of millions dollars, so even a small relative gain could translate to large revenue.

Figure~\ref{fig:BetaClass} and Figure~\ref{fig:BetaNoClass} show the comparisons of revenue with uniform $\beta_i$ values: $0.1$, $0.5$, and $0.9$.
The purpose is to examine how algorithms ``react'' to different strength of saturation effects.
As can be seen, the hierarchy of algorithms (ranked by revenue) is quite consistent with that in Figure \ref{fig:uniformBeta}, and importantly \GGreedy is always the top performer.
The gap between \GGreedy and the rest is larger with smaller $\beta_i$ (stronger saturation).
In Figure~\ref{fig:BetaNoClass} (class size $1$), though $\SGreedy$ is always behind $\RGreedy$, the difference becomes smaller as $\beta_i$ increases.
This intuitively suggests that $\RGreedy$ makes better decisions when it comes to repeated recommendations, as it is less sensitive to strong saturation.

In Figure~\ref{fig:submod}, we plot the growth of revenue as the greedy algorithms increment $S$ (Gaussian item quantities, $\beta_i$ uniform in $[0,1]$).
The lines for $\GGreedy$ clearly illustrate the phenomenon of diminishing marginal returns, empirically illustrating submodularity.
Interestingly enough, $\SGreedy$ and $\RGreedy$ have similar overall trends but also have ``segments'', corresponding to switches in time steps; submodularity can be observed within each ``segment''.

Finally, Figure~\ref{fig:histClass} presents histograms on the number of repeated recommendations made by $\GGreedy$ for each user-item pair (item class size $>$ 1; other cases are similar and hence omitted).
The Y-axis is normalized to show the percentage instead of absolute counts.
Note that in both datasets, when $\beta_i = 0.1$, it happens more often that an item is recommended only once or twice to a user, since the dynamic adoption probability drops rapidly with small $\beta_i$.
As $\beta_i$ becomes large, the histogram becomes less skewed as more repeats show up, especially on Amazon with $\beta_i = 0.9$. This experiment shows that \GGreedy takes advantage of (lack of) saturation for making repeat recommendations to boost revenue.

\begin{figure}[t!]
	\centering
	\includegraphics[width=0.4\textwidth]{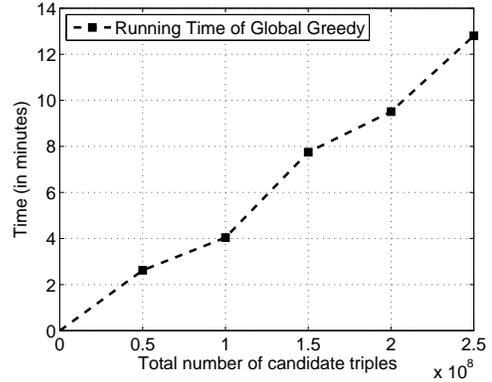}
\vspace{-3mm}
	\caption{Running time of \GGreedy on synthetic data}
	\label{fig:syn_time}
\end{figure}

\begin{table}
\centering
\begin{tabular}{| c | c | c | c | c | c |}
	\hline
		& {\bf GG}  & {\bf RLG} & {\bf SLG} & {\bf TopRE} & {\bf TopRA}   \\ \hline
	\textbf{Amazon} 	& 4.67  & 6.81 & 7.95 & 0.78 & 0.45	\\ \hline
	\textbf{Epinions} 	& 2.35  & 3.00 & 2.71 & 0.68 & 0.16  \\ \hline
\end{tabular}
\caption{ Running time (in mins) comparison}
\label{table:time}
\end{table}

\spara{Running Time and Scalability Tests}
Table \ref{table:time} reports the running time of various algorithms on Amazon and Epinions.
For lack of space, we only show the numbers for cases with uniform random $\beta_i$ and Gaussian item capacities (other cases are similar and hence omitted).
On both datasets, \GGreedy, \SGreedy and \RGreedy show scalability and efficiency and all finish under 10 minutes for Amazon and under 5 minutes for Epinions.
\eat{
\textcolor{red}{Particularly, \GGreedy (with two-level heaps and lazy-forward optimizations) is the most efficient, and is twice as fast as the version without two-level heaps.} 
\note[Laks]{Should we even talk about GG'? What about covering GG' and GG comparison briefly in the response and saying nothing about it here?} 

\note[Wei]{Re.: I agree on moving GG vs GG' to responses, and thus we can remove the red sentence， and that red column in Table 2}
}

In addition, we run \GGreedy on synthetic datasets that are much larger than Amazon and Epinions.
Figure \ref{fig:syn_time} illustrates the running time of \GGreedy on synthetic data with 100K, 200K, 300K, 400K, and 500K users with $T=5$ and each user having 100 items with non-zero adoption probability.
This means, for example, the dataset with 500K users has 250 million triples to select from.
The growth rate in Figure \ref{fig:syn_time} is almost linear and it takes about 13 minutes to finish on the largest one, which clearly demonstrates the scalability of \GGreedy. To put things in perspective, Netflix, the largest public ratings dataset, has only 100 million known ratings. 

In sum, our experiments on Amazon and Epinions data show that the proposed greedy algorithms are effective and efficient, producing far superior solutions than the baselines.
In particular, the most sophisticated \GGreedy consistently outperforms the rest and easily scales to (synthetic) data 2.5 times the size of the Netflix dataset. 

Next, in the following subsection, we report additional experimental results conducted in the settings where information about product prices is not completely available at the beginning of a time horizon.

\subsection{Experimental Results: Incomplete Product Prices}

So far, our experiments have focused on the setting where all product prices, i.e., $\price(i,t)$, for all $t$ in time horizon $[T]$, are available as input to algorithms like \GGreedy and \RGreedy when they are making recommendation decisions. 
However, this may not always be the case in practice, as prices are dynamic and exact values may not be available much in advance.
Notice that this does not affect \SGreedy as it only requires prices for the current time. 
As such, we are interested in gaugaing how our algorithms perform in the setting where product prices become available in batches and in time order.

\begin{figure*}[t!]
\begin{tabular}{cccc}
    \includegraphics[width=.24\textwidth]{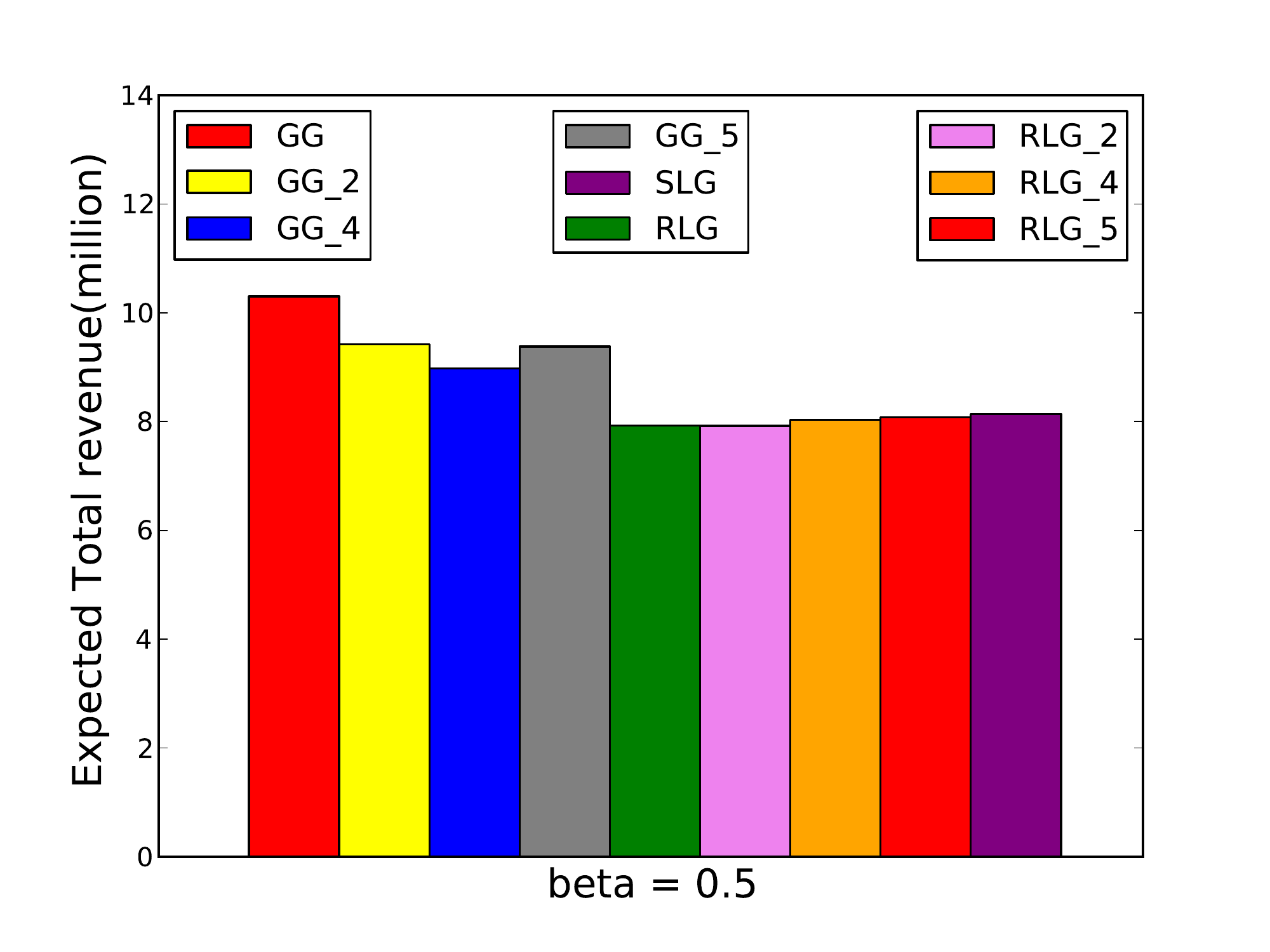}&
    \hspace{-2mm}\includegraphics[width=.24\textwidth]{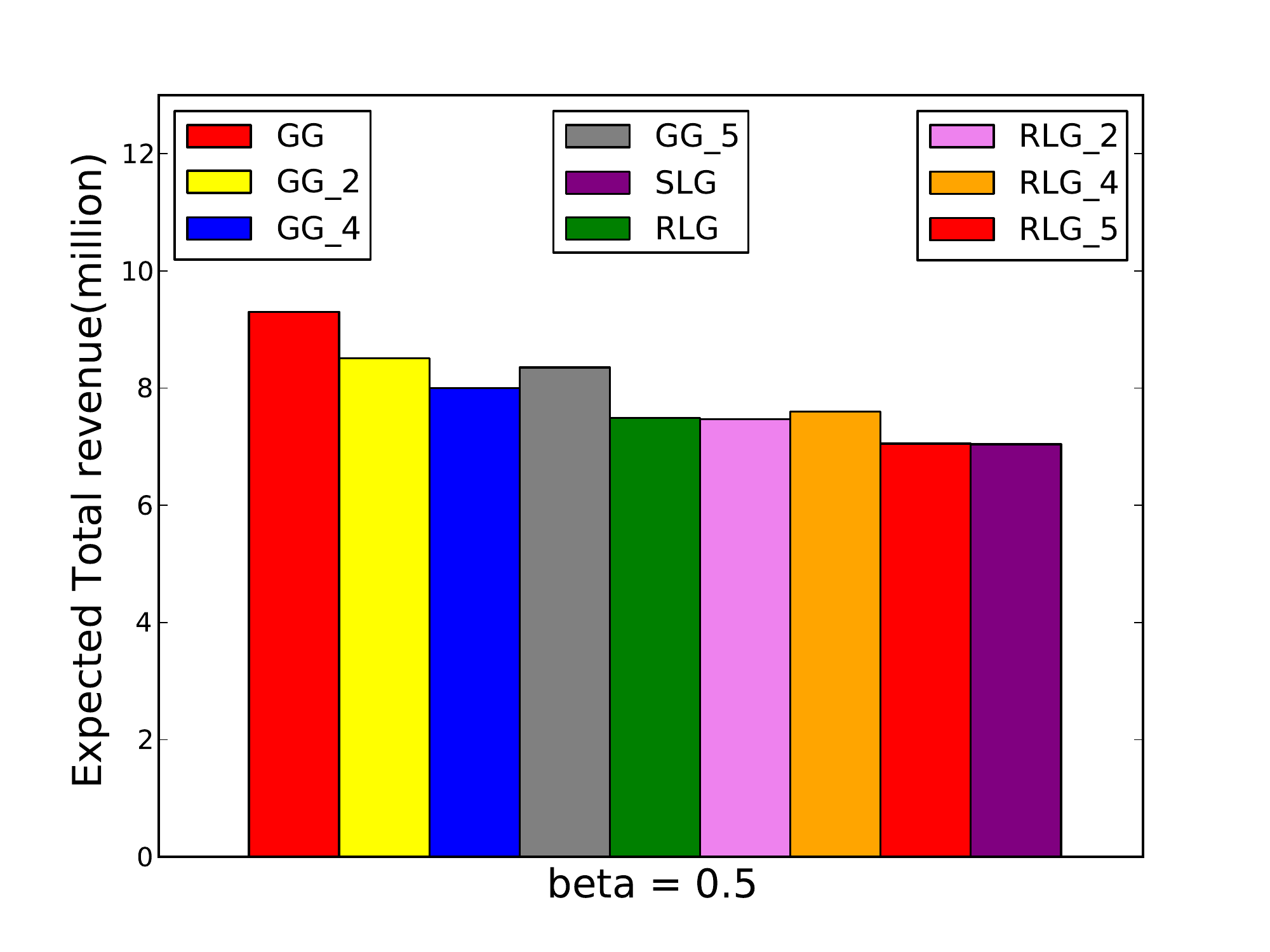}&
    \hspace{-2mm}\includegraphics[width=.24\textwidth]{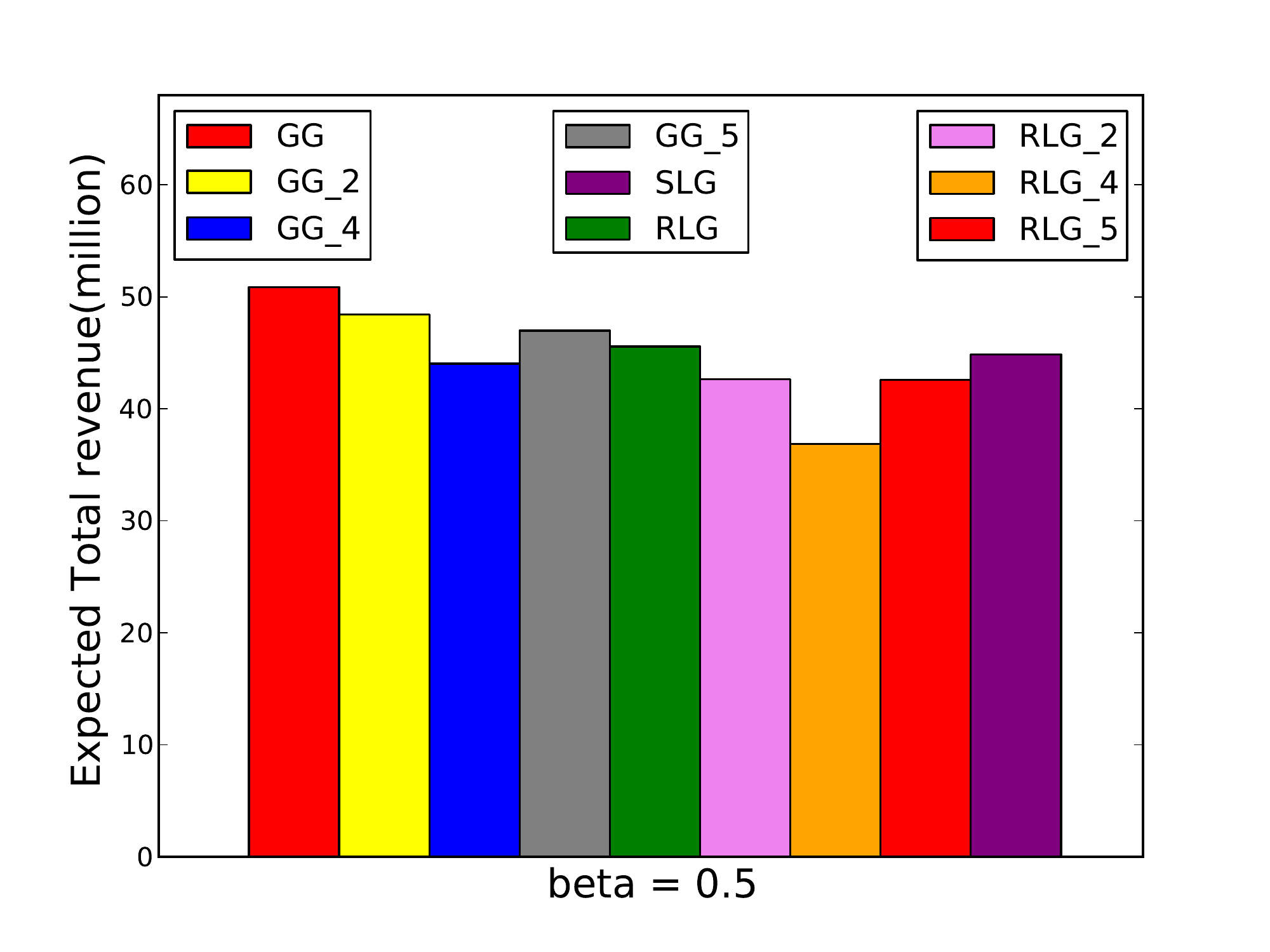}&
    \hspace{-2mm}\includegraphics[width=.24\textwidth]{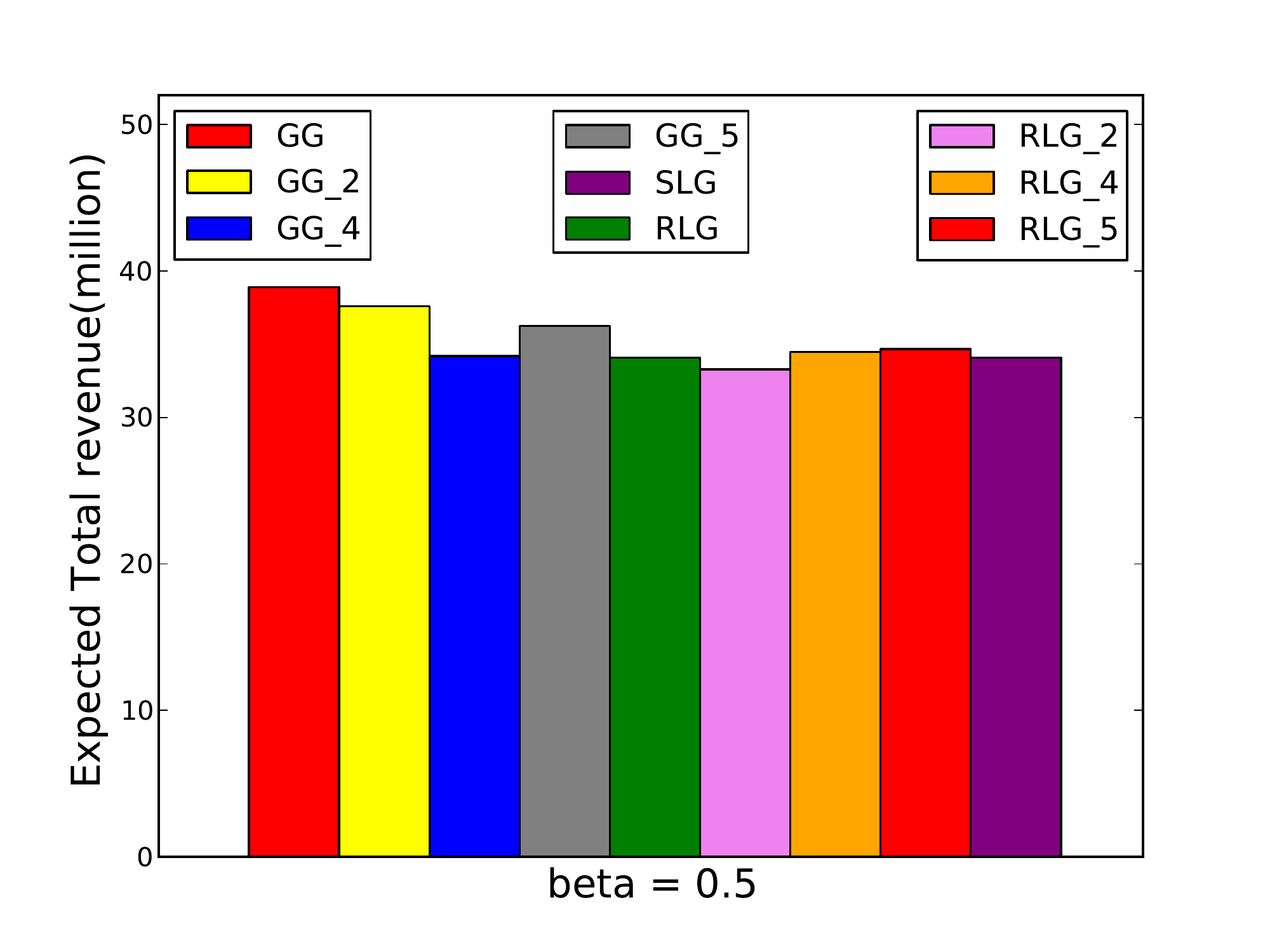}\\
	(a) Amazon (Gaussian)  & (b) Amazon (power-law)   & (c) Epinions (Gaussian) & (d) Epinions (power-law)  \\
\end{tabular}
\caption{Revenue comparison for complete prices with gradual availability. In the legend, $\mathrm{GG}_i$ and $\mathrm{RLG}_i$ means that the first sub-horizon is from time step $1$ to $i$, and the second sub-horizon is from time step $i+1$ to $T$.}
\label{fig:newExp1}
\end{figure*}

\eat{
\begin{figure*}[t!]
\begin{tabular}{cccc}
    \includegraphics[width=.24\textwidth]{plots_pdf/amazon_normal_incomplete}&
    \hspace{-2mm}\includegraphics[width=.24\textwidth]{plots_pdf/amazon_power_incomplete}&
    \hspace{-2mm}\includegraphics[width=.24\textwidth]{plots_pdf/epinions_normal_incomplete}&
    \hspace{-2mm}\includegraphics[width=.24\textwidth]{plots_pdf/epinions_power_incomplete}\\
    (a) Amazon (Gaussian)  & (b) Amazon (power-law)   & (c) Epinions (Gaussian) & (d) Epinions (power-law) \\
\end{tabular}
\caption{Revenue comparison for incomplete prices.}
\label{fig:newExp2}
\end{figure*}
}

More specifically, the time horizon $[T]$ is ``divided'' into sub-horizons $[T_1], [T_2], \ldots, [T_r]$, and prices become available sub-horizon after sub-horizon.
For example, suppose $T=7$, $[T_1] = \{1,2,3\}$, and $[T_2] = \{4,5,6,7\}$.
Initially, only prices for time steps $1$ through $3$ are known to the recommender system, and only at $t=4$, prices for time steps $4$ through $7$ will be known.
To adapt \GGreedy and \RGreedy, the algorithms will first come up with recommendations for $[T_1]$, and then given those, consider recommendations in $[T_2]$.
We expect both \GGreedy and \RGreedy to yield less revenue under this setting because they are no longer able to select triples in a holistic manner for the entire horizon $[T]$.
Also, note that \SGreedy is not affected at all since it already makes recommendations in chronological order.

In our experiments, we set $T=7$ and split it into two sub-horizons, with cut-off time at $2$, $4$, and $5$, respectively.
For example, if cut-off is $2$, then $[T_1] = \{1,2\}$ and $[T_2] = \{3,4,5,6,7\}$.
Figure~\ref{fig:newExp1} shows the revenue achieved by \GGreedy and \RGreedy w.r.t. the three different cut-off time steps, and compares with the revenue yielded by these algorithms when prices are available all at once (hereafter referred to as the ``original setting'').
Both Amazon and Epinions datasets are used.
As can be seen from Figure~\ref{fig:newExp1}, \GGreedy ($\mathrm{GG}_2$, $\mathrm{GG}_4$, and $\mathrm{GG}_5$) still outperforms \RGreedy ($\mathrm{RLG}_2$, $\mathrm{RLG}_4$, and $\mathrm{RLG}_5$) and \SGreedy.
In the case of \GGreedy, $\mathrm{GG}_2$, $\mathrm{GG}_4$, and $\mathrm{GG}_5$ are all worse than the original setting (which is not surprising), and the loss is the greatest when the cut-off time is $4$, the most even split on $[T]$. This may be because when the split is not even (e.g., $\mathrm{GG}_2$  and $\mathrm{GG}_5$), the algorithm gets to see a larger bulk of the price information together compared with the even split ($\mathrm{GG}_4$). 
Similar trends can be observed for \RGreedy.  

\eat{

\note[Laks]{The following para and Fig. 8 to be commented out for now. And since Section 6.3 will have only one sub-subsection, maybe we shd delete the para heading for now and make it just one subsection.}

\smallskip\noindent\textbf{Incomplete prices.} 
\LL{ 
A more challenging scenario is when future prices are {\sl not available at all}.
That is, whenever the RS wants to pull price information and run the algorithms, it only gets the prices for the current time step.
However, what if we still want to roll out a recommendation strategy for the entire time horizon?
One possibility is to treat as though prices wouldn't not change drastically in a very short amount of time, and use existing prices as ``proxy'' or estimations.
For example, suppose $T=7$ and the RS queries product prices at $t=1$ and $t=4$ only.
Under our assumptions here, prices for $t=2,3,5,6$ and $7$ will not be known.
As such, the RS will use prices at $t=1$ for $2$ and $3$, and prices at $t=4$ for $5$, $6$, and $7$, in order to come up with a complete strategy.
}

\LL{
Figure~\ref{fig:newExp2} shows the revenue achieved by \GGreedy and \RGreedy in this setting.
As can be seen, the drop is more significant compared to the above case, which is expected because lots of inaccurate prices are used here. 
} 

\note[Laks]{Let's find out from SS: does she have numbers for a 3-way split such as 1-2, 3-4, 5-7? also, what about the exact NUMBERS for the results in fig. 7 and 8? finally, can we have a situation where GG under INCOMPLETE PRICES setting does noticeably better than SLG at least under some conditions, and say for amazon? if we get a balanced picture showing sometimes GG beats SLG and other times the other way around, for incomplete prices setting, we can write a good story for this section.} 
}

\section{Extension to Random Prices}\label{sec:discuss}

So far, we have assumed that we have access to an exact pricing model, whereby  the exact price of items within a short time horizon is known. While there is evidence supporting this assumption and there are microeconomics studies on exact pricing model as pointed out in \textsection\ref{sec:intro}, the question arises what if the price prediction model is probabilistic in nature. More precisely, by treating prices $\price(i,t)$ as random variables, such a model only predicts them to within a distribution. Can we leverage the theory and techniques developed in this paper to deal with this case? We settle this question in this section. 
It is easy to see \revmax remains NP-hard, by restricting to the case where all prices are known exactly w.p. 1. The interesting question is how the  algorithms can be leveraged for the random price model. 
\eat{
A nice extension to our model is to consider a model with random prices instead of fixed ones.
That is, instead of assuming access to all prices $\price(i,t)$, we only know the distribution of $\price(i,t)$ (which is effectively a random variable).
Other ingredients are the same as described in \textsection\ref{sec:revmax}.
} 
In this case, the expected total revenue yielded by a strategy would be an expectation taken over the randomness of adoption events (as in the exact-pricing model) and over the randomness of prices. A major complication is that a closed form for the revenue function may not exist depending on the distribution from which prices are drawn. An obvious way of dealing with this is to treat the expected price (or most probable price) as a proxy for exact price and find a strategy that optimizes the expected revenue w.r.t. this assumption, using algorithms in \textsection\ref{sec:algo}. It is a heuristic and will clearly be suboptimal w.r.t. the true expected revenue. A more principled approach is to appeal to the Taylor approximation method used in convergence analysis~\cite{casella}.
It has the advantage of being distribution independent. 
\eat{ 
There are in general two ways to handle random prices.
The first method is to treat the means (or, e.g., modes) as a proxy for exact prices and find a recommendation strategy (using our proposed algorithms, \textsection\ref{sec:algo}), in the hope that it also yields good revenue for the random model.
Alternatively, one can try to directly optimize the ``new'' objective.
The first method is straightforward heuristic, while the second can be faced with greater difficulties as
\textsl{even a closed-form revenue function may not be available depending on the kind of distribution from which prices are drawn}.
In what follows, we develop a \emph{distribution-independent} method based on the popular {\em Taylor approximation method} in convergence analysis \cite{casella}.
} 
%

\eat{
Consider any valid strategy $S$, and a triple $z = (u, i, t) \in S$. Let $[z]$ denote the set of triples corresponding to items from the same class as $i$ recommended to $u$ by $S$ at any time $t' < t$ and let $\bp_{[z]}$ be the vector of random variables associated with the items appearing in $[z]$. E.g., if items $i_1, i_2, i_3$ are from the same class and $S = \{(u,i_1,t_1), (u,i_2,t_2), (u,i_3,t_3)\}$, with $t_1 < t_2 < t_3$, then $\bp_{[(u,i_3,t_3)]} = (\price(i_1,t_1), \price(i_2,t_2), \price(i_3,t_3))$. Then the contribution from the triple $z$ to the overall revenue of $S$, denoted $g(\bp_{[z]})$, can be expressed as follows.
We use the notation $\price(i,t)\in\bp_{[z]}$ to mean that $\price(i,t)$ appears as one of the components of the vector $\bp_{[z]}$.
Furthermore, we use $p_{[z]}^i$ to denote $\price(i,t)$ for simplicity and let $\bar{p}_z^i$ and  $\Var(p_{[z]}^i)$ denote its mean and variance. Also, for $p_{[z]}^i, p_{[z]}^j \in \bp_{[z]}$, $i\neq j$, let $\Cov(p_{[z]}^i, p_{[z]}^j)$ denote their covariance. Finally, by $\bar{\bp}_{[z]}$, we mean the vector of means of the price random variables in the vector $\bp_{[z]}$. In the above running example, $\bar{\bp}_{[z]} = (\bar{p}_{[z]}^{i_1}, \bar{p}_{[z]}^{i_2}, \bar{p}_{[z]}^{i_3})$. 
\eat{ 
The revenue $z$ yields under the random price model, denoted $\RandRev(\{z\})$,
can be written as a function value $g_z(\bp)$, where $\bp_{[z]}$ is the vector of all prices
involved in computing $\RandRev(\{z\})$ ({\em cf.} \eqref{eqn:rev}).
W.o.l.g., assume that $\bp_z$ contains $l$ prices.
For each price $p_z^i\in \bp_z$, let $\bar{p}_z^i$ and  $\Var[p_z^i]$ be its mean and variance respectively.
For each pair of prices $p_z^i, p_z^j$, $i\neq j$, let $\Cov[p_z^i, p_z^j]$ be their covariance.
} 
We can then expand $g(\cdot)$ at $\bp_{[z]} = \bar{\bp}_{[z]}$ using {\em Taylor's Theorem}:
\begin{align}\label{eqn:taylor}
	&g(\bp_{[z]}) = g(\bar{\bp}_{[z]}) + \sum_{(u,j,\tau) \in [z]} \frac{\partial g(\bar{\bp}_{[z]})}{\partial p_{[z]}^j}(p_{[z]}^j-\bar{p}_{[z]}^j) + \nonumber \\
		& \frac{1}{2} \sum_{(u,j,\tau)\in[z]} \sum_{(u,\ell,\tau')\in[z]} \frac{\partial^2 g(\bar{\bp}_{[z]})}{\partial p_{[z]}^j \partial p_{[z]}^{\ell}}(p_{[z]}^j - \bar{p}_{[z]}^j)(p_{[z]}^{\ell} - \bar{p}_{[z]}^{\ell}) + r(\bp_{[z]}),
\end{align}
where $r(\bp_{[z]})$ is the remainder (consisting of higher order terms) and by Taylor's theorem it satisfies
$\lim_{\bp_{[z]} \to \bar{\bp}_{[z]}} \frac{r(\bp_{[z]})}{(\bp_{[z]}-\bar{\bp}_{[z]})^2} = 0$.
Following standard practice \cite{casella}, this remainder is generally ignored since we are interested in an efficient approximation.

\textcolor{blue}{Disregarding the remainder and taking expectation over both sides of \eqref{eqn:taylor} gives the expected revenue:
\begin{align} 
	&\E[g(\bp_{[z]})] \approx \E[g(\bar{\bp}_{[z]})] + \sum_{(u,j,\tau)\in[z]} \frac{\partial g(\bar{\bp}_{[z]})}{\partial p_{[z]}^j} \E[(p_{[z]}^j - \bar{p}_{[z]}^j)] + \nonumber \\  
	&\frac{1}{2} \sum_{(u,j,\tau)\in[z]} \sum_{(u,\ell,\tau')\in[z]} \frac{\partial^2 g(\bar{\bp}_{[z]})}{\partial p_{[z]}^j \partial p_{[z]}^{\ell}} \E[(p_{[z]}^j - \bar{p}_{[z]}^j)(p_{[z]}^{\ell} - \bar{p}_{[z]}^{\ell})] \nonumber \\ 
	&= g(\bar{\bp}_{[z]}) + \frac{1}{2} \sum_{(u,j,\tau)\in[z]} \Var[p_{[z]}^j] + \sum_{(u,j,\tau), (u,\ell,\tau')\in[z], j\ne\ell} \Cov(p_{[z]}^j, p_{[z]}^{\ell}), \label{eqn:taylor2}
\end{align}
where we have applied the linearity of expectation to get $\E[(p_{[z]}^j - \bar{p}_{[z]}^j)] = 0$. }
Thus, for any strategy $S$, its expected total revenue, denoted $\mathit{Rand}\Rev(S)$, is thus
$\mathit{Rand}\Rev(S) = \sum_{z\in S} \E[g(\bp_{[z]})]$.
}



Consider any valid strategy $S$, and a triple $z = (u, i, t) \in S$.
Define $[z]_S := \{(u,j,t') \in S : \cl{C}(j) = \cl{C}(i) \wedge t' \le t\}$.  
That is, $[z]_S$ contains the triples that ``compete'' with $z$ under $S$.
Now consider the price random variables corresponding to all triples in $[z]_S$.
E.g., if items $i_1, i_2, i_3$ are from the same class, $S = \{(u,i_1,t_1), (u,i_2,t_2), (u,i_3,t_3)\}$, $t_1 < t_2 < t_3$, and $z=(u, i_3, t_3)$, then the revenue contribution of $z$ to $S$ will be dependent on the price vector $(\price(i_1,t_1), \price(i_2,t_2), \price(i_3,t_3))$.
For notational simplicity, we let $\bz$ be this price vector for $z$, and use $\bz_a$ to denote the $a$-th coordinate of $\bz$.
In the above example, if $a = 2$, then $\bz_a = \price(i_2, t_2)$.
The contribution from triple $z$ to the overall revenue of $S$ is clearly a function of the price vector $\bz$, and we denote it by $g(\bz)$.
For all $a = 1, 2,\ldots, |\bz|$, let $\bar{\bz}_a$ and $\Var(\bz_a)$ be the mean and variance of $\bz_a$ respectively.
Also, let $\Cov(\bz_a, \bz_b)$ be the covariance of two prices $\bz_a$ and $\bz_b$, where $a\neq b$.
Finally, by $\bar{\bz}$, we mean the vector of means of the price random variables in $\bz$.
In our running example, $\bar{\bz} = (\bar{\bz}_1, \bar{\bz}_2, \bar{\bz}_3)$, where $\bz_1 = \price(i_1, t_1)$, etc.
We then expand $g(\bz)$ at $\bar{\bz}$ using {\em Taylor's Theorem}:
\begin{align}\label{eqn:taylor}
g(\bz) &= g(\bar{\bz}) + \sum_{a=1}^{|\bz|} \frac{\partial g(\bar{\bz})}{\partial \bz_a}(\bz_a - \bar{\bz}_a) + \nonumber \\
		& \frac{1}{2} \sum_{a=1}^{|\bz|}  \sum_{b=1}^{|\bz|} \frac{\partial^2 g(\bar{\bz})}{\partial \bz_a \partial \bz_b}(\bz_a - \bar{\bz}_a)  (\bz_b- \bar{\bz}_b) + r(\bz),
\end{align}
where $r(\bz)$ is the remainder (consisting of higher order terms) and by Taylor's theorem it satisfies
$\lim_{\bz \to \bar{\bz}} \frac{r(\bz)}{(\bz-\bar{\bz})^2} = 0$.
Following standard practice \cite{casella}, this remainder is ignored since we are interested in an efficient approximation.

Disregarding the remainder and taking expectation over both sides of \eqref{eqn:taylor} gives the expected revenue contribution of triple $z$:
\begin{align} 
	&\E[g(\bz)] \approx \E[g(\bar{\bz})] + \sum_{a=1}^{|\bz|} \frac{\partial g(\bar{\bz})}{\partial \bz_a} \E[(\bz_a - \bar{\bz}_a)] + \nonumber \\  
	&\qquad \frac{1}{2} \sum_{a=1}^{|\bz|}  \sum_{b=1}^{|\bz|}   \frac{\partial^2 g(\bar{\bz})}{\partial \bz_a \partial \bz_b} \E[(\bz_a - \bar{\bz}_a)  (\bz_b- \bar{\bz}_b) ] \nonumber \\ 
	&= g(\bar{\bz}) + \frac{1}{2} \sum_{a=1}^{|\bz|}  \Var(\bz_a) +  \sum_{1 \le a < b \le |\bz|} \Cov(\bz_a, \bz_b), \label{eqn:taylor2}
\end{align}
where we have applied the linearity of expectation to get $\E[\bz_a - \bar{\bz}_a] = 0, \forall a$. 
Thus, for any strategy $S$, its expected total revenue, denoted $\mathit{Rand}\Rev(S)$, is 
$\mathit{Rand}\Rev(S) = \sum_{z\in S} \E[g(\bz)]$.


The first three summands in \eqref{eqn:taylor2} correspond to mean (first-order), variance, and covariance (second-order) respectively. They are  used to estimate the true revenue function and the reminder is ignored.
More precisely, the algorithms for the exact-price model can be used, with the calculation of revenue changed by adding the extra variance and covariance terms as shown in \eqref{eqn:taylor2}. In principle, we can incorporate as many terms from the Taylor expansion as dictated by the accuracy desired and simply use the algorithms in \textsection\ref{sec:algo} for finding strategies with large revenue. 
\eat{
While statisticians usually expand up to the first-order term only \cite{casella}, for more accurate estimations of true expected revenue, one can always expand $g_z(\bp_{[z]})$ to even higher order terms. 
We point out that a practical challenge of working with the random price model is to obtain sufficient appropriate data to learn the parameters (mean, variance, covariance) accurately. 
} 




\section{Conclusions and Future Work}\label{sec:concl}

In this work, we investigate the business-centric perspective of RS, and propose a
	dynamic revenue model by incorporating many crucial aspects such as price, 
	 competition, constraints, and saturation effects.
Under this framework, we study a novel problem \revmax, which asks to find a
	recommendation strategy that maximizes the expected
	total revenue over a given time horizon.
We prove that \revmax is NP-hard and develop an approximation algorithm
	for a slightly relaxed version ($R$-\revmax) by establishing an elegant connection to matroid theory.
We also design intelligent greedy algorithms to tackle the original
	\revmax and conduct extensive experiments on Amazon and Epinion data
	to show the effectiveness and efficiency of these algorithms.
Furthermore, using synthetic data, we show that the \GGreedy algorithm 
scales to datasets 2.5 times the size of the Netflix dataset.

For future work, on the theoretical side, 
it is worth asking if \revmax remains NP-hard when every item belongs to
	its own class.
For the random price model, it is interesting to investigate if the Taylor approximation method can yield a strategy with a guaranteed approximation to the optimal solution w.r.t. true expected revenue. 
In reality, it is possible that prices, saturation, and competition may interact. Modeling and learning the interactions present is an interesting challenge. 
On the practical side, 
an interesting challenge is to find suitable real datasets from which to learn the parameters for the
random price model and conduct empirical evaluations on it. Finally, here we have focused on revenue-maximizing recommendation problem, given an exogenous price model. Conversely, to find optimal pricing in order to maximize the expected revenue in the context of a given RS is an interesting problem which has clear connections to algorithmic game theory \cite{agtbook, klbbook}.


\begin{thebibliography}{10}
\smallskip

\bibitem{rssurvey05}
G.~Adomavicius and A.~Tuzhilin.
\newblock Toward the next generation of recommender systems: A survey of the
  state-of-the-art and possible extensions.
\newblock {\em IEEE Trans. Knowl. Data Eng.}, 17(6):734--749, 2005.

\bibitem{wsj}
J.~Angwin and D.~Mattioli.
\newblock Coming soon: Toilet paper priced like airline tickets.
\newblock {\em Wall Street Journal}, September 5, 2012, \url{http://on.wsj.com/1lECovl}

\bibitem{azaria13}
A.~Azaria et al. 
\newblock Movie recommender system for profit maximization.
\newblock In {\em RecSys}, pages 121--128, 2013.

\bibitem{camel}
Camelytics.
\newblock {\em Prices Always Change}, 2012 (accessed May 9, 2014).
\newblock \url{http://bit.ly/1jznFgL}.

\bibitem{casella}
G.~Casella and R.~L. Berger.
\newblock {\em Statistical Inference (2nd edition)}.
\newblock Duxbury, 2002.

\bibitem{chen08}
L.-S. Chen et~al.
\newblock Developing recommender systems with the consideration of product
  profitability for sellers.
\newblock {\em Inf. Sci.}, 178(4):1032--1048, 2008.

\bibitem{das09}
A.~Das, C.~Mathieu, and D.~Ricketts.
\newblock Maximizing profit using recommender systems.
\newblock {\em CoRR}, abs/0908.3633, 2009.

\bibitem{dassarma12}
A.~Das-Sarma et al. 
\newblock Understanding cyclic trends in social choices.
\newblock In {\em WSDM}, pages 593--602, 2012.

\bibitem{even75}
S.~Even, A.~Itai, and A.~Shamir.
\newblock On the complexity of timetable and multi-commodity flow problems.
\newblock In {\em FOCS}, pages 184--193, 1975.

\bibitem{gabow83}
H.~Gabow.
\newblock An efficient reduction technique for degree-constrained subgraph and
  bidirected network flow problems.
\newblock In {\em STOC}, pages 448--456, 1983.

\bibitem{gabow89}
H.~Gabow and R.~Tarjan.
\newblock Faster scaling algorithms for network problems.
\newblock {\em SIAM J. Comput.}, 18(5):1013--1036, 1989.

\bibitem{mymedialite}
Z.~Gantner et~al.
\newblock Mymedialite: a free recommender system library.
\newblock In {\em RecSys}, pages 305--308, 2011.

\bibitem{gareyJohnson}
M.~R. Garey and D.~S. Johnson.
\newblock {\em Computers and Intractability: A Guide to the Theory of
  NP-Completeness}.
\newblock W. H. Freeman, 1979.

\bibitem{jiang07}
A.~X. Jiang et al. 
\newblock Bidding agents for online auctions with hidden bids.
\newblock {\em Machine Learning}, 67(1-2):117--143, 2007.

\bibitem{kalish85}
S.~Kalish.
\newblock A new product adoption model with price, advertising, and
  uncertainty.
\newblock {\em Management Science}, 31(12):1569--1585, 1985.

\bibitem{kapoor13}
K.~Kapoor et al.  
\newblock Measuring spontaneous devaluations in user preferences.
\newblock In {\em KDD}, pages 1061--1069, 2013.

\bibitem{koren10}
Y.~Koren.
\newblock Collaborative filtering with temporal dynamics.
\newblock {\em Commun. ACM}, 53(4):89--97, 2010.

\bibitem{mfsurvey}
Y.~Koren et al. 
\newblock Matrix factorization techniques for recommender systems.
\newblock {\em IEEE Computer}, 42(8):30--37, 2009.

\bibitem{lee10}
J.~Lee, V.~S. Mirrokni, V.~Nagarajan, and M.~Sviridenko.
\newblock Maximizing nonmonotone submodular functions under matroid or knapsack
  constraints.
\newblock {\em SIAM J. Discrete Math.}, 23(4):2053--2078, 2010.

\bibitem{leskovec07}
J.~Leskovec, L.~A. Adamic, and B.~A. Huberman.
\newblock The dynamics of viral marketing.
\newblock {\em TWEB}, 1(1), 2007.


\bibitem{manshadi13}
F.~Manshadi et~al.
\newblock A distributed algorithm for large-scale generalized matching.
\newblock {\em PVLDB}, 6(9):613--624, 2013.

\bibitem{minoux78}
M.~Minoux.
\newblock Accelerated greedy algorithms for maximizing submodular set
  functions.
\newblock In {\em IFIP Conf. on Optimization Techniques}, page 234--243, 1978.

\bibitem{gianmarco11}
G.~D.~F. Morales, A.~Gionis, and M.~Sozio.
\newblock Social content matching in mapreduce.
\newblock {\em PVLDB}, 4(7):460--469, 2011.

\bibitem{agtbook}
N.~Nisan et al. 
\newblock {\em Algorithmic Game Theory}.
\newblock Cambridge University Press, New York, NY, USA, 2007.

\bibitem{Porteus90}
E.~L. Porteus.
\newblock Stochastic inventory theory.
\newblock 
In {\em Stochastic Models}, volume~2, 
Elsevier, 1990.

\bibitem{RS:hb10}
F.~Ricci, L.~Rokach, B.~Shapira, and P.~B. Kantor, editors.
\newblock {\em Recommender Systems Handbook}.
\newblock Springer, 2011.

\bibitem{richardson02}
M.~Richardson and P.~Domingos.
\newblock Mining knowledge-sharing sites for viral marketing.
\newblock In {\em KDD}, pages 61--70, 2002.

\bibitem{klbbook}
Y.~Shoham and K.~Leyton-Brown.
\newblock {\em Multiagent Systems - Algorithmic, Game-Theoretic, and Logical
  Foundations}.
\newblock Cambridge University Press, 2009.

\bibitem{silverman86}
B.~W. Silverman.
\newblock {\em Density estimation for statistics and data analysis}, volume~26.
\newblock CRC press, 1986.

\bibitem{snyder08}
C.~Snyder and W.~Nicholson.
\newblock {\em Microeconomic Theory, Basic Principles and Extensions (10th
  ed)}.
\newblock South-Western Cengage Learning, 2008.

\bibitem{wang13}
J.~Wang and Y.~Zhang.
\newblock Opportunity model for e-commerce recommendation: right product; right
  time.
\newblock In {\em SIGIR}, pages 303--312, 2013.

\bibitem{zhao12}
G.~Zhao, M.-L. Lee, W.~Hsu, and W.~Chen.
\newblock Increasing temporal diversity with purchase intervals.
\newblock In {\em SIGIR}, pages 165--174, 2012.

\end{thebibliography}


\end{document}